\newif\ifpublic
\newcommand{\lref}[2][]{\hyperref[#2]{#1~\ref*{#2}}}
\renewcommand{\eqref}[1]{\hyperref[#1]{(\ref*{#1})}}
\numberwithin{equation}{section}
\theoremstyle{plain}
\newtheorem{lem}{Lemma}[section]
\newtheorem{theorem}[lem]{Theorem}
\newtheorem{lemma}[lem]{Lemma}
\newtheorem{fact}[lem]{Fact}
\newtheorem{claim}[lem]{Claim}
\newtheorem{definition}[lem]{Definition}
\theoremstyle{definition}
\newtheorem{remark}[lem]{Remark}
\newcommand{\ph}[1]{\todo[color=red!100!green!33, size=\footnotesize]{ph: #1}}
\DeclareMathOperator{\supp}{support}
\DeclareMathOperator{\poly}{poly}
\DeclareMathOperator{\rank}{rank}
\DeclareMathOperator{\kernal}{ker}
\DeclareMathOperator{\eval}{eval}
\DeclareMathOperator*{\E}{\mathbb{E}}
\newcommand{\etal}{{\em et.~al.}}
\newcommand{\F}{\mathbb{F}}
\newcommand{\C}{\mathbb{C}}
\newcommand{\N}{\mathbb{N}}
\newcommand{\Pe}{\mathsf{P}}
\newcommand{\NP}{\mathsf{NP}}
\newcommand{\DTIME}{\mathsf{DTIME}}
\newcommand{\TSAT}{\mathsf{3SAT}}
\newcommand{\NAE}{\mathsf{NAE}}
\newcommand{\OPT}{\text{OPT}}
\newcommand{\LC}{\textsf{LC}}
\renewcommand{\epsilon}{\varepsilon}
\renewcommand{\phi}{\varphi}
\newcommand{\Far}{\text{\sc far}}
\newcommand{\Near}{\text{\sc near}}
\newcommand{\prob}[2]{\Pr_{#1}\left[#2\right]}
\newcommand{\avg}[2]{\mathop{\mathbb{E}}_{#1}\left[#2\right]}
\newcommand{\field}{\mathbb{F}}
\newcommand{\naturals}{\mathbb{N}}
\newcommand{\mc}[1]{\mathcal{#1}}
\newcommand {\email} [1] {Email: \texttt{#1}.}
\begin{document}
\title{Super-polylogarithmic hypergraph coloring hardness via low-degree long codes}
\author{Venkatesan Guruswami\thanks{Computer Science Department, Carnegie
    Mellon University, USA. Research supported in part by a Packard
    Fellowship, US-Israel BSF grant number 2008293, and the US
    National Science Foundation Grant No. CCF-1115525. \email{guruswami@cmu.edu}}
\and
Prahladh Harsha\thanks{Tata Institute of Fundamental Research,
  India. Part of the work was done while the author was visiting the
  Simons Institute for Theory of Computing. \email{prahladh@tifr.res.in}}
\and
Johan H\aa stad\thanks{KTH Royal Institute of Technology, Sweden. Part of the work was done while the author was visiting the
  Simons Institute for Theory of Computing. Partly supported by ERC
  grant 226203. \email{johanh@kth.se}}
\and Srikanth Srinivasan\thanks{Department of Mathematics, IIT Bombay,
  India. \email{srikanth@math.iitb.ac.in}}
\and Girish Varma\thanks{Tata Institute of Fundamental Research,
  India. Supported by Google India under the Google India PhD
  Fellowship Award. \email{girishrv@tifr.res.in}}
}
\begin{titlepage}

\maketitle
\thispagestyle{empty}

\vspace{-0.15in}
\begin{abstract}
We prove improved inapproximability results for hypergraph coloring using the
low-degree polynomial code (aka, the ``short code'' of
Barak~\etal~[FOCS 2012]) and the techniques proposed by
Dinur and Guruswami~[FOCS 2013] to incorporate this code for inapproximability results.

In particular, we prove quasi-NP-hardness of the following problems on $n$-vertex hypergraphs:
\begin{itemize}
\item Coloring a 2-colorable 8-uniform hypergraph with
  $2^{2^{\Omega(\sqrt{\log \log n})}}$ colors.\label{item28}
\item Coloring a 4-colorable 4-uniform hypergraph with
  $2^{2^{\Omega(\sqrt{\log \log n})}}$ colors.\label{item44}
\item Coloring a 3-colorable 3-uniform hypergraph with
  $(\log n)^{\Omega(1/\log\log\log n)}$ colors.\label{item33}
\end{itemize}
In each of these cases, the hardness results obtained are (at least)
exponentially stronger than what was previously known for the
respective cases. In fact, prior to this result, $(\log n)^{O(1)}$
colors was the strongest quantitative bound on the number of colors
ruled out by inapproximability results for $O(1)$-colorable
hypergraphs.

The fundamental bottleneck in obtaining coloring inapproximability
results using the low-degree long code was a multipartite structural
restriction in the PCP construction of Dinur-Guruswami. We are able
to get around this restriction by simulating the multipartite
structure implicitly by querying just one partition (albeit requiring
8 queries), which yields our result for 2-colorable 8-uniform
hypergraphs.  The result for 4-colorable 4-uniform hypergraphs is
obtained via a ``query doubling" method exploiting additional
properties of the $8$-query test. For 3-colorable 3-uniform
hypergraphs, we exploit the ternary domain to design a test with an
{\em additive} (as opposed to multiplicative) noise function, and
analyze its efficacy in killing high weight Fourier coefficients via
the pseudorandom properties of an associated quadratic form. The
latter step involves extending the key algebraic ingredient of
Dinur-Guruswami concerning testing binary Reed-Muller codes to the
ternary alphabet.
\end{abstract}

\end{titlepage}

\section{Introduction}

The last two decades have seen tremendous progress in understanding
the hardness of approximating constraint satisfaction
problems. Despite this progress, the status of approximate coloring of
constant colorable (hyper)graphs is not resolved and in fact, there is
an exponential (if not doubly exponential) gap between the best known
approximation algorithms and inapproximability results. The current
best known approximation algorithms require at least $n^{\Omega(1)}$
colors to color a constant colorable (hyper)graph on $n$ vertices
while the best inapproximability results only rule out at best $(\log
n)^{O(1)}$ (and in fact, in most cases, only $o(\log n)$) colors.

Given this disparity between the positive and negative results, it is
natural to ask why current inapproximability techniques get stuck at
the $\poly\log n$ color barrier. The primary bottleneck in going past
polylogarithmic colors is the use of the {\em long code}, a
quintessential ingredient in almost all tight inapproximability results,
since it was first introduced by Bellare, Goldreich and
Sudan~\cite{BellareGS1998}. The long code, as the name suggests, is
the most redundant encoding, wherein a $n$-bit Boolean string $x$ is
encoded by a $2^{2^n}$-bit string which consists of the evaluation of
all Boolean functions on $n$ bits at the point $x$. It is this doubly
exponential blowup of the long code which prevents the coloring
inapproximability to go past the $\poly \log n$ barrier. Recently,
Barak~\etal~\cite{BarakGHMRS2012}, while trying to understanding the
tightness of the Arora-Barak-Steurer algorithm for unique games,
introduced the {\em short code}, also called the {\em low-degree long
  code}~\cite{DinurG2013}. The low-degree long code is a puncturing of
the long code in the sense, that it contains only the evaluations of
low-degree functions (opposed to all
functions). Barak~\etal~\cite{BarakGHMRS2012} introduced the
low-degree long code to prove exponentially stronger integrality gaps
for Unique Games, and construct small set expanders whose Laplacians
have many small eigenvalues,

Being a derandomization of the long code, one might hope to use the low-degree long code as a more size-efficient surrogate for the long code in inapproximability results. In fact, Barak~\etal~\cite{BarakGHMRS2012} used
it obtain a more efficient version of the KKMO alphabet
reduction~\cite{KhotKMO2007} for Unique Games. However, using the low-degree long code towards improved reductions from Label Cover posed some challenges related to folding, and incorporating noise without giving up perfect completeness (which is crucial for results on coloring). Recently, Dinur and Guruswami~\cite{DinurG2013}
introduced a very elegant set of techniques to adapt the long code
based inapproximability results to low-degree long codes. Using these
techniques, they proved (1) improved inapproximability results for
gap-$(1,\frac{15}{16}+\epsilon)$-4SAT for $\epsilon =
\exp(-2^{\Omega(\sqrt{\log \log N})})$ (long code based reductions
show for $\epsilon = 1/\poly\log N$) and (2) hardness for a variant of
approximate hypergraph coloring, with a gap of 2 and
$\exp(2^{\Omega(\sqrt{\log\log N})})$ number of colors (where $N$ is
the number of vertices). It is to be noted that the latter is the
first result to go beyond the logarithmic barrier for a coloring-type
problem. However, the Dinur-Guruswami~\cite{DinurG2013} results do not
extend to standard (hyper)graph coloring hardness due to a
multipartite structural bottleneck in the PCP construction, which we
elaborate below.

As mentioned earlier, the two main contributions of Dinur-Guruswami~\cite{DinurG2013}
are (1) folding mechanism over the low-degree long code and (2) noise
in the low-degree polynomials. The results of
Bhattacharyya~\etal~\cite{BhattacharyyaKSSZ2010} and
Barak~\etal~\cite{BarakGHMRS2012} suggest that the product of $d$
linearly independent affine functions suffices to work as noise for
the low-degree long code setting (with degree = $d$) in the sense that
it attenuates the contribution of large weight Fourier
coefficients. However, this works only for PCP tests with imperfect
completeness. Since approximate coloring results require perfect
completeness, Dinur and Guruswami~\cite{DinurG2013} inspired by the
above result, develop a noise function which is the product of two
random low-degree polynomials such that the sum of the degrees is at
most $d$. This necessitates restricting certain functions in the PCP
test to be of smaller degree which in turn requires the PCP tests to
query two types of tables -- one a low-degree long code of degree $d$
and another a low-degree long code of smaller degree. Though the
latter table is a part of the former, a separate table is needed since
otherwise the queries will be biased to the small degree portion of
the low-degree long code. This multipartite structure is what
precludes them from extending their result for standard coloring
results. (Clearly, if the query of the PCP tests straddles two 
tables, then the associated hypergraph is trivially 2-colorable.)

\subsection{Hypergraph coloring results}

In this work, we show how this multipartite structural restriction can
be overcome, thus yielding (standard) coloring inapproximability
results. The first of our results extends the result of
Dinur-Guruswami~\cite{DinurG2013}: variant of 6-uniform hypergraph
coloring result to a standard hypergraph coloring result, albeit of
larger uniformity, namely 8.

\begin{theorem}[2-colorable 8-uniform hypergraphs]
\label{thm:2c8u}
Assuming $\NP \not\subseteq \DTIME(n^{2^{O(\sqrt{\log \log n})}})$, there is no polynomial time algorithm which, when given as input an $8$-uniform hypergraph $H$ on $N$ vertices can distinguish between the following:
\begin{itemize}
\item $H$ is $2$ colorable,
\item $H$ has no independent set of size $N/2^{2^{O(\sqrt{\log \log N})}}$.
\end{itemize}
\end{theorem}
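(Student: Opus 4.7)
The plan is to follow the standard PCP-based hardness framework for hypergraph coloring, starting from a Label Cover instance obtained by appropriate parallel repetition of a gap-$3$SAT instance. For each vertex of the Label Cover I attach a copy of the low-degree long code of degree $d$ over $\F_2^m$, where $d$ will be chosen so that $2^{2^d}$ is quasi-polynomial in the final vertex count $N$. The hypergraph $H$ has as its vertices the codeword positions (pairs consisting of a Label Cover vertex and a point of $\F_2^m$) and as its hyperedges the $8$-tuples of positions jointly queried by the PCP test described below. Soundness of the test will be equivalent to the absence of a large independent set in $H$.

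The key design challenge is the PCP test itself. Following Dinur-Guruswami, the natural noise function is a product $Q = Q_1 \cdot Q_2$ of two independent random low-degree polynomials with $\deg(Q_1)+\deg(Q_2)\le d$; this attenuates the contribution of high-weight Fourier characters while preserving perfect completeness. In the Dinur-Guruswami formulation, enforcing the degree split forced two tables of different degrees and a multipartite PCP, making the associated hypergraph trivially $2$-colorable. I would \emph{absorb} the auxiliary small-degree table into the big one: instead of querying $g(\cdot)$ at a single point of the small-degree table, I query the big-degree table $f$ at \emph{two} points $f(z)$ and $f(z+Q_1'\cdot Q_2')$ for a fresh independent product noise $Q_1'\cdot Q_2'$. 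This doubles the $4$-query Dinur-Guruswami test into an $8$-query test, all of whose queries go to a single table, and lets me ask a folded NAE-type predicate on the eight values. The hyperedges of $H$ are then precisely the $8$-tuples so produced.

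For completeness, when the Label Cover is satisfiable I take $f$ on each vertex to be the degree-$d$ long code of the correct label; the folding ensures the $8$ values are not all equal, so no hyperedge is monochromatic and $H$ is $2$-colorable. For soundness, I argue the contrapositive: an independent set $I\subseteq V(H)$ of relative size $\epsilon$ gives, for each Label Cover vertex $v$, a function $f_v\colon \F_2^m\to\{0,1\}$ (the indicator of $I$ restricted to $v$'s block) whose expectation is $\epsilon$, and the independent-set condition says that the expected $8$-fold product of such $f_v$'s under the test distribution is non-negligible. Expanding each $f_v$ in the Fourier basis and taking expectations over the noise, each nonzero character is multiplied by a factor supplied by $Q_1Q_2$ (and $Q_1'Q_2'$). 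By the low-degree-long-code analog of the Bhattacharyya-Kopparty-Schoenebeck-Sudan-Zuckerman attenuation bound (as used by Dinur-Guruswami), every character of Hamming weight above a threshold $h = 2^{2^{O(\sqrt{\log\log N})}}$ is suppressed, and the remaining low-weight characters can be decoded into a randomized labelling of Label Cover that succeeds with probability comparable to $\epsilon / h^{O(1)}$, contradicting the Label Cover soundness once $\epsilon \ge 2^{-2^{\Omega(\sqrt{\log\log N})}}$.

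The main obstacle, and the step that deserves the most care, is verifying that the $8$-query single-table test still provides genuine attenuation for every character. Concretely, for every nonempty subset $S\subseteq\{1,\dots,8\}$ of query positions one must check that the formal sum $\sum_{i\in S}x_i$ (as a random variable under the test distribution) is either identically zero, in which case the associated Fourier contribution is neutralized by folding, or else contains a factor of the form (fresh random polynomial)$\cdot$(another polynomial of complementary degree summing to at most $d$), so that the corresponding Fourier coefficient is attenuated exponentially in the weight. Performing this case analysis across all $2^8-1$ subsets, and matching the structure to the Dinur-Guruswami killing lemma, is the technical heart of the argument; an inattentive grouping of the $8$ query points would leave some subset-sum unattenuated and break soundness.
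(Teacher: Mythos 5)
Your proposal correctly identifies the high-level framework: start from Label Cover, encode with the degree-$d$ low-degree long code, design an $8$-query NAE test over a single type of table, and analyze soundness via Fourier expansion with the Dinur--Guruswami Reed--Muller testing bound providing attenuation of high-weight characters. But the heart of the construction --- the specific $8$-query gadget that achieves \emph{perfect} completeness using only degree-$d$ tables --- is missing, and the substitute you propose does not work.

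You propose to replace the Dinur--Guruswami small-table query by a pair of big-table queries $f(z)$ and $f(z + Q_1' Q_2')$ for a \emph{fresh independent} product noise $Q_1' Q_2'$, and to then pose an NAE predicate on the eight resulting values. The fatal issue is completeness. With four independent product noises, with constant probability \emph{all} four products vanish at the encoded point $a_u$, and then each of the four query pairs returns two equal values; whether the eight values are all equal then depends only on whether the four independent shift points $e_1,\dots,e_4$ evaluate to the same bit at $a_u$, which fails with constant probability. So the honest coloring leaves a constant fraction of hyperedges monochromatic, and $H$ is not $2$-colorable. Folding cannot rescue this --- folding over the ideal generated by the clause polynomials constrains the codeword but has nothing to do with forcing the NAE predicate. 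What the paper actually does is make the noises in the four pairs \emph{correlated}: it uses $g_1 h_1$ and $(1+g_1) h_2$ for the two pairs in one table, and $g_2 h_3$, $(1+g_2) h_4$ for the other, so that for any $g_1$ exactly one of $g_1(a_u)$, $1+g_1(a_u)$ is zero and exactly one of the two noises vanishes at $a_u$; a $+1$ offset toggled by the shared projected bit $f\circ\pi$ (with $f$ a random polynomial on the $V$-side) then guarantees one of the four pairs flips, no matter what. It also queries two $U$-side tables $A_u'$ and $A_w'$ for two neighbors $u,w$ of a common $v\in V$, which is how the Label Cover constraint enters through $f\circ\pi$ versus $f\circ\pi'$; your description omits this and leaves it unclear how the test detects inconsistent labelings. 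This correlated gadget is the whole point of the construction, and it cannot be swapped for fresh noise.

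Two smaller but substantive points. First, the independent-set condition says the expected $8$-fold product under the test distribution is exactly \emph{zero}, not ``non-negligible''; the soundness argument is that the trivial-character term (lower bounded by $\delta^8$ via Cauchy--Schwarz over the choice of $v$ and the two independent neighbors $u,w$) must be cancelled by the low- and high-weight contributions, which gives the bound on $\delta$. Second, the case analysis over all $2^8-1$ query subsets that you flag as the technical heart is not how the Fourier analysis is organized: since each pair $(e_i, e_i+\eta_i)$ has $e_i$ uniform and independent, averaging over $e_i$ forces the two characters of that pair to coincide, collapsing the sum to only four characters $\alpha_1,\alpha_2,\beta_1,\beta_2$ (one per pair), and the Dinur--Guruswami attenuation bound is applied to these via the $g_i h_j$-structured noise.
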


This result is obtained using the framework of Dinur-Guruswami~\cite{DinurG2013} by
showing that the two additional queries can be used to simulate
queries into the smaller table via queries into the larger table.

We note that prior to this result, $(\log N)^{\Omega(1)}$ colors was the
strongest quantitative bound on hardness for hypergraph coloring: Khot
obtained such a result for coloring 7-colorable 4-uniform
hypergraphs~\cite{Khot2002c} while Dinur and
Guruswami~\cite{DinurG2013} obtained a similar (but incomparable)
result for 2-colorable 6-uniform hypergraphs both using the long code.

We observe that the 8-query PCP test used in the above
inapproximability result has a stronger completeness guarantee than
required to prove the above result: the 8 queries of the Not-All-Equal
($\NAE$) PCP test, say
$e_1,e_2,e'_1,e'_2,e_3,e_4,e'_3,e'_4$ in the completeness case satisfy
$$\NAE(A(e_1), A(e_2)) \vee \NAE(A(e'_1), A(e'_2)) \vee \NAE(A(e_3), A(e_4))
\vee \NAE(A(e'_3), A(e'_4))$$ which is stronger than the required
$$\NAE(A(e_1), A(e_2),A(e'_1), A(e'_2),A(e_3), A(e_4),A(e'_3),
A(e'_4)).$$ Furthermore, for each $i$, the queries $e_i$ and $e'_i$
appear in the same table. This lets us perform the following
``doubling of queries'': each location is now indexed by a pair of
queries, e.g., $(e_1,e'_1)$ and is expected to return 2 bits which are
the answers to the two queries respectively. The stronger completeness
property yields a 4-query $\NAE$ PCP test over an alphabet of size 4
with the completeness property,
$$\NAE(B(e_1,e'_1),B(e_2,e'_2)) \vee \NAE(B(e_3,e'_3),B(e_4,e'_4)),$$
which suffices for the completeness for proving inapproximability
results for 4-colorable 4-uniform hypergraphs.  We show that the
soundness analysis also carries over to yield the following hardness
for 4-colorable 4-uniform hypergraphs.

\begin{theorem}[4-colorable 4-uniform hypergraphs]
\label{thm:4u4c}
Assuming $\NP \not\subseteq \DTIME(n^{2^{O(\sqrt{\log \log n})}})$, there is no polynomial time algorithm which, when given as input a $4$-uniform hypergraph $H$ on $N$ vertices can distinguish between the following:
\begin{itemize}
\item $H$ is $4$ colorable,
\item $H$ has no independent set of size  $N/2^{2^{O(\sqrt{\log \log N})}}$.
\end{itemize}
\end{theorem}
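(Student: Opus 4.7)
The plan is to carry out the ``query doubling'' idea sketched after Theorem~\ref{thm:2c8u}. I would begin from the 8-query $\NAE$ PCP verifier underlying the proof of Theorem~\ref{thm:2c8u}: this verifier issues queries $e_1, e_2, e'_1, e'_2, e_3, e_4, e'_3, e'_4$ to a single low-degree long code table $A$, has perfect completeness via the stronger four-way disjunction
\[
\NAE(A(e_1), A(e_2)) \vee \NAE(A(e'_1), A(e'_2)) \vee \NAE(A(e_3), A(e_4)) \vee \NAE(A(e'_3), A(e'_4)),
\]
and has the structural property that each twin pair $(e_i, e'_i)$ lies in the same sub-table. I would then introduce a new proof format $B$ whose entries are indexed by pairs of such locations and take values in $[4] \cong \{0,1\}^2$, with intended honest encoding $B(e, e') = (A(e), A(e'))$. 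The new 4-query verifier reads $B(e_1, e'_1), B(e_2, e'_2), B(e_3, e'_3), B(e_4, e'_4)$ and accepts iff
\[
\NAE(B(e_1, e'_1), B(e_2, e'_2)) \;\vee\; \NAE(B(e_3, e'_3), B(e_4, e'_4)).
\]
Completeness is immediate: the honest $B$ inherits the 4-way disjunction from the 8-query test via the elementary equivalence $(A(e_i), A(e'_i)) \neq (A(e_j), A(e'_j)) \iff A(e_i) \neq A(e_j) \vee A(e'_i) \neq A(e'_j)$, so the 4-query verifier accepts with probability $1$ on honestly encoded $B$.

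To obtain Theorem~\ref{thm:4u4c}, I would apply the standard reduction from an $\NAE$ PCP to hypergraph coloring: the vertices of the hypergraph are the entries of $B$ (across all Label Cover variables and all pairs of low-degree long code coordinates) and the hyperedges are the 4-tuples queried by the verifier. Since the verifier's accept condition is strictly stronger than requiring the four $[4]$-values of $B$ to not all be equal, in the YES case the honest $B$ gives a proper 4-coloring of the resulting 4-uniform hypergraph. On the soundness side, I would use the standard independent-set argument: an independent set $I$ of density $\alpha$ induces the $[4]$-valued assignment $\mathbb{1}_I$, and since $I$ is independent no queried hyperedge lies entirely inside $I$. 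Combined with the marginal uniformity and approximate pairwise independence of the twin-pair query distribution inherited from the 8-query test, a standard calculation shows that the acceptance probability of the 4-query verifier on $\mathbb{1}_I$ is at least $\Omega(\alpha)$ (once the trivial case $\alpha$ close to $1$ is ruled out using independence of $I$). The PCP soundness $\sigma$ of the 4-query test then yields $\alpha \le O(\sigma) = 2^{-2^{O(\sqrt{\log \log N})}}$, as desired.

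The main technical obstacle will be proving the soundness bound $\sigma = 2^{-2^{\Omega(\sqrt{\log \log N})}}$ for the 4-query verifier on an arbitrary $B : V \times V \to \{0,1\}^2$, where $V$ is the domain of the underlying low-degree long code. My plan is to decompose $B = (B_1, B_2)$ into its two coordinate projections $B_b : V \times V \to \{0,1\}$ and exploit the identity
\[
\NAE(B(e_i, e'_i), B(e_j, e'_j)) \;=\; \bigvee_{b \in \{1,2\}} \NAE(B_b(e_i, e'_i), B_b(e_j, e'_j))
\]
to recast the 4-query accept event as a 4-way disjunction of 8 pairwise $\NAE$'s on the $B_b$'s, which is precisely the accept event of the 8-query test applied to the pair of two-variable tables $(B_1, B_2)$ in place of the one-variable table $A$. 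I would then collapse $B_1, B_2$ into a single one-variable binary table $A : V \to \{0,1\}$ by fixing one of the two arguments via a random ``anchor'' drawn from the Label Cover query distribution, and invoke the soundness of the 8-query verifier established in the proof of Theorem~\ref{thm:2c8u}. The delicate step is to ensure that this projection preserves the low-degree long code structure required by the Dinur--Guruswami Fourier analysis while losing only a constant factor in accept probability; the precise matching between the twin-pair structure $(e_i, e'_i)$ of the 8-query test and the two coordinates of $B$ is what makes this tight projection possible.
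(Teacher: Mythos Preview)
Your completeness argument and the doubling construction are correct and match the paper (Remark~\ref{rem:doubling} and Lemma~\ref{lem:4c4u-comp}).

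The soundness argument has a genuine gap. You aim to prove a PCP soundness bound $\sigma=2^{-2^{\Omega(\sqrt{\log\log N})}}$ for the 4-query $\NAE$ verifier on an \emph{arbitrary} proof $B$, and then conclude $\alpha\le O(\sigma)$ from the observation that the verifier accepts the indicator of $I$ with probability $\ge\Omega(\alpha)$. But an $\NAE$-type test cannot have sub-constant soundness: a uniformly random $B$ is already accepted with probability close to $1$, so no such $\sigma$ exists. What independence of $I$ actually gives you is $\Pr[\text{all four queries lie in }I]=0$, a statement about the \emph{product} of the indicator values, not about the $\NAE$ predicate. The paper accordingly never argues via PCP soundness; it Fourier-analyzes $\E\bigl[\prod_i A'(q_i)\bigr]=0$ directly, where $A':\Pe^{3r}_d\times\Pe^{3r}_d\to\{0,1\}$ is the (lifted) indicator of $I$ and the characters of the product group are products $\chi_{\alpha_1}(g_1)\chi_{\alpha_2}(g_2)$ (Fact~\ref{fact:prod-fourier}). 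The crucial observation in Lemma~\ref{lem:4c4u-sound} is that the noise term $\E_{\eta_1,\ldots,\eta_4}\bigl[\prod_i\chi_{\alpha_i}(\eta_i)\chi_{\beta_i}(\eta_{i+2})\bigr]$ arising in this expansion is \emph{identical} to the one in the 8-query analysis; only the Fourier weights in front change, from $\prod_i\widehat{A'_u}(\alpha_i)^2\widehat{A'_w}(\beta_i)^2$ to $\widehat{A'_u}(\alpha_1,\alpha_2)^2\widehat{A'_w}(\beta_1,\beta_2)^2$. Hence Claims~\ref{clm:2c8u-far}--\ref{clm:2c8u-near-0} transfer verbatim and yield $\delta^4\le 2^{d/2}2^{-\epsilon_0 r}+2^{-4\cdot2^{d/4}}$.

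Your proposed workaround---fix one coordinate of $B$ via a ``random anchor'' to reduce to the one-variable 8-query analysis---also fails as stated: the two doubled queries to $u$'s table are $(e_1,e_2)$ and $(e_1+\eta_1,e_2+\eta_2)$, so \emph{both} coordinates shift between consecutive queries, and no single anchor can be held fixed across them.
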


We remark that the doubling method, mentioned above, 
when used in the vanilla long code setting (as opposed to low-degree
long code setting) already yields the following inapproximability: it is
quasi-NP-hard to color a 4-colorable 4-uniform hypergraph with
$(\log N)^{\Omega(1)}$ colors. This result already improves upon the above mentioned
result of Khot~\cite{Khot2002c} for 7-colorable 4-uniform
hypergraphs. Another feature of the doubling method is that although
the underlying alphabet is of size 4, namely $\{0,1\}^2$, it suffices
for the soundness analysis to perform standard Fourier analysis over
$\F_2$.

In the language of covering complexity\footnote{The covering number of
  a CSP is the minimal number of assignments to the vertices so that
  each hyperedge is covered by at least one assignment},
(the proof of) \lref[Theorem]{thm:4u4c} demonstrates a Boolean 4CSP for which it
is quasi-NP-hard to distinguish between covering number of 2
vs. $\exp(\sqrt{\log \log N})$. The previous best result for a Boolean
4CSP was 2 vs. $\log\log N$, due to Dinur and Kol~\cite{DinurK2013}.

We then ask if we can prove coloring inapproximability for even smaller
uniformity, i.e., 2 and 3 (graphs and 3-uniform hypergraphs
respectively). We show that we can use a different noise function over
$\F_3$ to obtain the following inapproximability result for
3-colorable 3-uniform hypergraphs.

\begin{theorem}[3-colorable 3-uniform hypergraphs]
\label{thm:3u3c}
Assuming $\NP \notin \DTIME(n^{2^{O(\log \log n / \log \log \log n)}})$, there is no polynomial time algorithm which, when given as input a $3$-uniform hypergraph $H$ on $N$ vertices can distinguish between the following:
\begin{itemize}
\item $H$ is $3$ colorable.
\item $H$ has no independent set of size $N/2^{O(\log \log N / \log \log \log N)}$. 
\end{itemize}
\end{theorem}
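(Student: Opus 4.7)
The plan is to follow the low-degree long code framework of Dinur--Guruswami~\cite{DinurG2013}, but adapted to a ternary alphabet. Starting from a smooth Label Cover instance with label set identified with $\F_3^m$, I would associate to each vertex $v$ a table $A_v : \mc{P}_d \to \F_3$, where $\mc{P}_d$ is the space of degree-$\leq d$ polynomials $\F_3^m \to \F_3$; the intended encoding of a label $u\in\F_3^m$ is $A_v(g) = g(u)$. The PCP verifier samples two random polynomials $f_1,f_2 \in \mc{P}_d$ and forms $f_3 = -f_1 - f_2 + E$, where $E$ is an \emph{additive} noise polynomial built from a sum of random low-degree polynomials (in contrast to the multiplicative product-of-two-polynomials noise used over $\F_2$ in~\cite{DinurG2013}). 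After appropriate folding the verifier reads $A(f_1),A(f_2),A(f_3)$ and accepts iff these three $\F_3$-values are not all equal. Perfect completeness follows because, for a genuine encoding, $A(f_1)+A(f_2)+A(f_3) = E(u)$, and the folding forces this sum to be nonzero; since $3c=0$ in $\F_3$, a nonzero sum already implies the three values cannot coincide.

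For soundness I would expand each $A_v$ in the $\F_3$-Fourier basis indexed by characters $\beta$ on $\mc{P}_d$, and write the verifier's acceptance probability (beyond the trivial $2/3$ NAE baseline) as a sum over $\beta$ of $|\hat{A}_v(\beta)|^2$ times the ``noise attenuation factor'' $\E[\omega^{\langle \beta, E\rangle}]$, where $\omega = e^{2\pi i/3}$. For characters $\beta$ supported on monomials of bounded total degree and weight, a standard list-decoding argument converts a nonnegligible Fourier mass into a candidate label for the Label Cover vertex, and smoothness of Label Cover then yields a good global assignment, contradicting its hardness. So the crux is to prove that for high-weight/high-degree $\beta$ the attenuation factor is vanishingly small, so that such $\beta$'s can be ignored in the soundness analysis.

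This is where the main obstacle lies: extending the binary Reed--Muller testing lemma of Dinur--Guruswami to $\F_3$. Writing $E = \sum_i \alpha_i P_i$ with $\alpha_i \in \F_3$ uniform and $P_i$ random low-degree polynomials, one can decouple the expectation over the $\alpha_i$ and reduce the estimate of $\E[\omega^{\langle\beta,E\rangle}]$ to bounding the bias of a certain quadratic form over $\F_3$ associated with $\beta$ evaluated on random low-degree polynomials; this is the ``associated quadratic form'' referred to in the abstract. Over $\F_2$ the corresponding analysis uses special features such as $x^2 = x$ and the simple classification of quadratic forms, which no longer hold in $\F_3$. I would instead analyze the rank distribution of ternary quadratic forms, combined with a pseudorandomness estimate showing that random degree-$d$ polynomials evaluated on a random input behave like a random element of $\F_3^{\mc{P}_d}$ up to a correction controlled by the rank of this form. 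Piecing everything together and choosing $d = \Theta(\log\log N / \log\log\log N)$ yields the claimed soundness $s = 1/2^{\Omega(d)}$, which by the standard reduction from an NAE 3-query PCP to 3-uniform hypergraph coloring produces the bound on the independent set size in \lref[Theorem]{thm:3u3c}.
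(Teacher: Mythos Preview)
Your proposal has two genuine gaps that would prevent the argument from going through.

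First, the noise function. You propose $E = \sum_i \alpha_i P_i$ with uniform $\alpha_i\in\F_3$ and random low-degree $P_i$. With this choice, $\langle \beta, E\rangle = \sum_i \alpha_i \langle \beta, P_i\rangle$ is \emph{linear} in the $\alpha_i$, so decoupling over the $\alpha_i$ produces a product of indicators $\mathbf{1}[\langle\beta,P_i\rangle=0]$, not a quadratic form; there is no ``associated quadratic form'' to analyze. More seriously, this $E$ does not give perfect completeness: $E(u)$ is uniform in $\F_3$ (for any fixed $u$ and enough terms), hence zero with probability $1/3$, and folding cannot repair this. The paper's noise is instead a single random degree-$d$ polynomial \emph{squared}: one sets $g' = p^2 + 1 - g - f\circ\pi$. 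Over $\F_3$ one has $p(a)^2\in\{0,1\}$, so $p(a)^2+1\in\{1,2\}$ is never zero; this is precisely what yields perfect completeness. On the soundness side, $\langle\beta,p^2\rangle = p^T Q^\beta p$ for the symmetric matrix $Q^\beta = \sum_x \beta(x)\,e_x e_x^T$ (with $e_x$ the vector of degree-$\leq d$ monomial evaluations at $x$), and the heart of the analysis is a rank lower bound for $Q^\beta$ when $\beta$ is far from $(\Pe^n_{2d})^\perp$, proved via the Haramaty--Shpilka--Sudan tester over $\F_3$. Your sketch gestures at a quadratic form without ever producing one; the square $p^2$ is the missing idea.

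Second, you start from a (bipartite) smooth Label Cover. In a $3$-query test of this shape the three probes necessarily straddle the two sides of the constraint edge, so the resulting $3$-uniform hypergraph is bipartite and hence trivially $2$-colorable regardless of the Label Cover instance. The paper avoids this by reducing from a \emph{multilayered} smooth Label Cover (as in Dinur--Guruswami--Khot--Regev and Khot), so that vertices on either side of an edge live in the same global vertex set. This multilayered structure is also why the final parameters are weaker here than in the $8$-uniform and $4$-uniform results: the blowup from $\ell$ layers and smoothness $\eta$ forces the instance size to be $n^{(1/\eta)\ell r}$-ish, which is why one ends up with $1/\delta' = 2^{\Theta(\log\log N/\log\log\log N)}$ rather than $2^{2^{\Theta(\sqrt{\log\log N})}}$. (Relatedly, your parameter $d=\Theta(\log\log N/\log\log\log N)$ is not what the paper uses; $d$ is only $\Theta(\log\log(1/\delta'))$, and the headline bound comes from $r$ and the multilayer/smoothness blowup, not from $d$ directly.)
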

Prior to this result, the best inapproximability result for
O(1)-colorable 3-uniform hypergraphs were as follows:
Khot~\cite{Khot2002b} showed that it is quasi-NP-hard to color a
3-colorable 3-uniform hypergraphs with $(\log\log N)^{1/9}$
colors and Dinur, Regev and Smyth~\cite{DinurRS2005} showed that it is
quasi-NP-hard to color a 2-colorable 3-uniform hypergraphs with
$(\log\log N)^{1/3}$ colors (observe that $2^{O(\log \log N /
  \log \log \log N)}$ is exponentially larger than $(\log \log
N)^{\Omega(1)}$). For 2-colorable 3-uniform hypergraphs, the
result of Dinur~\etal~\cite{DinurRS2005} only rules out colorability by
$(\log \log N)^{\Omega(1)}$, while a recent result due to Khot and
Saket~\cite{KhotS2014} shows that it is hard to find a $\delta
N$-sized independent set in a given $N$-vertex 2-colorable 3-uniform
hypergraph assuming the $d$-to-$1$ games conjecture.  Our improved
inapproximability result is obtained by adapting Khot's proof to the
low-degree long code using the new noise function over $\F_3$. We
remark that this result is not as strong as the previous two
($2^{O(\log \log N / \log \log \log N)}$ instead of
$2^{2^{O(\sqrt{\log \log N})}}$) as for 3-uniform hypergraphs, the
starting point is a multilayered smooth label cover instance instead
of just label cover, which causes a blowup in size and a corresponding
deterioration in the parameters. 

\subsection{Low-degree long code analysis via Reed-Muller testing}

One of the key contributions of Barak~\etal~\cite{BarakGHMRS2012} was
the discovery of a connection between Reed-Muller testing and the
analysis of the low-degree long code. In particular, they showed the
following. Let $\Pe^n_d$ set of degree $d$ polynomials on $n$
variables over $\F_2$. For functions $\beta, g : \F_2^n \to\F_2$, let
$\chi_\beta(g)= (-1)^{\sum_{x \in \F_2^n}
  \beta(x)g(x)}$. Barak~\etal~oberved that if $\beta$ is far from the
set $P^n_{n-d-1}$ of degree $n-d-1$ polynomials, then one can bound
the expectation $|\E_\mu\left[\chi_\beta(\eta)\right]|$ for a random
low-weight $\eta$ using a powerful result on Reed-Muller testing over
$\F_2$ due to Bhattacharyya~\etal~\cite{BhattacharyyaKSSZ2010}. This
demonstrates that the noise function $\eta$ attenuates the
contribution of high-order Fourier coefficients and is thus useful in
the low-degree long code analysis. However, this noise $\eta$ has
imperfect completeness and Dinur-Guruswami had to prove a new result
on Reed-Muller testing over $\F_2$ to construct a noise function that
allows for perfect completeness. They showed that if $\beta$ is
$2^{d/2}$-far from $\Pe^{n}_{n-d-1}$, then
$\E_{g\in\Pe^{n}_{d/4}}\left|\E_{h\in \Pe^{n}_{3d/4}}
  [\chi_\gamma(gh)]\right|$ was doubly exponentially small in $d$ (see
\lref[Theorem]{thm:DG} for a fomal statement).  This allowed them to
extend some of the long code based inapproximability with perfect
completeness to the low-degree long code setting. Tests based on the
above property need to access functions of different degree (e.g., $g,
gh$ in the above discussion) and this results in a multipartite structure in the
low-degree long code tables of \cite{DinurG2013}. The results for
2-colorable 8-uniform hypergraphs and 4-uniform 4-colorable
hypergraphs are obtained using the above result of \cite{DinurG2013}.

For the case of 3-uniform 3-colorable hypergraphs, we observe that if
we extend the alphabet to ternary (i.e., $\F_3$ instead of $\F_2$), we
can design a noise function that has both perfect completeness and
does not result in a multipartite structural restriction. Let
$\Pe^n_d$ now denote the set of degree $d$ polynomials on $n$
variables over $\F_3$. We show that if $\beta: \F_3^n\to\F_3$ is
$3^{d/2}$-far from $\Pe^{n}_{2n-2d-1}$, then $\left|\E_{p\in
    \Pe^{n}_{d}} [\chi_\beta(p^2)]\right|$ is doubly exponentially
small in $d$. This is proved by showing the following pseudorandom
property of the associated quadratic form $Q^\beta$ defined as
$Q^\beta:= \sum_{x\in \F_3^n} \beta(x) \cdot \eval(x)\eval(x)^T$ where
$\eval(x)$ is the column-vector of evaluation of all degree $d$
monomials at the point $x$. If the distance of $\beta$ from
polynomials of degree $2n-2d-1$, denoted by $\Delta_d(\beta)$ is at
least $3^{d/2}$, then the rank of the matrix $Q(\beta)$ is exponential
in $d$ and is otherwise equal to the distance $\Delta_d(\beta)$. This
rank bound is proved along the lines of \cite{DinurG2013} using the
Reed-Muller tester analysis of Haramaty, Shpilka and
Sudan~\cite{HaramatySS2013} over general fields instead of the
Bhattacharyya~\etal~\cite{BhattacharyyaKSSZ2010} analysis over $\F_2$.
\ph{Add related work section, explain almost-colorable results}

\subsection*{Organization}
We start with some preliminaries in
\lref[Section]{sec:prelims}. Theorems~\ref{thm:2c8u}, \ref{thm:4u4c},
and \ref{thm:3u3c} are proved in Sections~\ref{sec:2c8u},
\ref{sec:4u4c}, and \ref{sec:3u3c} respectively. The proof of the
latter theorem requires a technical claim about low-degree polynomials
over $\field_3$, which we prove in
\lref[Section]{sec:test-analysis}. 

\section{Preliminaries}\label{sec:prelims}

\subsection{Label cover}
All our reductions start from an appropriate instance of the label
cover problem, bipartite or multipartite.  
A bipartite label
cover instance consists of a bipartite graph $G=(U,V,E)$, label sets
$\Sigma_U,\Sigma_V$, and a set of projection constraints
$\Pi= \{\pi_{uv}:\Sigma_U\rightarrow \Sigma_V| (u,v) \in E\}$.
We consider label cover instances obtained
from $\TSAT$ instances in the following natural manner.
\begin{definition}[$r$-repeated label cover]
\label{def:label-cover}
Let $\phi$ be a $\TSAT$ instance with $X$ as the set of variables and $C$ the set of clauses. The $r$-repeated bipartite label cover instance $I(\phi)$ is specified by:
\begin{itemize}
\item A graph $G:=(U,V,E)$, where $U:=C^r, V:=X^r$.
\item $\Sigma_U := \{0,1\}^{3r},\Sigma_V := \{0,1\}^r$. 
\item There is an edge $(u,v) \in E$ if the tuple of variables $v$ can be obtained from the tuple of clauses $u$ by replacing each clause by a variable in it.
\item The constraint $\pi_{uv}:\{0,1\}^{3r}\rightarrow \{0,1\}^{r}$ is simply the projection of the assignments on $3r$ variables in all the clauses in $u$ to the assignments on the $r$ variables in  $v$. 
\item For each $u$ there is a set of $r$ functions
  $\{f^u_i:\{0,1\}^{3r} \rightarrow \{0,1\} \}_{i=1}^r$ such that
  $f^u_i(a)=0$ iff the assignment $a$ satisfies the $i$th clause in
  $u$. Note that $f^u_i$ depends only on the $3$ variables in the
  $i$th clause.
\end{itemize}
A labeling $L_U:U\rightarrow \Sigma_U,L_V:V\rightarrow \Sigma_V$
satisfies an edge $(u,v)$ iff $\pi_{uv}(L_U(u))=L_V(v)$ and $L_U(u)$
satisfies all the clauses in $u$. Let $\OPT(I(\phi))$ be the maximal
fraction of constraints that can be satisfied by any labeling.
\end{definition}
The following theorem is obtained by applying Raz's parallel repetition theorem~\cite{Raz1998} with $r$ repetitions on hard instances of $\mathsf{MAX}$-$\TSAT$ where each variable occurs the same number of times~\cite{Feige1998}.
\begin{theorem}
\label{thm:label-cover}
There is an algorithm which on input a $\TSAT$ instance $\phi$ and
$r\in \N$ outputs an $r$-repeated label cover instance $I(\phi)$ in
time  $n^{O(r)}$ with the following properties. 
\begin{itemize}
\item If $\phi \in \TSAT$, then $\OPT(I(\phi))=1$.
\item If $\phi \notin \TSAT$, then $\OPT(I(\phi)) \leq 2^{-\epsilon_0
    r}$ for some universal constant $\epsilon_0\in (0,1)$. 
\end{itemize}
Moreover, the underlying graph $G$ is both left and right regular.
\end{theorem}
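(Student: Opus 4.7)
The plan is to build $I(\phi)$ in two stages: first construct the ``basic'' (i.e.\ $r=1$) label cover instance directly from $\phi$, establish its completeness and soundness, and then amplify the soundness by applying Raz's parallel repetition theorem to obtain the $r$-repeated version. The basic instance has $U=C$, $V=X$, label sets $\Sigma_U=\{0,1\}^3$ and $\Sigma_V=\{0,1\}$, an edge whenever a variable occurs in a clause, and the natural projection $\pi_{uv}$ taking a $3$-bit clause-assignment to its single variable. The functions $f^u_i$ encode clause-satisfaction, and the ``satisfies all clauses in $u$'' condition in \lref[Definition]{def:label-cover} reduces, for $r=1$, to a single unary constraint on $L_U(u)$. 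To fit the projection-game formalism required by parallel repetition, I would fold this unary constraint into the label set by restricting $\Sigma_U$ to the (at most $7$) satisfying assignments of each clause. Thus the basic instance is a bona fide projection game.

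For the analytical properties of the basic instance, I would start from Feige's strengthening of the PCP theorem for $\mathsf{MAX}$-$\TSAT$ on formulae in which every variable occurs in the same number of clauses~\cite{Feige1998}; this simultaneously guarantees the bi-regularity of the clause-variable incidence graph and a universal constant $\epsilon_1\in(0,1)$ such that satisfiable $\phi$ have optimum~$1$ while unsatisfiable $\phi$ have optimum at most $1-\epsilon_1$. Completeness of the basic label cover is immediate from any satisfying assignment $\alpha$: label each clause by $\alpha$ restricted to its three variables (so each $f^u_i=0$) and each variable by $\alpha$'s value on it. Soundness is the standard averaging argument: from any $L_U,L_V$ satisfying a fraction $\rho$ of edges, extract an assignment by setting each variable to $L_V$'s value on it and observe that, since each satisfied edge witnesses a satisfied clause, the resulting assignment satisfies a $\rho$-fraction of clauses, forcing $\rho \le 1-\epsilon_1$ whenever $\phi\notin\TSAT$.

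Next, for the $r$-repeated instance, I would take the $r$-fold parallel-repetition product of the basic instance: $U=C^r$, $V=X^r$, labels are $r$-tuples, and the edge $(u,v)$ requires the $r$ coordinate-wise projection constraints to simultaneously hold (after the above folding of clause-satisfaction into $\Sigma_U$). Completeness is preserved coordinate-wise. For soundness, I would invoke Raz's parallel repetition theorem~\cite{Raz1998} for projection games: since the basic game has constant answer-alphabet size and value at most $1-\epsilon_1$, the $r$-fold repetition has value at most $2^{-\epsilon_0 r}$ for a suitable universal constant $\epsilon_0\in(0,1)$. Left- and right-regularity of the resulting bipartite graph $G$ follow from the fact that the tensor product of two left- and right-regular bipartite graphs is again left- and right-regular, so iterating $r$ times preserves regularity. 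Finally, the construction runs in time $n^{O(r)}$ since $|U|\cdot |V|\le (|C|\cdot |X|)^r$ and each edge and constraint is produced in time polynomial in $r$.

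The only real obstacle is invoking Raz's theorem with the correct setup; the remaining steps are bookkeeping. The subtlety worth double-checking is that the unary clause-satisfaction constraint is compatible with the projection-game requirement of parallel repetition, which we handle by the folding above so that the only surviving edge constraint is the projection $\pi_{uv}$. Once this is arranged, the stated bounds on $\OPT(I(\phi))$ and the regularity of $G$ follow directly from the two black-box ingredients \cite{Feige1998,Raz1998}.
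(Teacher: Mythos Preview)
Your proposal is correct and is exactly the approach the paper intends: the paper itself does not give a proof but simply states that the theorem ``is obtained by applying Raz's parallel repetition theorem~\cite{Raz1998} with $r$ repetitions on hard instances of $\mathsf{MAX}$-$\TSAT$ where each variable occurs the same number of times~\cite{Feige1998}.'' Your write-up is a faithful and careful expansion of precisely this one-line justification, including the standard folding of the clause-satisfaction predicate into the left alphabet so that Raz's theorem applies to a genuine projection game.
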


\paragraph{Multilayered smooth label cover:}
For our hardness results for $3$-uniform $3$-colorable hypergraphs, we
need a multipartite version of label cover, satisfying a smoothness
condition.
\begin{definition}[smoothness]
Let $I$ be a bipartite label cover instance specified by $\left((U,V,E),\Sigma_U,\Sigma_V,\Pi\right)$. Then $I$ is $\eta$-\emph{smooth} iff for every $u \in U$ and two distinct labels $a,b \in \Sigma_U$
$$\Pr_v[ \pi_{uv}(a) = \pi_{uv}(b) ] \leq \eta,$$
where $v$ is a random neighbour of $u$.
\end{definition}
\begin{definition}[$r$-repeated $\ell$-layered $\eta$-smooth label cover]\label{def:multilayer}
Let $T:=\lceil\ell/\eta\rceil$ and $\phi$ be a $\TSAT$ instance with
$X$ as the set of variables and $C$ the set of clauses. The
$r$-repeated $\ell$-layered $\eta$-smooth label cover instance
$I(\phi)$ is  specified by:
\begin{itemize}
\item An $\ell$-partite graph with vertex sets $V_0, \cdots
  V_{\ell-1}$. Elements of $V_i$ are tuples of the form $(C',X')$
  where $C'$ is a set of $(T+\ell - i)r$ clauses and $X'$ is a a set
  of $ir$ variables.
\item $\Sigma_{V_i} := \{0,1\}^{m_i}$ where $m_i:={3(T+\ell - i)r +ir}$ which corresponds
  to all Boolean assignments to the clauses and variables
  corresponding to a vertex in layer $V_i$.
\item  For $0 \leq i < j < \ell$, $E_{ij} \subseteq V_i \times V_j$
  denotes the set of edges between layers $V_i$ and $V_j$. For $v_i
  \in V_i, v_j \in V_j$, there is an edge $(v_i,v_j) \in E_{ij}$ iff
  $v_j$ can be obtained from $v_i$ by replacing some $(j-i)r$ clauses
  in $v_i$ with variables occurring in the clauses respectively.
\item The constraint $\pi_{v_i v_j}$ is the projection of assignments for clauses and variables in $v_i$ to that of $v_j$. 
\item For each $i <\ell$, $v_i \in V_i$, there are $(T+\ell - i)r$ functions $f_j^{v_i}:\{0,1\}^{3(T+\ell - i)r +ir}\rightarrow \{0,1\}$, one for each clause $j$ in $v_i$ such that $f_j^{v_i}(a)=0$ iff $a$ satisfies the clause $j$. This function only depends on the $3$ coordinates in $j$.
\end{itemize}
Given a labeling $L_i:V_i\rightarrow \Sigma_{V_i}$ for all the vertices, an edge $(v_i,v_j) \in E_{ij}$ is satisfied iff $L_i(v_i)$ satisfies all the clauses in $v_i$, $L_j(v_j)$ satisfies all the clauses in $v_j$ and $\pi_{v_i v_j}(L_i(v_i)) = L_j(v_j)$. Let $\OPT_{ij}(I(\phi))$ be the maximum fraction of edges in $E_{ij}$ that can be satisfied by any labeling.
\end{definition}
The following theorem was proved by Dinur~\etal~\cite{DinurGKR2005} in
the context of hypergraph vertex cover inapproximability (also see \cite{DinurRS2005}).
\begin{theorem}
\label{thm:layered-label-cover} There is an algorithm which on input a $\TSAT$ instance $\phi$ and $\ell,r\in \N, \eta \in [0,1)$ outputs a $r$-repeated $\ell$-layered $\eta$-smooth label cover instance $I(\phi)$ in time  $n^{O((1+1/\eta)\ell r)}$ with the following properties.
\begin{enumerate}
\item $\forall~ 0\leq i < j < \ell$, the bipartite label cover
  instance on
  $I_{ij}=\left((V_i,V_j,E_{ij}),\Sigma_{V_i},\Sigma_{V_j},\Pi_{ij}\right)$
  is $\eta$-smooth.
\item For $1<m<\ell$,  any $m$ layers $0\leq i_1< \cdots <i_m\leq
  \ell-1$, any $S_{i_j} \subseteq V_{i_j}$ such that $|S_{i_j}| \geq
  \frac{2}{m}|V_{i_j}|$, there exists distinct ${i_j}$ and ${i_{j'}}$
  such that the fraction of edges between $S_{i_j}$ and $S_{i_{j'}}$ relative to $E_{i_ji_{j'}}$ is
  at least $1/m^2$. 
\item If $\phi \in \TSAT$, then there is a labeling for $I(\phi)$ that satisfies all the constraints.
\item If $\phi \notin \TSAT$,  then 
$$\OPT_{i,j}(I(\phi)) \leq 2^{-\Omega(r)}, \quad \forall 0\leq i < j
\leq \ell.$$
\end{enumerate}
\end{theorem}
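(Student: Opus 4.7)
The plan is to adopt the construction given by Definition~\ref{def:multilayer}, starting from a ``balanced'' Max-$\TSAT$ instance~\cite{Feige1998} in which each variable occurs in the same number of clauses. With $T = \lceil \ell/\eta\rceil$, each vertex of $V_i$ is specified by $O((T+\ell)r) = O(\ell r / \eta)$ clause/variable slots, which yields the $n^{O((1+1/\eta)\ell r)}$ running time; the balance of $\phi$ also ensures that every bipartite graph $E_{ij}$ is regular on both sides, a fact I will need below. The remaining task is to verify the four listed properties.

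Completeness (item~3) is immediate: lift a satisfying assignment $\sigma$ of $\phi$ to each vertex by restriction, so that every clause is satisfied at every endpoint and every projection is respected. For soundness (item~4), after stripping the $(T+\ell-j)r$ shared clauses and the $ir$ shared variables common to every endpoint-pair of an $(i,j)$-edge, the bipartite instance $I_{ij}$ is the $(j-i)r$-fold parallel repetition of the natural ``clause-to-variable-in-clause'' label cover attached to $\phi$; Raz's theorem~\cite{Raz1998} then gives $\OPT_{ij}(I(\phi)) \le 2^{-\Omega((j-i)r)} \le 2^{-\Omega(r)}$. Smoothness (item~1) reduces to a short probability computation: two labels $a \ne b$ at $v_i$ that disagree at a position $p$ collide under $\pi_{v_i v_j}$ only if $p$ is the $q$-th slot of some clause $c$, $c$ is among the $(j-i)r$ randomly chosen clauses for conversion (probability $\le \ell/T$), and the variable selected from $c$ is not the $q$-th one (conditional probability $2/3$); the combined bound $2\ell/(3T) \le \eta$ is precisely what the definition of smoothness requires.

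I expect the main obstacle to be the weak-density property (item~2), which I plan to attack via a second-moment argument on top of a common coupling. Consider the Markov chain that starts with $v_{i_1}$ uniform on $V_{i_1}$ and takes $v_{i_k}$ to be a uniform $E_{i_{k-1} i_k}$-neighbor of $v_{i_{k-1}}$ for each $k \ge 2$. Because each transition corresponds to uniformly sampling $(i_k - i_{k-1})r$ clauses to convert along with a variable per clause, composition yields that every pair $(v_{i_j}, v_{i_{j'}})$ with $j < j'$ is distributed as a uniform random edge of $E_{i_j i_{j'}}$, and in particular each $v_{i_j}$ is uniform on $V_{i_j}$ (here the bi-regularity of $E_{ij}$ enters). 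Set $X_j := \mathbf{1}[v_{i_j}\in S_{i_j}]$. The density hypothesis gives $t := \E[\sum_j X_j] \ge 2$, and since $X_j^2 = X_j$ and $\E[(\sum_j X_j)^2] \ge t^2$ by Jensen,
\[
\sum_{j<j'} \Pr\bigl[v_{i_j} \in S_{i_j} \wedge v_{i_{j'}} \in S_{i_{j'}}\bigr] = \frac{\E[(\sum_j X_j)^2] - \E[\sum_j X_j]}{2} \ge \frac{t(t-1)}{2} \ge 1.
\]
Averaging over the $\binom{m}{2}$ pairs then produces some $j \ne j'$ with edge-density at least $2/(m(m-1)) \ge 1/m^2$ between $S_{i_j}$ and $S_{i_{j'}}$, as required.
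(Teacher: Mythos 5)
The paper does not prove this theorem; it simply cites Dinur, Guruswami, Khot, and Regev~\cite{DinurGKR2005} (see also~\cite{DinurRS2005}) and imports the result. Your argument is a correct, self-contained reconstruction of essentially the same proof: lifting a satisfying assignment for completeness, reducing soundness of $I_{ij}$ to the $(j-i)r$-fold parallel repetition via Raz's theorem, bounding the collision probability for smoothness by $\Pr[\text{converted clause}] \cdot \Pr[\text{wrong slot}] \le 2\ell/(3T) \le \eta$, and proving weak density with the standard path-coupling plus second-moment device. The one point worth making explicit is that both the smoothness calculation (identifying ``random neighbor'' with ``uniform choice of clauses to convert and variables to retain'') and the path-coupling claim (that $(v_{i_j}, v_{i_{j'}})$ is a uniform edge of $E_{i_j i_{j'}}$ for every pair, not just consecutive ones) rest on the bi-regularity of each $E_{ij}$; this is why you start from a balanced Max-$\TSAT$ instance and why the layer vertices should be read as tuples (possibly with repeats), as in the bipartite construction of Definition~\ref{def:label-cover}, rather than literal sets. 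With that understanding, the composition of uniform subset choices yields a uniform subset of the combined size, which is exactly what makes the coupling produce uniform edges at every scale, and the rest of your second-moment inequality $\sum_{j<j'}\Pr[X_j \wedge X_{j'}] \ge t(t-1)/2 \ge 1$ with $t \ge 2$ gives the required $1/m^2$ density after averaging over the $\binom{m}{2}$ pairs.
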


\subsection{Low-degree long code}\label{sec:ldlc}

Let $\F_p$ be the finite field of size $p$ where $p$ is a prime. The
results in this section apply when $p=2,3$. The choice of
$p$ will be clear from context and hence the dependence of $p$ on the
quantities defined will be omitted. Let $\Pe^n_d$ be the set of degree
$d$ polynomials on $n$ variables over $\F_p$. Let $\mathfrak F_n :=
\Pe^n_{(p-1)n}$. Note that $\mathfrak F_n$ is the set of all
functions from $\F_p^n$ to $\F_p$. $\mathfrak F_n$ is a $\F_p$-vector
space of dimension $p^n$ and $\Pe^n_d$ is its subspace of dimension
$n^{O(d)}$. The Hamming distance between $f$ and $g \in \mathfrak F_n$,
denoted by $\Delta(f,g)$, is the number of inputs on which $f$ and $g$
differ. When $S \subseteq \mathfrak F_n$, $\Delta(f, S) := \min_{g\in
  S} \Delta(f,g)$. 
We say $f$ is $\Delta$-far from $S$ if 
$\Delta(f,S) \geq \Delta$ and $f$ is $\Delta$-close to $S$ otherwise. Given $f,g, \in \mathfrak F_n$, the dot product
between them is defined as $\langle f,g \rangle := \sum_{x \in \F_p^n}
f(x)g(x)$. For a subspace $S \subseteq \mathfrak F_n$, the dual
subspace is defined as $S^{\perp} := \{ g \in \mathfrak F_n : \forall f
\in S, \langle g,f \rangle = 0 \}$. The following theorem relating
dual spaces is well known.
\begin{lemma} 
\label{lem:dual}$(\Pe^{n}_{d})^\perp = \Pe^n_{(p-1)n-d-1}$.
\end{lemma}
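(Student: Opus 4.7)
The plan is to prove the two inclusions separately, one by a direct computation with monomials and the other by a dimension count. Throughout, I use the fact that since $x^p = x$ on $\F_p$, every function in $\mathfrak F_n$ has a unique representation as a polynomial in which each variable has individual degree at most $p-1$; in this representation the total degree is well defined, and the monomials $m_{\mathbf a}(x) := \prod_{i=1}^n x_i^{a_i}$ with $\mathbf a = (a_1,\dots,a_n) \in \{0,1,\dots,p-1\}^n$ form an $\F_p$-basis of $\mathfrak F_n$, and the $m_{\mathbf a}$ with $\sum_i a_i \le d$ form a basis of $\Pe^n_d$.

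First I would show $\Pe^n_{(p-1)n-d-1} \subseteq (\Pe^n_d)^\perp$. By bilinearity it suffices to prove $\langle m_{\mathbf a}, m_{\mathbf b}\rangle = 0$ whenever $\sum_i a_i \le d$ and $\sum_i b_i \le (p-1)n-d-1$. The inner product factors as
\[
\langle m_{\mathbf a}, m_{\mathbf b}\rangle \;=\; \sum_{x \in \F_p^n} \prod_{i=1}^n x_i^{a_i+b_i} \;=\; \prod_{i=1}^n \Bigl(\sum_{x_i \in \F_p} x_i^{a_i+b_i}\Bigr).
\]
I would then invoke the standard character-sum identity: for any integer $k \ge 0$, the sum $\sum_{x \in \F_p} x^k$ equals $-1$ if $k$ is a positive multiple of $p-1$, and equals $0$ otherwise. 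In order for the product above to be nonzero in $\F_p$, every factor must be nonzero, which forces $a_i + b_i$ to be a positive multiple of $p-1$ for each $i$. Since $0 \le a_i+b_i \le 2(p-1)$, this forces $a_i + b_i \ge p-1$ for every $i$, hence $\sum_i(a_i+b_i) \ge n(p-1)$. But our assumptions give $\sum_i(a_i+b_i) \le d + (p-1)n - d - 1 = (p-1)n - 1$, a contradiction. So all such inner products vanish.

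For the reverse inclusion I would match dimensions. The map $\mathbf a \mapsto \mathbf a' := (p-1-a_1,\dots,p-1-a_n)$ is an involution on $\{0,\dots,p-1\}^n$ sending $\{\mathbf a : \sum a_i \le d\}$ bijectively onto $\{\mathbf a' : \sum a'_i \ge (p-1)n - d\}$. Since the total number of exponent tuples is $p^n = \dim \mathfrak F_n$, this gives
\[
\dim \Pe^n_d + \dim \Pe^n_{(p-1)n-d-1} \;=\; p^n.
\]
Combined with the inclusion $\Pe^n_{(p-1)n-d-1} \subseteq (\Pe^n_d)^\perp$ and the identity $\dim(\Pe^n_d)^\perp = p^n - \dim \Pe^n_d$, the inclusion must be an equality.

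The only subtle point — really the main thing to get right — is the character-sum identity and the boundary case $k=0$: one has $\sum_{x \in \F_p} x^0 = p = 0$ in $\F_p$, so the condition $(p-1) \mid (a_i+b_i)$ with $a_i+b_i > 0$ (rather than merely $(p-1) \mid (a_i+b_i)$) is what produces the $\ge n(p-1)$ lower bound that contradicts the degree budget. Everything else is bookkeeping in the canonical $x^p = x$ representation.
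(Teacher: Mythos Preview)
Your proof is correct. The paper does not give a proof of this lemma at all --- it simply states it as ``well known'' --- so there is nothing to compare against; your argument via the monomial character-sum computation plus the dimension-matching involution is the standard one and is carried out cleanly, including the $k=0$ boundary case.
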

\noindent We need the following Schwartz-Zippel-like Lemma for degree $d$ polynomials.
\begin{lemma}[Schwartz-Zippel lemma~{\cite[Lemma~3.2]{HaramatySS2013}}]
\label{lem:SZ}
Let $f\in \F_p[x_1,\cdots,x_n]$ be a non-zero polynomial of degree at most $d$ with individual degrees at most $p-1$. Then $\prob{a\in \F_p^n}{f(a)\neq 0} \geq p^{-d/p-1}$.
\end{lemma}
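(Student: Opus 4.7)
The plan is to induct on the number of variables $n$. For the base case $n = 1$, a nonzero $f \in \F_p[x]$ of degree at most $\min(d, p - 1)$ has at most that many roots, so the fraction of non-roots is at least $1 - \min(d, p-1)/p$. When $d \geq p - 1$ this is $\geq 1/p \geq p^{-d/(p-1)}$, so I only need to verify $(p - d)/p \geq p^{-d/(p-1)}$ for $0 \leq d \leq p - 1$. Taking logarithms, this reduces to showing that the concave function $d \mapsto \log(p - d)$ lies above the affine function $d \mapsto (1 - d/(p-1)) \log p$; the two functions agree at the endpoints $d = 0$ and $d = p - 1$, so concavity of $\log$ delivers the inequality in the interior.

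For the inductive step I would write $f(x_1, \ldots, x_n) = \sum_{i=0}^{p-1} x_n^i \, g_i(x_1, \ldots, x_{n-1})$, a legitimate finite decomposition because individual degrees are bounded by $p - 1$. Let $k$ be the largest index with $g_k \not\equiv 0$; then $\deg g_k \leq d - k$ and the individual degrees of $g_k$ are still at most $p - 1$. For each $a' \in \F_p^{n-1}$ with $g_k(a') \neq 0$, the univariate polynomial $f(a', x_n)$ has degree exactly $k$ in $x_n$ (its leading coefficient is $g_k(a') \neq 0$), so the base case applied to this univariate polynomial yields $\Pr_{a_n}[f(a', a_n) \neq 0] \geq p^{-k/(p-1)}$. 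By the inductive hypothesis applied to $g_k$, $\Pr_{a'}[g_k(a') \neq 0] \geq p^{-(d-k)/(p-1)}$. Averaging over $a'$ and multiplying the two bounds gives
$$\Pr_{a \in \F_p^n}[f(a) \neq 0] \geq p^{-(d-k)/(p-1)} \cdot p^{-k/(p-1)} = p^{-d/(p-1)},$$
as required.

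The only mildly delicate step is the base-case inequality $(p - d)/p \geq p^{-d/(p-1)}$, which is not a one-line algebraic manipulation but is handled cleanly by the endpoint-matching concavity argument above. After that, the induction is essentially automatic: splitting off the highest nonzero coefficient of $x_n$ is the natural move, and the individual-degree-$\leq p - 1$ hypothesis plays a dual role in simultaneously justifying the finite $x_n$-expansion and being preserved when passing to the coefficient $g_k$.
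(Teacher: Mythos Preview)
Your proof is correct. The paper does not actually prove this lemma; it merely cites it from \cite{HaramatySS2013}, so there is no ``paper's own proof'' to compare against. Your argument---induction on $n$, peeling off the top $x_n$-coefficient $g_k$, and combining the univariate root bound with the inductive hypothesis on $g_k$---is the standard proof of this finite-field Schwartz--Zippel bound, and your handling of the only nontrivial step (the base-case inequality $(p-d)/p \geq p^{-d/(p-1)}$ via concavity of $\log$ between the matching endpoints $d=0$ and $d=p-1$) is clean and correct.
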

\noindent We now define the low-degree long code (introduced as the short code
by Barak~\etal~\cite{BarakGHMRS2012} in the $\F_2$ case).
\begin{definition}[low-degree long code]
For $a \in \F_p^n$, the degree $d$ long code for $a$ is a function $\LC_d(a):\Pe^n_{d} \rightarrow \F_p$ defined as
$$\LC_d(a)(f) := f(a).$$
\end{definition}
\noindent Note that for $d=(p-1)n$, this matches with the definition of the
original long code over the alphabet $\F_p$. 
\begin{definition}[characters]
A character of $\Pe^n_d$ is a function $\chi:\Pe^n_{d} \rightarrow \C$ such that
$$\chi(0) = 1 \text{ and } \forall f,g \in \Pe^n_d,~ \chi(f+g) = \chi(f)\chi(g).$$
\end{definition}
\noindent The following lemma lists the basic properties of characters.
\begin{lemma}
\label{lem:fourier}
Let $\{1,\omega,\cdots,\omega^{p-1}\}$ be the $p$th roots of unity and
for $\beta \in \mathfrak F_n, f \in \Pe^n_d$, $\chi_\beta(f) :=
\omega^{\langle \beta, f \rangle}$.
\begin{itemize}
\item The characters of $\Pe^n_d$ are $\{ \chi_\beta : \beta \in  \mathfrak F_n\}$. 
\item For any $\beta,\beta' \in \mathfrak F_n$, $\chi_\beta = \chi_\beta'$ if and only if  $\beta-\beta' \in (\Pe^{n}_d)^\perp$.
\item For $\beta \in (\Pe^{n}_d)^\perp$, $\chi_\beta$ is the constant $1$ function.
\item\label{item:minsup} $\forall \beta, \exists \beta'$ such that $\beta-\beta' \in
  (\Pe^{n}_d)^\perp$ and $|\supp(\beta') | = \Delta(\beta,
  (\Pe^{n}_d)^\perp)$ (i.e., the constant $0$ function is (one of) the
  closest function to $\beta'$ in $(\Pe^{n}_d)^\perp$). We call such a
  $\beta'$ a minimum support function for the coset $\beta + (\Pe^n_d)^\perp$.
\item Characters forms an orthonormal basis for the vector space of functions from $\Pe^n_d$ to $\C$, under the inner product  $\langle A, B\rangle := \E_{f \in \Pe^n_d} \left[A(f)\overline{B(f)}\right]$
\item Any function $A:\Pe^n_d \rightarrow \C$ can be uniquely decomposed as
$$A(f) = \sum_{\beta \in \Lambda^n_d} \widehat{A}(\beta) \chi_\beta(f)
\text{ where } \widehat{A}(\beta) := \E_{g \in \Pe^n_d} \left[A(g)
  \overline{\chi_\beta(g)}\right],$$ 
where $\Lambda^n_d$ is the set of minimum support functions, one for
each of the cosets in $\mathfrak F_n/(\Pe^{n}_d)^\perp$, with ties broken arbitrarily.
\item Parseval's identity: For any function $A:\Pe^n_d \rightarrow\C$,
$\sum_{\beta \in \Lambda^n_d} |\widehat A(\beta)|^2 = \E_{f\in \Pe^n_d} [|A(f)|^2].$ In particular, if $A:\Pe^n_d\rightarrow \{1,\omega,\cdots,\omega^{p-1}\}$, $\sum_{\beta\in \Lambda^n_d}|\widehat A(\beta)|^2 =1$.
\end{itemize}
\end{lemma}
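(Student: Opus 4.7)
The statement collects several standard Fourier-analytic facts for the finite abelian group $(\Pe^n_d,+)$; my plan is to verify each bullet using elementary character theory applied to the bilinear pairing $\langle \beta, f \rangle = \sum_{x \in \F_p^n} \beta(x) f(x)$ on $\mathfrak F_n \times \mathfrak F_n$, which is non-degenerate because pairing with the indicator of a point $x$ just evaluates at $x$. I anticipate no serious obstacle --- only a bit of bookkeeping to line up the $p$-th roots of unity with the $\F_p$-linear structure.

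For the first three bullets, I would check directly that each $\chi_\beta$ is a character: $\chi_\beta(0)=\omega^0=1$ and $\chi_\beta(f+g)=\chi_\beta(f)\chi_\beta(g)$ follow from $\F_p$-linearity of $\langle \beta,\cdot\rangle$ together with $\omega$ being a primitive $p$-th root of unity. The equivalence in bullet~2 unfolds as $\chi_{\beta-\beta'}\equiv 1$ on $\Pe^n_d$, which by the definition of $\omega$ is the same as $\langle \beta-\beta',f\rangle=0$ for every $f\in \Pe^n_d$, i.e., $\beta-\beta'\in (\Pe^n_d)^\perp$; bullet~3 is the special case $\beta'=0$. To see that every character arises this way I would count: the map $\beta\mapsto \chi_\beta$ descends to an injective group homomorphism from $\mathfrak F_n/(\Pe^n_d)^\perp$ into $\widehat{\Pe^n_d}$, and both groups have cardinality $|\mathfrak F_n|/|(\Pe^n_d)^\perp|=p^{\dim \Pe^n_d}=|\Pe^n_d|$, so the map is a bijection. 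Bullet~4 is then a pure pigeonhole statement: within each coset of $(\Pe^n_d)^\perp$ pick a representative $\beta'$ of smallest support (breaking ties arbitrarily); by definition $|\supp(\beta')|=\min_{\gamma\in (\Pe^n_d)^\perp}|\supp(\beta-\gamma)|=\Delta(\beta,(\Pe^n_d)^\perp)$.

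For the last three bullets I would invoke the standard orthogonality relation $\E_{f\in \Pe^n_d}[\chi_{\beta-\beta'}(f)]=1$ if $\chi_\beta=\chi_{\beta'}$ and $0$ otherwise. The non-trivial direction uses that a non-trivial group homomorphism from a finite abelian group into $\C^*$ has values summing to zero, applied to $\chi_{\beta-\beta'}$ restricted to $\Pe^n_d$. Combined with the bijection established in the previous paragraph, the family $\{\chi_\beta : \beta \in \Lambda^n_d\}$ consists of exactly $|\Pe^n_d|$ orthonormal vectors in the $|\Pe^n_d|$-dimensional inner product space $\C^{\Pe^n_d}$, hence is an orthonormal basis (bullet~5); the Fourier expansion in bullet~6 is then just the expansion in this basis, with the usual formula for coefficients obtained by pairing both sides against $\chi_\beta$, and Parseval (bullet~7) is the Pythagorean identity in this basis, specialized using $|A(f)|^2=1$ in the unit-modulus case when $A$ takes values in $\{1,\omega,\ldots,\omega^{p-1}\}$.
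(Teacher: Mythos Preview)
Your argument is correct. The paper does not actually prove this lemma; it is stated as a collection of standard Fourier-analytic facts about finite abelian groups and used without proof, so your write-up supplies exactly the elementary verification the paper omits.
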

\noindent The following lemma relates characters over different
domains related by co-ordinate projections.
\begin{lemma}
\label{lem:char-projection}
Let $m \leq n$ and $\pi:\F_p^n \rightarrow \F_p^m$ be a (co-ordinate)
projection i.e., there exist indices $1 \leq i_i < \cdots < i_m \leq
n$ such that $\pi(x_1,\dots,x_n) = (x_{i_1}, \cdots ,x_{i_m})$. Then for $f \in \Pe^m_d, ~\beta \in
\Pe^n_d$,
$$\chi_\beta(f\circ \pi)= \chi_{\pi_p(\beta)}(f),$$
 where $\pi_p(\beta)(y):= \sum_{x \in \pi^{-1}(y)} \beta(x)$.
\end{lemma}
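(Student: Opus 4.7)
The plan is a direct unfolding of definitions: expand $\chi_\beta(f\circ\pi)$ using the inner-product formula $\chi_\beta(h) = \omega^{\langle \beta,h\rangle} = \omega^{\sum_{x\in\F_p^n}\beta(x)h(x)}$, and then reorganize the sum over $\F_p^n$ by fibers of the coordinate projection $\pi$.

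More concretely, I would first write
\[
\chi_\beta(f\circ\pi) \;=\; \omega^{\sum_{x\in\F_p^n}\beta(x)\, f(\pi(x))}.
\]
Since $\pi$ is a coordinate projection, the fibers $\pi^{-1}(y)$ partition $\F_p^n$ as $y$ ranges over $\F_p^m$, and each fiber has size exactly $p^{n-m}$. On each fiber, the factor $f(\pi(x)) = f(y)$ is constant, so I can pull it out of the inner sum. This gives
\[
\sum_{x\in\F_p^n}\beta(x)\, f(\pi(x)) \;=\; \sum_{y\in\F_p^m} f(y) \sum_{x\in\pi^{-1}(y)}\beta(x) \;=\; \sum_{y\in\F_p^m} \pi_p(\beta)(y)\, f(y) \;=\; \langle \pi_p(\beta), f\rangle,
\]
where the second equality is simply the definition of $\pi_p(\beta)$. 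Raising $\omega$ to this exponent gives $\chi_{\pi_p(\beta)}(f)$, completing the proof.

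There is no real obstacle; the only thing to verify is that the regrouping by fibers is legitimate, which uses only that $\pi$ is a well-defined surjective map (automatic for a coordinate projection) so that $\{\pi^{-1}(y)\}_{y\in\F_p^m}$ is indeed a partition of $\F_p^n$. Note also that the lemma makes no use of the degree hypotheses on $f$ or $\beta$, which is consistent with the calculation above.
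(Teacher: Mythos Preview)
Your proof is correct and follows exactly the same approach as the paper: expand $\chi_\beta(f\circ\pi)$ via the definition, regroup the exponent sum by fibers $\pi^{-1}(y)$, and recognize the resulting inner sum as $\pi_p(\beta)(y)$. Your observation that the degree hypotheses on $f$ and $\beta$ play no role is also accurate.
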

\begin{proof}
$$\chi_\beta(f\circ \pi) = \omega^{\sum_{x \in \F_3^n} f(\pi(x)) \beta(x)}
= \omega^{\sum_{y \in \F_3^m} f(y) \left(\sum_{x \in \pi^{-1}(y)} \beta(x)\right)}
= \omega^{\sum_{y \in \F_3^m} f(y) \pi_p(\beta)(y)}
= \chi_{\pi_p(\beta)}(f).\qedhere$$
\end{proof}
\noindent Dinur and Guruswami~\cite{DinurG2013} proved the following
theorem about Reed-Muller codes over $\F_2$ using
Bhattacharyya~\etal~\cite{BhattacharyyaKSSZ2010} testing result.
\begin{theorem}[{\cite[Theorem~1]{DinurG2013}}]
\label{thm:DG}
Let $d$ be a multiple of $4$ and $p=2$. If $\gamma\in\mathfrak{F}_{n}$
is $2^{d/2}$-far from $(\Pe^{n}_d)^\perp = \Pe^{n}_{n-d-1}$, then 
$$\E_{g\in\Pe^{n}_{d/4}}\left[\left|\E_{h\in \Pe^{n}_{3d/4}}
    [\chi_\gamma(gh)]\right|\right]\leq 2^{-4\cdot 2^{d/4}}.$$
\end{theorem}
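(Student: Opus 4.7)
The plan is to reduce the statement first to a probability, then to a rank lower bound on an explicit $\F_2$-linear map, and finally to invoke the Reed-Muller testing analysis of Bhattacharyya~\etal~\cite{BhattacharyyaKSSZ2010}.

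First, for any fixed $g \in \Pe^n_{d/4}$, the identity $\chi_\gamma(gh) = \chi_{g\gamma}(h)$ combined with \lref[Lemma]{lem:fourier} and \lref[Lemma]{lem:dual} gives
\[
\E_{h \in \Pe^n_{3d/4}}\bigl[\chi_\gamma(gh)\bigr] \;=\; \mathbf{1}\bigl[\,g\gamma \in \Pe^n_{n-3d/4-1}\,\bigr],
\]
which already lies in $\{0,1\}$, so the quantity to bound equals $\Pr_{g \in \Pe^n_{d/4}}[\,g\gamma \in \Pe^n_{n-3d/4-1}\,]$.

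Second, I would introduce the $\F_2$-linear map $T : \Pe^n_{d/4} \to \mathfrak{F}_n/\Pe^n_{n-3d/4-1}$ defined by $T(g) := g\gamma$. The above probability equals $|\ker T|/|\Pe^n_{d/4}| = 2^{-\rank(T)}$, so the theorem reduces to establishing the rank lower bound $\rank(T) \geq 4 \cdot 2^{d/4}$.

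Third, I would lower-bound $\rank(T)$ by constructing a family of at least $4\cdot 2^{d/4}$ elements of $\Pe^n_{d/4}$ whose images under $T$ are linearly independent. A natural source is the set of indicator functions $\mathbf{1}_C$ of affine subspaces $C \subseteq \F_2^n$ of codimension $d/4$, each of which is a polynomial of degree at most $d/4$. For any codimension-$d$ affine subspace $A$ written as an intersection $A = B \cap C$, where $B$ has codimension $3d/4$ and $C$ has codimension $d/4$, the identity
\[
\sum_{x \in A} \gamma(x) \;=\; \bigl\langle \gamma \cdot \mathbf{1}_C,\; \mathbf{1}_B \bigr\rangle
\]
together with $\mathbf{1}_B \in \Pe^n_{3d/4}$ implies that $\sum_{x \in A} \gamma(x) \neq 0$ forces $\gamma \mathbf{1}_C \notin \Pe^n_{n-3d/4-1}$, so $\mathbf{1}_C$ is a witness to the non-triviality of $T$. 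Since $\gamma$ is $2^{d/2}$-far from $\Pe^n_{n-d-1}$, the BKSSZ analysis furnishes an exponentially large supply of such violating codim-$d$ subspaces, and a careful enumeration of their factorizations $A = B \cap C$ should yield the required independent family of $\mathbf{1}_C$'s.

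The principal obstacle is this last independence argument: producing individual witnesses is easy once a violating $A$ is in hand, but arranging $4\cdot 2^{d/4}$ of the images $\gamma \mathbf{1}_C \bmod \Pe^n_{n-3d/4-1}$ to be linearly independent requires exploiting the sharp threshold phenomenon at the heart of BKSSZ. Concretely, once the distance of $\gamma$ from $\Pe^n_{n-d-1}$ exceeds the first nontrivial level in that code's weight spectrum --- which the hypothesis $\Delta(\gamma,\Pe^n_{n-d-1})\geq 2^{d/2}$ guarantees --- the set of violating affine subspaces is not merely large but also acquires the algebraic ``rank-richness'' needed to cascade local violations into a global rank lower bound. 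Extracting and verifying this independence is the technical heart of the proof, and the role of choosing the split $d=d/4 + 3d/4$ is precisely to let both factors $\mathbf{1}_C$ and $\mathbf{1}_B$ land in the permitted degree classes while leaving enough room on the $d/4$ side to accommodate the full $4\cdot 2^{d/4}$ independent directions.
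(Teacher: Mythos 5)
Your first two steps are correct and in fact capture the exact reformulation that underlies the Dinur--Guruswami proof (and the $\F_3$ analogue proved in Section~\ref{sec:test-analysis} of this paper): for fixed $g$, $\chi_\gamma(gh)=\chi_{g\gamma}(h)$, so the inner expectation is the indicator of $g\gamma\in(\Pe^n_{3d/4})^\perp=\Pe^n_{n-3d/4-1}$, and the outer expectation equals $2^{-\rank T}$ where $T(g)=g\gamma\bmod \Pe^n_{n-3d/4-1}$; note $\ker T$ is precisely $B^n_{d,d/4}(\gamma)$ in the notation of this paper, so the goal is $\dim(\Pe^n_{d/4}/B^n_{d,d/4}(\gamma))\geq 4\cdot 2^{d/4}$ whenever $\Delta(\gamma,\Pe^n_{n-d-1})\geq 2^{d/2}$.

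The gap is in step three, and you flag it yourself. Your proposed route --- pick violating codimension-$d$ affine subspaces $A=B\cap C$, observe that a nonzero $\sum_{x\in A}\gamma(x)=\langle\gamma\mathbf{1}_C,\mathbf{1}_B\rangle$ shows $\mathbf{1}_C\notin\ker T$, and hope that BKSSZ supplies $4\cdot 2^{d/4}$ many $C$'s with independent images --- does not actually connect to what BKSSZ provides. BKSSZ is a local testability/hyperplane-restriction result, not an estimate on the number (let alone the linear independence) of violating codim-$d$ subspaces; producing one witness per violating $A$ gives no control on $\dim\,\mathrm{span}\{T(\mathbf{1}_C)\}$, and the ``rank-richness'' you invoke is precisely the statement one is trying to prove. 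The argument in \cite{DinurG2013} (mirrored in the $\F_3$ setting by Claims~\ref{claim:base-case}--\ref{claim:rec} of this paper) is instead a recursion on the quantity $\Phi_{d,k}(D):=\min\dim(\Pe^n_k/B^n_{d,k}(\gamma))$ over $\gamma$ that are $D$-far: the hyperplane lemma (the $\F_2$ analogue of \lref[Claim]{conj:hyperplane}, extracted from BKSSZ) yields a single good hyperplane $\{\ell=c\}$ on which all restrictions of $\gamma$ remain $D/4$-far; writing $\gamma$ and a candidate $q\in B^n_{d,k}(\gamma)$ in terms of $\ell$ and comparing coefficients gives a recursion of the form $\Phi_{d,k}(D)\geq\Phi_{d-1,k}(D/4)+\Phi_{d-1,k-1}(D/4)$, which iterated $\Theta(d)$ times from the base case yields the exponential lower bound. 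That inductive decomposition, not a direct enumeration of independent indicator functions, is the mechanism by which the far-distance hypothesis is converted into a rank bound, and it is the piece missing from your proposal.
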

\subsection{Folding over satisfying assignments}
\begin{lemma}
\label{lem:interpol}
Let $d>1$, $X$ be a set of $p^{d}-1$ points in $\F^n_p$ and $f:X \rightarrow \F_p$ an arbitrary function. Then there exists a polynomial $q$ of degree at most $(p-1)d$ such that $q$ agrees with $f$ on all points in $X$.
\end{lemma}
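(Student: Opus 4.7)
My plan is to prove the lemma by linear-algebra duality, using \lref[Lemma]{lem:dual} together with the polynomial Schwartz-Zippel bound \lref[Lemma]{lem:SZ}, rather than by exhibiting indicator polynomials explicitly. The conclusion is equivalent to the surjectivity of the evaluation map
\[
\mathrm{ev}:\Pe^n_{(p-1)d}\to \F_p^X,\qquad q\mapsto (q(a))_{a\in X},
\]
so it suffices to rule out any nonzero linear functional on $\F_p^X$ that annihilates the image of $\mathrm{ev}$.

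Suppose for contradiction such a functional exists. It is given by a nonzero $c:X\to \F_p$ satisfying $\sum_{a\in X} c(a)\,q(a)=0$ for every $q\in \Pe^n_{(p-1)d}$. Extending $c$ by zero to an element of $\mathfrak F_n$, this condition is precisely $c\in (\Pe^n_{(p-1)d})^{\perp}$. By \lref[Lemma]{lem:dual}, this dual equals $\Pe^n_{(p-1)n-(p-1)d-1}=\Pe^n_{(p-1)(n-d)-1}$, so $c$ is a nonzero polynomial of degree at most $(p-1)(n-d)-1$ with individual degrees at most $p-1$.

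The main step, and the only place any care is required, is to use \lref[Lemma]{lem:SZ} to force $|\supp(c)|$ to exceed $|X|$. Applied to $c$ with degree bound $(p-1)(n-d)-1$, the lemma gives
\[
\Pr_{a\in \F_p^n}[c(a)\neq 0]\;\geq\; p^{-((p-1)(n-d)-1)/(p-1)},
\]
and the exponent is strictly greater than $-(n-d)$ thanks to the $-1$ shaved off the degree of the dual space. Multiplying by $p^n$ gives $|\supp(c)|>p^d-1=|X|$, contradicting $\supp(c)\subseteq X$. Hence $\mathrm{ev}$ is surjective and the desired interpolating polynomial $q$ exists. I do not expect any genuine obstacle; the only subtlety is verifying that the $-1$ in the exponent of the dual degree is enough to beat the trivial bound $|X|=p^d-1$, which it is uniformly in $p$ (in particular for the two cases $p=2,3$ relevant to the paper).
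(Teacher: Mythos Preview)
Your proof is correct and takes essentially the same approach as the paper's: both argue that any nonzero element of $(\Pe^n_{(p-1)d})^\perp=\Pe^n_{(p-1)(n-d)-1}$ has support at least $p^d$ by \lref[Lemma]{lem:SZ}, hence cannot be supported on the $p^d-1$ points of $X$, which forces the evaluation map to be surjective. The paper compresses this into two sentences, while you spell out the linear-algebra duality explicitly, but the underlying argument is identical.
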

\begin{proof}
  By Lemmas~\ref{lem:dual} and \ref{lem:SZ}, any polynomial in
  $(\Pe^{n}_{(p-1)d})^\perp$ has suppport size at least
  $p^{d}$. Hence, it is possible to interpolate a degree $(p-1)d$
  polynomial through $p^d-1$ points.
\end{proof}
\noindent For any set $S$, a function $A: \Pe^n_{(p-1)d} \rightarrow S$ is said
to be folded over a subspace $J \subseteq \Pe^n_{(p-1)d}$ if $A$ is
constant over cosets of $J$ in $\Pe^n_{(p-1)d}$. 
\begin{fact}
\label{fact:ideallift}
Given a function $A:\Pe^n_{(p-1)d}/J \rightarrow S$ there is a unique
function $A':\Pe^n_{(p-1)d} \rightarrow S$ that is folded over $J$
such that for $g \in \Pe^n_{(p-1)d}, A'(g)  = A(g + J)$. We call $A'$
the lift of $A$.
\end{fact}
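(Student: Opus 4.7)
The statement is essentially a universal-property style fact about functions on a quotient vector space, so my plan is to construct the lift $A'$ pointwise, then verify the two required properties (foldedness and the defining relation) and finally uniqueness. For the construction, given any $g \in \Pe^n_{(p-1)d}$, the coset $g+J$ is a well-defined element of the quotient $\Pe^n_{(p-1)d}/J$, so I simply declare $A'(g) := A(g+J)$. This assigns $A'$ a single value in $S$ at each input $g$, so $A'$ is a well-defined function on $\Pe^n_{(p-1)d}$, and the relation $A'(g) = A(g+J)$ holds by construction.

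To check that $A'$ is folded over $J$, I take any $j \in J$ and observe $A'(g+j) = A(g+j+J) = A(g+J) = A'(g)$, since $g+j$ and $g$ represent the same coset in $\Pe^n_{(p-1)d}/J$. Thus $A'$ is constant on cosets of $J$, which is precisely the folding condition. For uniqueness, if $A''$ is any function on $\Pe^n_{(p-1)d}$ satisfying $A''(g) = A(g+J)$ for every $g$, then $A''(g)$ equals $A'(g)$ pointwise, so $A'' = A'$.

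There is no real obstacle here: the only thing one needs is that $J$ is a subspace of $\Pe^n_{(p-1)d}$, which is given, so that the cosets partition $\Pe^n_{(p-1)d}$ and the quotient $\Pe^n_{(p-1)d}/J$ is a well-defined vector space. Everything else is the standard lifting of a set function along the quotient map $g \mapsto g+J$.
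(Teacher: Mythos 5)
Your argument is correct and is precisely the standard (and essentially the only) way to verify this: the paper states it as a Fact without proof because the defining relation $A'(g) = A(g+J)$ pins down $A'$ pointwise, and foldedness follows immediately since $g+j$ and $g$ name the same coset. One small observation you could add: the foldedness of $A'$ is actually a consequence of the defining relation rather than an independent requirement, so uniqueness holds even among arbitrary (not a priori folded) functions satisfying $A'(g)=A(g+J)$, which is exactly what you showed.
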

\noindent Given $q_1,\cdots, q_k \in \Pe^n_{3(p-1)}$, let
$$ J(q_1,\dots,q_k): = \left\{ \sum_i r_i q_i : r_i \in \Pe^n_{(p-1)(d-3)} \right\}.$$
The following lemma shows that if a function is folded over
$J=J(q_1,\dots,q_k)$, then it cannot have weight on small support
characters that are non-zero on $J$ (this is a generalization of the
corresponding lemma in \cite{DinurG2013} to arbitrary fields).
\begin{lemma}
  \label{lem:goodsupport}Let $\beta \in \mathfrak F_n$ is such that
  $|\supp(\beta)| < p^{d-3}$, and there exists $x \in \supp(\beta)$
  with $q_i(x) \neq 0$ for some $i$. Then if $A:\Pe^{n}_d\rightarrow
  \C$ is folded over $J=J(q_1,\dots,q_k)$, then $\widehat{A}(\beta) =
  0$.
\end{lemma}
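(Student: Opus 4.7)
The plan is to use the standard folding argument, exploiting that $A$ is invariant under shifts by elements of $J$, and then to find a specific $j \in J$ such that $\chi_\beta(j)\neq 1$. (We interpret $A:\Pe^n_{(p-1)d}\to\C$ since $J\subseteq \Pe^n_{(p-1)d}$; the bound $|\supp(\beta)|<p^{d-3}$ together with the existence of $x\in \supp(\beta)$ with $q_{i^*}(x)\neq 0$ will provide such a $j$.)

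First I would set up the folding identity. Fix any $j\in J\subseteq \Pe^n_{(p-1)d}$; since $\Pe^n_{(p-1)d}$ is a $\F_p$-vector space, the change of variables $g\mapsto g+j$ preserves uniform distribution. Combining this with $A(g+j)=A(g)$ (folding) and $\chi_\beta(g+j)=\chi_\beta(g)\chi_\beta(j)$ gives
\[
\widehat{A}(\beta)=\E_{g}\bigl[A(g)\overline{\chi_\beta(g)}\bigr]=\overline{\chi_\beta(j)}\,\widehat{A}(\beta),
\]
so it suffices to exhibit $j\in J$ with $\chi_\beta(j)\neq 1$, i.e.\ $\langle \beta,j\rangle\not\equiv 0\pmod p$.

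Next I would look for $j$ of the special form $j=r\,q_{i^*}$, where $i^*$ is the index guaranteed by hypothesis (so there is $x^*\in \supp(\beta)$ with $q_{i^*}(x^*)\neq 0$), and $r\in \Pe^n_{(p-1)(d-3)}$ is to be chosen. Then $\langle \beta,r q_{i^*}\rangle=\langle \beta q_{i^*},r\rangle$, so the problem reduces to showing that the function $\gamma:=\beta\cdot q_{i^*}$ is not orthogonal to all of $\Pe^n_{(p-1)(d-3)}$, that is, $\gamma\notin(\Pe^n_{(p-1)(d-3)})^\perp=\Pe^n_{(p-1)n-(p-1)(d-3)-1}$ by Lemma~\ref{lem:dual}. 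Observe that $\gamma(x^*)=\beta(x^*)q_{i^*}(x^*)\neq 0$, so $\gamma$ is non-zero, and $|\supp(\gamma)|\leq |\supp(\beta)|<p^{d-3}$.

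The main (and essentially only) technical step is to rule out $\gamma\in \Pe^n_{(p-1)n-(p-1)(d-3)-1}$ using the small support of $\gamma$. Assuming such containment, Lemma~\ref{lem:SZ} (Schwartz--Zippel, applied to the unique individual-degree-$\leq p-1$ representative) would yield
\[
|\supp(\gamma)|\geq p^n\cdot p^{-\bigl((p-1)n-(p-1)(d-3)-1\bigr)/(p-1)}=p^{(d-3)+1/(p-1)}>p^{d-3},
\]
contradicting the support bound on $\gamma$. Therefore some $r\in \Pe^n_{(p-1)(d-3)}$ satisfies $\langle \gamma,r\rangle\neq 0$, the corresponding $j=rq_{i^*}\in J$ has $\chi_\beta(j)\neq 1$, and the opening identity forces $\widehat{A}(\beta)=0$, completing the proof. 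The only subtlety I anticipate is bookkeeping with the individual-degree reduction needed before invoking Schwartz--Zippel, but this is routine given that Lemma~\ref{lem:SZ} is already stated in the form we need.
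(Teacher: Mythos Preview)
Your proof is correct and follows essentially the same approach as the paper. Both arguments exhibit an element $j=rq_{i^*}\in J$ with $\langle\beta,j\rangle\neq 0$ and then use the standard folding identity. The only cosmetic difference is that the paper constructs $r$ explicitly via the interpolation Lemma~\ref{lem:interpol} (taking $r$ to vanish on $\supp(\beta)\setminus\{x\}$ and be nonzero at $x$), whereas you argue non-constructively that $\gamma=\beta q_{i^*}$ cannot lie in $(\Pe^n_{(p-1)(d-3)})^\perp$ by Schwartz--Zippel; since Lemma~\ref{lem:interpol} is itself proved by exactly that duality-plus-Schwartz--Zippel step, the two arguments are the same at bottom.
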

\begin{proof}
Construct a polynomial $r$ which is zero at all points in support of $\beta$ except at $x$. From \lref[Lemma]{lem:interpol}, its possible to construct such a polynomial of degree at most $(p-1)(d -3)$. Then we have that $rq_i \in J$ and $\langle\beta,rq_i\rangle \neq 0$. Now
\begin{align*}
\E_h\left[A(h)\chi_\beta(h)\right] &=\frac{1}{p} \E_h\left[ A(h)\chi_\beta(h) + A(h+rq_i)\chi_\beta(h+rq_i)+\cdots + A(h+(p-1)rq_i)\chi_\beta(h+(p-1)rq_i)\right]\\
 &=\frac{1}{p}\E_h\left[ A(h)\chi_\beta(h) + A(h)\chi_\beta(h+rq_i) +\cdots+ A(h)\chi_\beta(h+(p-1)rq_i) \right]\\
&=\frac{1}{p}\E_h\left[  A(h)\chi_\beta(h)(1+\chi_\beta(rq_i)+\cdots+\chi_\beta((p-1)rq_i)) \right]\\
&=0 \qquad \text{[since $\chi_\beta(rq_i) \neq 1$]}\qedhere
\end{align*}
\end{proof}

\section{Correlation with a random square} \label{sec:test-analysis}

In this section, we analyze the quantity
$$\langle \beta, p^2 \rangle,$$
where $p \in \Pe^n_{d}$ is chosen uniformly at random and
$\beta:\F_3^n \rightarrow \F_3$ is a fixed function having distance
exactly $\Delta$ from $(\Pe^{n}_{2d})^\perp=\Pe^n_{2n-2d-1}$. 

Throughout this section, we work over the field $\F_3$. For
$a\in \N^n$, let $|a| := \sum_i a_i$ and $x^a$ denote the monomial
$\prod_{i}x_i^{a_i}$. Over $\F_3$, the individual degrees are at most
$2$ (since $x^3 \equiv x$). Hence, we assume wlog. that the coefficient vector
$a \in \{0,1,2\}^n$. In this notation, $p(x) = \sum_{ |a|\leq d}p_a x^a$ where $p_a$ are chosen independently and
uniformly at random from $\F_3$. For $x \in \F_3^n$, let $e_x$ be the
column vector of evaluation of all degree $d$ monomials at $x$, i.e.,
$e_x := (x^a)_{|a| \leq d}$. Then $p(x) = p^T e_x $ where $p$ is now
thought of as the column vector $(p_a)_{|a| \leq d}$ and hence,
$p^2(x) = (p^Te_x)^2= p^T(e_x e^T_x)p$.
$$ \langle \beta, p^2 \rangle = \sum_x \beta(x)\left(p^Te_xe_x^Tp\right) = p^T\left(\sum_x \beta(x)e_x e_x^T\right)p.$$
We are thus, interested in the quadratic form represented by the
matrix $Q^\beta :=\sum_x \beta(x)e_x^Te_x$. Observe that all $\beta$
belonging to the same coset in $\Pe^n_{2n}/\Pe^n_{2n-2d-1}$ have the
same value for $\langle \beta, p^2 \rangle$ and the matrix
$Q^\beta$. Hence, by \lref[Lemma]{lem:fourier}, we
might wlog. assume that $\beta$ satisifies $\supp(\beta) = \Delta$.
The following lemma (an easy consequence of \cite[Theorem
6.21]{LidlN1997}), shows that it suffices to understand the rank of $Q^\beta$. 
\begin{lemma}
\label{lem:rank-dist}
Let $A$ be a $n\times n$, symmetric matrix with entries from $\F_3$.
The statistical distance of the random variable $p^TAp$ from uniform
is $\exp(-\Omega(\rank(A)))$. 
\end{lemma}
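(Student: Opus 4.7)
The plan is to reduce the problem to a one-dimensional character computation via diagonalization of the quadratic form. Since $\F_3$ has odd characteristic, any symmetric matrix $A$ admits a congruence diagonalization: there exists an invertible $P \in \F_3^{n\times n}$ such that $P^T A P = D$, where $D$ is diagonal and the number of nonzero entries on its diagonal equals $r := \rank(A)$. This is the classical structure theorem for symmetric bilinear forms in odd characteristic, and it is essentially what \cite[Theorem~6.21]{LidlN1997} provides.

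Next, I would perform the change of variable $q := P^{-1} p$. Since $P$ is invertible and $p$ is uniform over $\F_3^n$, $q$ is also uniform, and
$$p^T A p \;=\; q^T D q \;=\; \sum_{i=1}^{r} d_i \, q_i^2,$$
where the nonzero diagonal entries $d_1,\dots,d_r \in \F_3^\times$ and the coordinates $q_1,\dots,q_r$ are independent uniform elements of $\F_3$. So the distribution of $p^T A p$ is a sum of $r$ independent $\F_3$-valued random variables $X_i := d_i q_i^2$, each taking value $0$ with probability $1/3$ and value $d_i \in \{1,2\}$ with probability $2/3$.

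I would then carry out a standard Fourier bound on $\F_3$. For the nontrivial character $\chi(t) = \omega^t$ with $\omega = e^{2\pi i/3}$, a direct computation gives
$$\bigl|\E[\chi(X_i)]\bigr| \;=\; \Bigl|\tfrac{1}{3} + \tfrac{2}{3}\omega^{d_i}\Bigr| \;=\; \tfrac{1}{\sqrt{3}},$$
since $d_i \neq 0$ forces $\omega^{d_i}$ to be a nontrivial cube root of unity. By independence of the $X_i$, for every nontrivial character $\chi$ of $\F_3$,
$$\bigl|\E[\chi(p^T A p)]\bigr| \;=\; \prod_{i=1}^{r}\bigl|\E[\chi(X_i)]\bigr| \;\leq\; 3^{-r/2}.$$
Finally, applying the standard Fourier bound relating statistical distance to the largest nontrivial Fourier coefficient (or equivalently Parseval plus Cauchy--Schwarz over the two nontrivial characters of $\F_3$) yields
$$d_{\mathrm{TV}}\bigl(p^T A p,\; \mathrm{Unif}(\F_3)\bigr) \;\leq\; 2 \cdot 3^{-r/2} \;=\; \exp(-\Omega(\rank(A))),$$
which is the claimed bound.

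I do not anticipate a serious obstacle here: the only two ingredients that need attention are (i) verifying the congruence diagonalization over $\F_3$ (which requires only odd characteristic), and (ii) computing the single character value $|\tfrac{1}{3}+\tfrac{2}{3}\omega^{d_i}| = 1/\sqrt{3}$. Both are routine, so the main content of the lemma is really just identifying that once diagonalized, the quadratic form becomes a sum of $r$ independent, identically biased $\F_3$-valued contributions whose characteristic functions multiply to an exponentially small quantity in $r$.
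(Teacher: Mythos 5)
Your proof is correct, and it is essentially the route the paper has in mind (the paper gives no proof, only citing Lidl--Niederreiter's Theorem~6.21, which gives the exact number of solutions of a diagonal quadratic equation over a finite field via the same Gauss-sum computation that underlies your bound $\lvert \tfrac13 + \tfrac23\omega^{d_i}\rvert = 3^{-1/2}$). The one ingredient worth stating explicitly is that congruence diagonalization preserves rank (rank is invariant under $A \mapsto P^T A P$ for invertible $P$), so the diagonal form indeed has exactly $r=\rank(A)$ nonzero entries; with that noted, the change of variables, independence factorization of the characteristic function, and the standard passage from a $3^{-r/2}$ bound on the nontrivial Fourier coefficients to a $\exp(-\Omega(r))$ bound on statistical distance are all sound.
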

In the next sequence of lemmas, we relate $\rank(Q^\beta)$ to
$\Delta$.  In particular, we show that $\rank(Q^\beta)$ is equal to $\Delta$ if
$\Delta \leq 3^{d/2}$ and is exponential in $d$ otherwise.  Recall
that over $\F_3$, $\Pe^n_{2n}$ is the set of all function from
$\F_3^n$ to $\F_3$ and  $\left(\Pe^{n}_{2d}\right)^\perp =
\Pe^{n}_{2d-2d-1}$. 

\begin{lemma}
\label{lem:rank-upper}
$\rank(Q^\beta) \leq \Delta$.
\end{lemma}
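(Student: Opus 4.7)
The statement is essentially a rank-of-sum observation, so the plan is short. The plan is to write $Q^\beta$ as a sum of rank-one matrices, one per element of the support of $\beta$, and apply subadditivity of rank.

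First I would use the reduction already made just before the lemma: since $\beta$ and $\beta'$ in the same coset of $(\Pe^n_{2d})^\perp = \Pe^n_{2n-2d-1}$ give rise to the same matrix $Q^\beta$, I may assume $|\supp(\beta)| = \Delta(\beta,(\Pe^n_{2d})^\perp) = \Delta$. Then I would rewrite
\[
Q^\beta \;=\; \sum_{x \in \F_3^n} \beta(x)\, e_x e_x^T \;=\; \sum_{x \in \supp(\beta)} \beta(x)\, e_x e_x^T.
\]
Each summand $\beta(x)\, e_x e_x^T$ is an outer product of a single column vector with itself (scaled by a field element), so it has rank at most $1$. By subadditivity of rank, $\rank(Q^\beta)$ is at most the number of non-zero summands, which is $|\supp(\beta)| = \Delta$.

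No obstacle is expected here; the only small thing worth double-checking is that the WLOG reduction on the support of $\beta$ is already justified by the paragraph preceding the lemma (via \lref[Lemma]{lem:fourier} and the fact that $Q^\beta$ depends only on $\beta$ modulo $(\Pe^n_{2d})^\perp$). The harder direction—showing that $\rank(Q^\beta)$ is exponentially large in $d$ when $\Delta > 3^{d/2}$—is the one that will require the Haramaty–Shpilka–Sudan Reed–Muller tester, but that is a separate lemma and not needed here.
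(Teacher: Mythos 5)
Your proof is correct and matches the paper's argument exactly: after the WLOG reduction to $|\supp(\beta)| = \Delta$, both write $Q^\beta = \sum_{x \in \supp(\beta)} \beta(x)\, e_x e_x^T$ as a sum of $\Delta$ rank-one matrices and invoke subadditivity of rank. Nothing to add.
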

\begin{proof}
By assumption, $\beta$ satisfies $\Delta = \supp(\beta)$. The lemma
follows from that fact that $e_xe_x^T$ are rank one matrices and $Q^\beta=
\sum_x \beta(x) e_xe_x^T$.
\end{proof}

\begin{lemma}
\label{lem:small-dist-rank}
If  $\Delta < 3^{d/2}$, then $\rank(Q^\beta) = \Delta$.
\end{lemma}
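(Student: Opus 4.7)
The upper bound $\rank(Q^\beta) \le \Delta$ has been handled in \lref[Lemma]{lem:rank-upper}, so the task is to establish the matching lower bound under the hypothesis $\Delta < 3^{d/2}$. The plan is to write $Q^\beta = E D E^T$, where $E$ is the matrix whose columns are the evaluation vectors $\{e_x : x \in \supp(\beta)\}$ and $D$ is the diagonal matrix with entries $\beta(x)$ for $x \in \supp(\beta)$. Since $\beta(x) \ne 0$ on its support, $D$ is invertible, so it suffices to show $\rank(E) = \Delta$, i.e., that the $\Delta$ columns $\{e_x : x \in \supp(\beta)\}$ are linearly independent over $\F_3$.

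Next, I would translate linear independence of the $e_x$'s into a statement about dual polynomials. Any $\F_3$-linear dependence $\sum_{x \in \supp(\beta)} c_x e_x = 0$ is equivalent, by reading off each coordinate (one per degree-$d$ monomial) and extending by linearity, to the condition $\sum_{x} c(x) q(x) = 0$ for every $q \in \Pe^n_d$, where $c : \F_3^n \to \F_3$ is the function with $c(x) = c_x$ on $\supp(\beta)$ and $c(x) = 0$ elsewhere. In other words, a nontrivial dependence is exactly a nonzero $c \in (\Pe^n_d)^\perp$ with $\supp(c) \subseteq \supp(\beta)$. By \lref[Lemma]{lem:dual}, $(\Pe^n_d)^\perp = \Pe^n_{2n-d-1}$, so the problem reduces to showing no nonzero polynomial of degree at most $2n-d-1$ can be supported inside a set of size $\Delta < 3^{d/2}$.

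The final step is a direct application of the Schwartz-Zippel-type bound in \lref[Lemma]{lem:SZ}: any nonzero polynomial of degree at most $2n-d-1$ over $\F_3$ (with individual degrees at most $2$) is nonzero on a fraction of inputs at least $3^{-(2n-d-1)/2}$, giving a support of size at least $3^{n} \cdot 3^{-(2n-d-1)/2} = 3^{(d+1)/2} > 3^{d/2} > \Delta$. This contradicts $\supp(c) \subseteq \supp(\beta)$, so $c$ must be the zero function, the $e_x$'s are linearly independent, and $\rank(Q^\beta) = \Delta$.

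I do not anticipate a serious obstacle here: once the reduction to the dual-polynomial support bound is set up, everything follows from the two preliminary lemmas already stated. The only mild care needed is in the bookkeeping to make sure that $c$ as defined genuinely sits in $(\Pe^n_d)^\perp$ (which is immediate from the definition of $e_x$ and linearity of the inner product), and in verifying the numerical inequality $3^{(d+1)/2} > \Delta$, which is tight with the hypothesis.
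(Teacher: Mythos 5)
Your proposal is correct and follows essentially the same route as the paper: the heart of the argument in both cases is that the evaluation vectors $\{e_x : x \in \supp(\beta)\}$ are linearly independent, proved by identifying a dependence with a nonzero element of $(\Pe^n_d)^\perp = \Pe^n_{2n-d-1}$ supported on too few points, contradicting \lref[Lemma]{lem:SZ}. The paper phrases the passage from independence to the rank bound as a kernel-containment argument ($v \in \ker(Q^\beta) \Rightarrow e_x^T v = 0$ for all $x \in \supp(\beta)$), while you package the same fact as $Q^\beta = E D E^T$ with $E$ of full column rank and $D$ invertible; one small point worth noting is that $\rank(EDE^T) = \rank(E)$ is not a blanket consequence of $D$ being invertible alone — it does require the full-column-rank hypothesis on $E$, which you establish, so the logic is sound but the "it suffices" phrasing slightly overstates what invertibility of $D$ gives you on its own.
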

\begin{proof}
  By assumption, $\beta$ satisifies $\Delta=\supp(\beta)$ and $Q^\beta = \sum_x \beta(x) e_xe_x^T$. Since
  $(\Pe^n_d)^\perp = \Pe^n_{2n-d-1}$ and any non-zero polynomial with
  degree $2n-d-1$ has support at least $3^{d/2}$
  (\lref[Lemma]{lem:SZ}), any $\lceil 3^{d/2} \rceil - 1$ vectors
  $e_x$ are linearly independent.  In particular, the $\Delta$ vectors
  $e_x$ for $x$ in $\supp(\beta)$ are linearly independent. Consider
  any non-zero $v$ in the kernel of the matrix $Q^\beta$. The linear
  independence of $e_x$'s gives that $e_x^Tv=0$ for all $x \in
  \supp(\beta)$. Hence, the kernel of $Q^\beta$ resides in a
  $\Delta$-codimensional space which implies that $\rank(Q^\beta) =
  \Delta$.
\end{proof}
We conjecture that \lref[Lemma]{lem:small-dist-rank} holds for
larger values of $\Delta$, but for our purposes we only need a lower
bound on the rank when $\Delta \geq 3^{d/2}$.
\begin{lemma}
\label{lem:far-rank}
There exists a constant $d_0$ such that if $d>d_0$ and $\Delta > 3^{d/2}$ then
$\rank(Q^\beta)\geq 3^{d/9}$.
\end{lemma}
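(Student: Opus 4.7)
The plan is to prove the contrapositive: assuming $\rank(Q^\beta) < 3^{d/9}$, I would derive $\Delta(\beta, \Pe^n_{2n-2d-1}) \le 3^{d/2}$, contradicting the hypothesis. This follows the approach of Dinur-Guruswami~\cite{DinurG2013} for the $\F_2$ analog underlying their Theorem~\ref{thm:DG}, with the Bhattacharyya~\etal\ Reed-Muller tester analysis replaced by the corresponding Haramaty-Shpilka-Sudan (HSS) result~\cite{HaramatySS2013} valid over general prime fields.

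First, I would characterize the kernel $K := \ker(Q^\beta)$ concretely. Expanding $(Q^\beta p)_a = \sum_x \beta(x) x^a p(x) = \langle \beta p, x^a\rangle$ for each multi-index $a$ with $|a| \le d$, we see that $p \in K$ iff $\beta p$ is orthogonal to every monomial of degree at most $d$, equivalently $\beta p \in (\Pe^n_d)^\perp = \Pe^n_{2n-d-1}$. The rank hypothesis thus reads: a subspace $K \subseteq \Pe^n_d$ of codimension less than $3^{d/9}$ satisfies $\beta \cdot K \subseteq \Pe^n_{2n-d-1}$. Observe that if $\beta$ were actually in $\Pe^n_{2n-2d-1}$ then $\beta \cdot \Pe^n_d \subseteq \Pe^n_{2n-d-1}$ and $K = \Pe^n_d$; so a near-full $K$ reflects an ``approximate membership'' of $\beta$ in $\Pe^n_{2n-2d-1}$.

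The heart of the proof is to leverage this large kernel as a family of multiplicative Reed-Muller tests for $\beta \in \Pe^n_{2n-2d-1}$ that all pass, and promote this local certification to a global closeness bound via HSS's robust testing theorem over $\F_3$. Concretely, I would pick a random low-dimensional affine restriction on which HSS's bounds are strongest, use polynomials drawn from $K$ to certify that $\beta$'s restriction behaves like a low-degree polynomial on most such restrictions, and invoke HSS's soundness to stitch these local certifications into a global approximant in $\Pe^n_{2n-2d-1}$ within Hamming distance $3^{d/2}$ of $\beta$.

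The main obstacle is the quantitative calibration. The exponent $1/9$ (as opposed to the tight $1/2$ in Lemma~\ref{lem:small-dist-rank}) reflects the slackness incurred when invoking the inverse direction of HSS's testing theorem, rather than the sharp Schwartz-Zippel bound used in the small-support regime. Balancing the $3^{d/9}$ codimension budget against HSS's testing parameters, and choosing $d_0$ large enough to absorb lower-order constants in the HSS bounds, should deliver the stated rank lower bound for all $d > d_0$.
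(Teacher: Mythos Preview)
Your kernel characterization is correct and matches the paper's first step: $p \in \ker(Q^\beta)$ iff $\beta p \in \Pe^n_{2n-d-1}$, so $\rank(Q^\beta)$ is the codimension in $\Pe^n_d$ of $B^n_{d,d}(\beta) := \{p \in \Pe^n_d : \beta p \in \Pe^n_{2n-d-1}\}$. But the remainder of your plan has a genuine gap. You propose to read the large kernel as ``a family of multiplicative Reed-Muller tests'' that $\beta$ passes and then invoke HSS soundness directly. However, HSS's robust testing theorem concerns \emph{restrictions of $\beta$ to affine subspaces} being close to low degree, not the algebraic condition $\beta p \in \Pe^n_{2n-d-1}$ for abstract $p\in\Pe^n_d$. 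You never explain how a codimension-$<3^{d/9}$ subspace $K$ with $\beta K \subseteq \Pe^n_{2n-d-1}$ produces many affine subspaces on which $\beta$ looks low-degree; one would need $K$ to contain many products of affine forms (so they act as subspace indicators), and this is neither argued nor obviously true. The phrase ``pick a random low-dimensional affine restriction \dots\ use polynomials drawn from $K$ to certify'' is where the proof is missing its main idea.

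The paper uses HSS in a quite different and more concrete way. It does not test $\beta$; instead it invokes HSS (Claim~\ref{conj:hyperplane}) only to guarantee that if $\beta$ is $D$-far from $\Pe^n_{2n-2d-1}$ then some single direction $\ell$ has \emph{all three} restrictions $\beta|_{\ell=c}$ still $D/27$-far. After a change of basis to $\ell=x_n$, writing $q = (x_n^2-1)r + (x_n-1)s + t$ decomposes the condition $\beta q \in \Pe^n_{2n-d-1}$ into three independent conditions on $r,s,t$ in one fewer variable, giving the exact recursion
\[
\Phi_{d,k}(D) \ \ge\ \Phi_{d-1,k}(D/27) + \Phi_{d-1,k-1}(D/27) + \Phi_{d-1,k-2}(D/27)
\]
for the minimum possible codimension $\Phi_{d,k}(D)$ of $B^n_{d,k}$. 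Unrolling this $d/6-2$ times (so the distance parameter stays above the HSS threshold) and applying the base case $\Phi_{d,k}(1)\ge 1$ yields $\rank(Q^\beta) \ge 3^{d/6-2} \ge 3^{d/9}$. Thus the exponent $1/9$ arises from the depth of this recursion and the factor-$27$ loss per step, not from slack in a testing soundness bound as you suggest.
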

\begin{proof}
  The proof of this theorem is similar to the proof of \cite[Theorems
  15,17]{DinurG2013} for the $\F_2$ case and we follow it step by
  step.  Define $B^n_{d,k}(\beta) := \left\{ q \in \Pe^n_k : q\beta
    \in \Pe^n_{2n-2d-1 +k} \right\}$.
\begin{claim}$ \kernal(Q^\beta) = B^n_{d,d}(\beta)$.\end{claim}
\begin{proof}
The matrix $Q^\beta$ satisfies that $Q^\beta(a,b) = \langle \beta ,
x^ax^b \rangle$, for all $a, b \in \{0,1,2\}^n, |a|, |b|\leq d$. Using
this description of $Q^\beta$, we obtain the following description of $\ker(Q^\beta)$.
\begin{align*}
 (h_a)_{|a| \leq d} \in \kernal(Q^\beta)
 &\Longleftrightarrow \forall a : |a| \leq d, &\sum_{b : |b| \leq d} \left\langle
 \beta,x^ax^b\right\rangle h_b =0\\
 &\Longleftrightarrow \forall a : |a| \leq d,& \left\langle \beta,x^a\sum_{b : |b| \leq d}h_bx^b\right\rangle  =0\\
 &\Longleftrightarrow \forall a : |a| \leq d,& \langle \beta x^a,h\rangle  =0\\
 &\Longleftrightarrow \forall q \in \Pe^n_d,& \langle \beta q,h\rangle  =0\\
 &\Longleftrightarrow \forall q \in \Pe^n_d,& \langle \beta h,q\rangle  =0\\ 
 &\Longleftrightarrow \beta h \in \Pe^n_{2n-d-1} & \qquad \qquad\qedhere
\end{align*}
\end{proof}
\noindent Thus to prove \lref[Lemma]{lem:far-rank}, it suffices to
show that $\rank(Q^\beta)=\dim(\Pe^n_d/B^n_{d,d}(\beta)) \geq
3^{d/9}.$ Towards this end, we define 
\begin{equation}\label{eq:Phidk}\Phi_{d,k}(D):= \min_{n > d/2, 
  \beta \in \Pe^n_{2n}: \Delta\left(\beta,\Pe^n_{2n-2d-1}\right) > D} \dim(
\Pe^n_k/ B^n_{d,k}(\beta)).
\end{equation} In terms of $\Phi_{d,k}$, \lref[Lemma]{lem:far-rank} now reduces
to showing that $\Phi_{d,d}(3^{d/2}) \geq 3^{d/9}$. We obtain this
lower bound by recursively bounding this quantity 
. The following serves as the base case of the recursion.
\begin{claim}
\label{claim:base-case}
For $k > 2d$ , $\forall D$, $\Phi_{d,k}(D) = 0$
and for $k \leq 2d$,  $\Phi_{d,k}(1) \geq 1$.
\end{claim}
\begin{proof}
Let $\beta$ be the polynomial which attains the minimum in \eqref{eq:Phidk}.
The first part of the claim follows from the fact that if $k > 2d$ then $B^n_{d,k}(\beta)= \Pe^n_k$. 

Now for the second part. Since $\beta \notin \Pe^n_{2n-2d -1}$, there is a monomial $x^a$ with $|a| \leq 2d$ such that
$$\langle \beta, x^a \rangle \neq 0\Longleftrightarrow \langle \beta x^a, 1 \rangle \neq 0\Longleftrightarrow \beta x^a \notin \Pe^n_{2n -1}.$$
If $|a| \leq k$, $x^a \notin B^n_{d,k}(\beta)$ and we are
done. Otherwise, consider $b$ such that $b \leq a$ coordinate-wise and
$|b| = k$. Suppose $x^b \beta \in \Pe^n_{2n-2d-1 +k}$ then $x^a \beta
\in  \Pe^n_{2n-1}$ which is a contradiction. Hence, $x^b\beta \notin
\Pe^n_{2n-2d-1+k}$ and the second part of the claim follows.
\end{proof}
\noindent For the induction step, we need the following result from Haramaty, Shpilka and Sudan~\cite{HaramatySS2013}.
\begin{claim}[{\cite[Theorems 4.16, 1.7]{HaramatySS2013}}]\label{conj:hyperplane}
There exists a constant $d_0$ such that if $3^5 < \Delta < 3^{d}$, $d>d_0$ where $\beta$
is $\Delta$-far from $\Pe^n_{2n-2d-1}$, then there exists nonzero
$\ell \in \Pe^n_1$ such that $\forall c\in \F_3, \beta|_{\ell = c}$
are $\Delta/27$ far from the restriction of $\Pe^n_{2n-2d-1}$ to affine hyperplanes.
\end{claim}
\noindent See \lref[Appendix]{app:hyperplane} for a proof of
\lref[Claim]{conj:hyperplane} from Theorems~4.16 and 1.7 of \cite{HaramatySS2013}.

\begin{claim}
\label{claim:rec}
If $3^5\leq D \leq 3^{d}$ and $d>d_0$, then
$$\Phi_{d,k}(D) \geq \Phi_{d-1,k}(D/27) + \Phi_{d-1,k-1}(D/27) + \Phi_{d-1,k-2}(D/27).$$
\end{claim}
\begin{proof}
  From \lref[Lemma]{conj:hyperplane}, we get that there exists nonzero
  $\ell \in \Pe^n_1$ such that for all $c\in \F_3, \beta|_{\ell = c}$
  is $\Delta/27$ far from $\Pe^{n-1}_{2n-2d-1}$. By applying a change of
  basis, we can assume that $\ell = x_n$.

  Let $\beta = (x^2_n - 1) \gamma+ x_n \eta+ \theta$ and $q= (x_n^2-1)
  r + (x_n-1) s+ t$ where $\gamma,\eta,\theta,r,s,t$ do not depend
  on $x_n$. Note that $\theta - \gamma, \theta +\eta, \theta -\eta$
  are $D/27$ far from $\Pe^{n-1}_{2n-2d-1}$. Expanding the product $\beta q$, we have
$$\beta q = (x_n^2 -1)\left((\theta  -\gamma) r+\gamma t+\eta s -\gamma
s\right)+(x_n-1)\left((\theta  -\eta) s+\eta t\right)+(\theta +\eta) t.$$
Comparing terms, we observe that  $\beta q\in \Pe^n_{2n-2d -1 +k}$ iff the following are true:
\begin{enumerate}
\item $(\theta  -\gamma) r+\gamma t+\eta s -\gamma s \in \Pe^{n-1}_{2n-2d -1 +k -2}$
\item $(\theta  - \eta) s + \eta t \in \Pe^{n-1}_{2n-2d -1 +k -1}$
\item $(\theta  +\eta) t \in \Pe^{n-1}_{2n-2d -1 +k}$
\end{enumerate}
Since $r \in \Pe^n_{k-2}, s \in \Pe^n_{k-1}, t \in \Pe^n_k$, this is
equivalent to the following (written in reverse order):
\begin{enumerate}
\item $t \in B_{d-1,k}^{n-1}(\theta +\eta)$ 
\item $s \in -\eta t + B_{d-1,k-1}^{n-1}(\theta -\eta)$
\item $r \in \gamma s -\eta s -\gamma t + B^{n-1}_{d-1,k-2}(\theta-\gamma)$
\end{enumerate}
Since $t,s,r$ belongs to sets with the same size as
$B^{n-1}_{d-1,k}(\theta +\eta), B^{n-1}_{d-1,k-1}(\theta -\eta),
B^{n-1}_{d-1,k-2}(\theta-\gamma)$ respectively and each choice gives a
distinct element of $B^n_{d,k}(\beta)$, we get the following equality.
$$\dim(B^n_{d,k}(\beta)) = \dim(B^{n-1}_{d-1,k}(\theta +\eta)) + \dim(B^{n-1}_{d-1,k-1}(\theta -\eta)) + \dim(B^{n-1}_{d-1,k-2}(\theta-\gamma))$$
Combining this with $\dim(\Pe^n_k) = \dim(\Pe^{n-1}_k) + \dim(\Pe^{n-1}_{k-1}) +
\dim(\Pe^{n-1}_{k-2})$, we obtain
\begin{align*}
\dim(\Pe^n_k/B^n_{d,k}(\beta)) &=
\dim(\Pe^{n-1}_k/B^{n-1}_{d-1,k}(\theta +\eta)) +
\dim(\Pe^{n-1}_{k-1}/B^{n-1}_{d-1,k-1}(\theta -\eta)) +
\dim(\Pe^{n-1}_{k-2}/B^{n-1}_{d-1,k-2}(\theta-\gamma))\\
&\geq  \Phi_{d-1,k}(D/27) + \Phi_{d-1,k-1}(D/27) + \Phi_{d-1,k-2}(D/27).
\end{align*}
The last inequality follows from the fact that $\theta - \gamma, \theta +\eta, \theta -\eta$
  are $D/27$ far from $\Pe^{n-1}_{2n-2d-1} = \Pe^{n-1}_{2(n-1)-2(d-1)-1}$. Thus, proved.
\end{proof}

To prove \lref[Lemma]{lem:far-rank}, we start with
$\Phi_{d,d}(3^{d/2})$ and apply \lref[Claim]{claim:rec} recursively
$d/6 -2$ times and finally use the base case from \lref[Claim]{claim:base-case}
(this can be done as long as $d/6 - 2 \leq d/2$). This gives $\rank(Q^\beta) \geq \Phi_{d,d}(3^{d/2})\geq 3^{d/6-2}\geq 3^{d/9}$ as long as $d_{0}$ is large enough.

\end{proof}

\section{Hardness of coloring 2-colorable 8-uniform hypergraphs}\label{sec:2c8u}

We prove the theorem by a reduction from $\TSAT$ via the instances of
the repeated label cover problem obtained in
\lref[Theorem]{thm:label-cover}. Let $r\in\naturals$ be a parameter
that we will fix later and let $I(\phi)$ be an instance of
$r$-repeated label cover obtained in \lref[Theorem]{thm:label-cover}
starting from a $\TSAT$ instance $\phi$.

We denote by $G = (U,V,E)$ the underlying left and right regular
bipartite graph. For $u\in U$ and $i\in [3r]$, fix functions
$f_i^u:\{0,1\}^{3r}\rightarrow \{0,1\}$ as in
\lref[Definition]{def:label-cover}. Throughout this section, we work
over $\F_2$. For a degree parameter $d$ that we will determine later
and a vertex $u\in U$, we define the subspace $J_u$ of $\Pe^{3r}_{d}$
as follows:
\[
J_u := \left\{ \sum_{i=1}^{3r} r_i f_i^u : r_i \in \Pe^{3r}_{(d-3)} \right\}.
\]
Note that since each $f_i^u$ depends only on $3$ variables, it is a
polynomial of degree at most $3$ and hence, $J_u$ is indeed a subspace
of $\Pe^{3r}_d$. Let $N_u$ denote the cardinality of the quotient
space $\Pe^{3r}_d/J_u$.

We now define the hypergraph $H$ produced by the reduction. The
vertices of $H$ --- denoted $V(H)$ --- are obtained by replacing each
$u\in U$ by a block $\mc{B}_u$ of $N_u$ vertices, which we identify
with elements of $\Pe^{3r}_d/J_u$. Let $N$ denote $|V(H)| = \sum_{u\in
  U} N_u$.

We think of a $2$-coloring of $V(H)$ as a map from $V(H)$ to
$\F_2$. Given a coloring $A:V(H)\rightarrow \field_2$, we denote by
$A_u:\Pe^{3r}_d/J_u\rightarrow \F_2$ the restriction of $A$ to the
block $\mc{B}_u$ (under our identification of $\mc{B}_u$ with
$\Pe^{3r}_d/J_u$). Let $A_u':\Pe^{3r}_d\rightarrow \field_2$ denote
the lift of $A_u$ as defined in \lref[Fact]{fact:ideallift}.

The (weighted) edge set $E(H)$ of $H$ is specified implicitly by the following PCP verifier for the label cover instance $I(\phi)$, which expects as its input a $2$-coloring $A:V(H)\rightarrow \field_2$.\\

\noindent
{\bfseries $2$-Color $8$-Uniform Test}$(d)$
\begin{enumerate}
\item Choose a uniformly random $v\in V$ and then choose $u,w\in U$
  uniformly random neighbors of $v$ (by the right regularity of $G$,
  both $(u,v)$ and $(u,w)$ are unifom random edges in $E$). Let $\pi$
  denote $\pi_{uv}:\F_2^{3r}\rightarrow \F_2^r$ and similarly, let
  $\pi'$ be $\pi_{wv}$.
\item Choose $f \in \Pe^{r}_d$, $e_1,e_2,e_3,e_4 \in \Pe^{3r}_d$, and
  $g_1,g_2\in \Pe^{3r}_{d/4}$ and $h_1,h_2,h_3,h_4 \in
  \Pe^{3r}_{3d/4}$ independently and uniformly at random. Define
  functions $\eta_1,\eta_2,\eta_3,\eta_4 \in \Pe^{3r}_d$ as follows.
\begin{align*}
\eta_1 &:=  1+ f\circ \pi + g_1h_1,&\eta_3&:= f\circ \pi'  + g_2h_3,\\
\eta_2&:= 1+f\circ \pi + (1+g_1)h_2,&\eta_4&:=f\circ \pi' + (1+g_2)h_4.
\end{align*}
\item Accept if and only if $A_u'(e_1), A_u'(e_1 +\eta_1), A_u'(e_2),
  A_u'(e_2+\eta_2), A_w'(e_3),  A_w'(e_3+\eta_3), A_w'(e_4), A_w'(e_4+\eta_4)$ are not all equal.
\end{enumerate}

We now analyze the above test.

\begin{lemma}[Completeness] 
\label{lem:2c8u-comp}
If $\phi$ is satisfiable, then there exists a $2$-coloring
$A:V(H)\rightarrow \field_2$ such that the verifier accepts with
probability $1$. In other words, the hypergraph $H$ is $2$-colorable.
\end{lemma}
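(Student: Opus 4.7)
The plan is to exhibit an explicit honest $2$-coloring of $V(H)$ derived from a satisfying labeling of $I(\phi)$, and then to verify that for \emph{every} choice of the verifier's randomness the $8$ queried values fail to be all equal.

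Since $\phi \in \TSAT$, \lref[Theorem]{thm:label-cover} supplies a labeling $(L_U,L_V)$ satisfying every constraint of $I(\phi)$. For $u \in U$ set $a_u := L_U(u) \in \{0,1\}^{3r}$; the constraint that $L_U(u)$ satisfies every clause in $u$ means $f_i^u(a_u) = 0$ for each $i \in [3r]$. I would then define $A_u : \Pe^{3r}_d/J_u \to \F_2$ by $A_u(g + J_u) := g(a_u)$. This is well-defined: any $j \in J_u$ has the form $\sum_i r_i f_i^u$, so $j(a_u) = 0$ and the value $g(a_u)$ depends only on the coset $g + J_u$. By \lref[Fact]{fact:ideallift} the resulting lift is simply the evaluation map $A_u'(g) = g(a_u)$.

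Next, fix arbitrary verifier randomness $(u,v,w,f,e_1,\ldots,e_4,g_1,g_2,h_1,\ldots,h_4)$ and write $a := a_u$, $a' := a_w$, $b := L_V(v)$. Because $(u,v)$ and $(w,v)$ are edges satisfied by the labeling, $\pi(a) = b = \pi'(a')$, and hence $f(\pi(a)) = f(\pi'(a')) = f(b)$. Assume towards a contradiction that all $8$ values are equal; comparing the two values within each of the four pairs forces $\eta_i(a) = 0$ for $i \in \{1,2\}$ and $\eta_j(a') = 0$ for $j \in \{3,4\}$, which unpacks to
\begin{align*}
1 + f(b) + g_1(a)h_1(a) &= 0, & 1 + f(b) + (1+g_1(a))h_2(a) &= 0,\\
f(b) + g_2(a')h_3(a') &= 0, & f(b) + (1+g_2(a'))h_4(a') &= 0.
\end{align*}

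The key step — and the only place any thought is needed — is that the paired noise functions $g_1 h_1$ and $(1+g_1)h_2$ were designed so that for any value of $g_1(a) \in \F_2$ exactly one of the factors $g_1(a)$, $1+g_1(a)$ vanishes. Whichever of the first two equations corresponds to the vanishing factor therefore collapses to $1 + f(b) = 0$, forcing $f(b) = 1$. Running the same case split on $g_2(a') \in \F_2$ in the last two equations (which have constant term $0$ rather than $1$) forces $f(b) = 0$, contradicting the previous conclusion. So the $8$ values can never be all equal, the verifier accepts on every random string, and $A$ is a proper $2$-coloring of $H$.
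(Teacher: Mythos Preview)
Your proof is correct and follows essentially the same approach as the paper: define the coloring via the low-degree long code of the satisfying labels, check that it is constant on cosets of $J_u$, and then use the fact that exactly one of $g_1(a),\,1+g_1(a)$ (respectively $g_2(a'),\,1+g_2(a')$) vanishes to force incompatible values of $f(b)$. The only cosmetic difference is that the paper does a direct case split on $f(b_v)\in\{0,1\}$ whereas you phrase the same case analysis as a contradiction argument.
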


\begin{proof}
  Since $\phi$ is satisfiable, \lref[Theorem]{thm:label-cover} tells
  us that there are labelings $L_U:U\rightarrow \F_2^{3r}$ and
  $L_V:V\rightarrow \F_2^r$ such that for all $u\in U$, $L_U(u)$
  satisfies all the clauses in $U$ and moreover, for every edge
  $(u,v)\in E$, we have $\pi_{uv}(L_U(u)) = L_V(v)$. Fix such
  $L_U,L_V$. Let $a_u$ denote $L_U(u)$ for any $u\in U$ and $b_v$
  denote $L_V(v)$ for any $v\in V$.

  Now, the coloring $A:V(H)\rightarrow \F_2$ is defined to ensure that
  for each $u\in U$, its restriction $A_u$ is such that its lift $A_u'
  = \LC_d(a_u)$. Note that this makes sense since $\LC_d(a_u)$ is
  folded over $J_u$: indeed, given any $g\in\Pe^{3r}_{d}$ and $h =
  \sum_{i}r_if^u_i\in J_u$, we have $\LC_d(a_u)(g+h) = g(a_u)+h(a_u) =
  g(a_u)$ as $h(a_u)=\sum_ir_i(a_u) f^u_i(a_u) = 0$ for any satisfying
  assignment $a_u$ of the clauses corresponding to $u$.


  We now show that the verifier accepts $A$ with probability $1$. Fix
  any choices of $v\in V$ and $u,w\in U$, $f$, $e_i,h_i$ ($i\in [4]$)
  and $g_i$ ($i\in [2]$) as in the test. By the definitions of $L_U$
  and $L_V$, we must have $\pi(a_u) = \pi'(a_w) = b_v$. This implies
  that the $8$ positions in $A$ viewed by the verifier respectively
  contain the following values:
%
\begin{align*}
  e_1(a_u), \ &e_1(a_u)+1+f(b_v) + g_1(a_u)h_1(a_u),\\
  e_2(a_u), \ &e_2(a_u) +1+f(b_v) + (1+g_1(a_u))h_2(a_u),\\
 e_3(a_w), \ &e_3(a_w)+f(b_v) + g_2(a_w)h_3(a_w),\\
 e_4(a_w), \  &e_4(a_w)+f(b_v)
 + (1+g_2(a_w))h_4(a_w).
\end{align*}

If $f(b_v)=0$, then either the first two values or the third and fourth values are unequal, whereas if $f(b_v)=1$, then one of the last two pairs must be unequal. Thus, the verifier always accepts.
\end{proof}

\begin{remark}
\label{rem:doubling}
\lref[Lemma]{lem:2c8u-comp} actually yields a stronger statement. Let us group the probes of the verifier as $(e_i,e_i+\eta_i)$ for $i\in [4]$. Then, for the given coloring $A$ in \lref[Lemma]{lem:2c8u-comp} and any random choices of the verifier, there is some $i\in [4]$ such that $A$ is not constant on inputs in the $i$th group. We use this in \lref[Section]{sec:4u4c} to devise a $4$-query verifier over an alphabet of size $4$.
\end{remark}


\begin{lemma}[Soundness]
\label{lem:2c8u-sound}
Let $d\geq 8$ be a multiple of $4$ , $\delta>0$ and $\epsilon_0$ be
the constant from \lref[Theorem]{thm:label-cover}. If $\phi$ is
unsatisfiable and $H$ contains an independent set of size $\delta N$,
then $\delta^8\leq 2^{d/2}\cdot 2^{-\epsilon_0 r} + 2^{-4\cdot
  2^{-d/4}}$.
\end{lemma}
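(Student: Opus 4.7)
The plan is to assume for contradiction that $H$ has an independent set $S$ of size $\delta N$, let $A:=\mathbf{1}_S$, and work with the folded lifts $A_u',A_w':\Pe^{3r}_d\rightarrow \{0,1\}$. A Fourier identity forced by independence will let us extract a randomized labeling strategy for $I(\phi)$ whose success probability is controlled by $\delta^8$, and then Theorem~\ref{thm:label-cover} will give the claimed bound.

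Since $S$ is independent and the eight probes of the verifier form an $8$-uniform hyperedge of $H$, the product of the eight probed $\{0,1\}$-values vanishes in expectation:
\[
0 \;=\; \E\Bigl[A_u'(e_1) A_u'(e_1+\eta_1) A_u'(e_2) A_u'(e_2+\eta_2)\cdot A_w'(e_3) A_w'(e_3+\eta_3) A_w'(e_4) A_w'(e_4+\eta_4)\Bigr],
\]
the expectation being over all the random choices of the test. Expanding each $A_u',A_w'$ in the character basis of $\Pe^{3r}_d$ and using the independence and uniformity of $e_1,\ldots,e_4$ to collapse each pair $(e_i,e_i+\eta_i)$ to a single minimum-support representative in $\Lambda^{3r}_d$ yields
\[
0 \;=\; \E\Biggl[\sum_{\alpha,\beta,\gamma,\delta\in\Lambda^{3r}_d}\widehat{A}_u(\alpha)^2\widehat{A}_u(\beta)^2\widehat{A}_w(\gamma)^2\widehat{A}_w(\delta)^2\cdot\chi_\alpha(\eta_1)\chi_\beta(\eta_2)\chi_\gamma(\eta_3)\chi_\delta(\eta_4)\Biggr],
\]
where the outer expectation is over $v,u,w,f,g_1,g_2,h_1,\ldots,h_4$. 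The diagonal term $(\alpha,\beta,\gamma,\delta)=(0,0,0,0)$ equals $\E_v\bigl[\E_{u\mid v}[\widehat{A}_u(0)^4]\cdot\E_{w\mid v}[\widehat{A}_w(0)^4]\bigr]$, which, since $u,w$ are conditionally i.i.d.\ given $v$ and $\widehat{A}_u(0)$ is the density of $S$ in block $\mathcal{B}_u$, is at least $(\E_u[\widehat{A}_u(0)])^8\geq \delta^8$ by two successive applications of Jensen's inequality. Hence $\delta^8$ is bounded in absolute value by the sum of the remaining ``error'' terms, which we split in two cases according to whether every one of $\alpha,\beta,\gamma,\delta$ has minimum support at most $2^{d/2}$.

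In Case A, at least one index, say $\alpha$, is $2^{d/2}$-far from $(\Pe^{3r}_d)^\perp$. Writing $\chi_\alpha(\eta_1)=\chi_\alpha(1)\cdot\chi_\alpha(f\circ\pi)\cdot\chi_\alpha(g_1 h_1)$ and averaging first over $h_1$ and then over $g_1$, Theorem~\ref{thm:DG} gives $\E_{g_1}\bigl|\E_{h_1}\chi_\alpha(g_1 h_1)\bigr|\leq 2^{-4\cdot 2^{d/4}}$, while all other character factors have modulus at most $1$; summing the remaining squared Fourier coefficients via Parseval bounds the Case A contribution by $2^{-4\cdot 2^{d/4}}$. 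In Case B, by Lemma~\ref{lem:char-projection} the four $f$-dependent factors combine to $\chi_{\pi_2(\alpha+\beta)+\pi'_2(\gamma+\delta)}(f)$; averaging over $f\in\Pe^r_d$ forces the coset condition $\pi_2(\alpha+\beta)\equiv\pi'_2(\gamma+\delta)\pmod{\Pe^r_{r-d-1}}$. Lemma~\ref{lem:goodsupport}, applied to the foldings over $J_u$ and $J_w$, further guarantees that the support of every nonzero such $\alpha$ (resp.\ $\gamma$) consists of assignments satisfying all clauses at $u$ (resp.\ $w$). The labeling strategy for $I(\phi)$ is now: for $u$, sample $(\alpha,\beta)$ with probability proportional to $\widehat{A}_u(\alpha)^2\widehat{A}_u(\beta)^2$ and output a uniformly random element of $\supp(\alpha)\cup\supp(\beta)$ as $L_U(u)$; do the analogue at $w$; and for $v$, project through a random neighbour's label. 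Since every relevant support has size at most $2^{d/2}$, the probability of satisfying a random edge is at least $2^{-d/2}$ times the Case B Fourier mass. Since $\phi\notin\TSAT$, Theorem~\ref{thm:label-cover} bounds this success probability by $2^{-\epsilon_0 r}$, so the Case B contribution is at most $2^{d/2}\cdot 2^{-\epsilon_0 r}$. Combining the two cases yields $\delta^8\leq 2^{d/2}\cdot 2^{-\epsilon_0 r}+2^{-4\cdot 2^{d/4}}$.

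The main obstacle is the careful implementation of Case B: one must turn the single coset-consistency identity coming from the $f$-averaging, together with the $(\pi_{uv},\pi_{wv})$ projection structure, into an honest labeling whose success probability against a random label-cover edge genuinely captures the Fourier mass up to the promised $2^{-d/2}$ factor (one must, in particular, match the $u$- and $w$-side samplings through a common $L_V(v)$ using the coset identity). A secondary technical check is that degenerate hyperedges, arising when $\eta_i\in J_u$ for some $i$, contribute negligibly; this follows from the folding together with the Schwartz-Zippel type bound of Lemma~\ref{lem:SZ}.
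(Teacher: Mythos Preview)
Your overall architecture matches the paper's, but there is a genuine gap in Case~B. After averaging over $f$ you correctly obtain the constraint $\pi_2(\alpha+\beta)+\pi'_2(\gamma+\delta)\in(\Pe^r_d)^\perp$, which (together with the small-support assumption and Lemma~\ref{lem:SZ}) forces $\pi_2(\alpha+\beta)=\pi'_2(\gamma+\delta)$. However, this common value can be \emph{zero} even when $(\alpha,\beta,\gamma,\delta)\neq(0,0,0,0)$: for instance $\alpha=\beta\neq 0$ and $\gamma=\delta=0$. Your Fourier-decoding labeling only succeeds on an edge $(u,v)$ when $\pi(\supp(\alpha)\cup\supp(\beta))$ and $\pi'(\supp(\gamma)\cup\supp(\delta))$ intersect, and that intersection is guaranteed precisely when $\pi_2(\alpha+\beta)=\pi'_2(\gamma+\delta)\neq 0$. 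So the decoding bounds only this ``$\Near_1$'' sub-case, not all of Case~B; your sentence ``the probability of satisfying a random edge is at least $2^{-d/2}$ times the Case~B Fourier mass'' is not justified for the remaining ``$\Near_0$'' terms.

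The missing idea, which the paper supplies, is that every $\Near_0$ term is \emph{non-negative} and hence can be dropped from the lower side rather than bounded. Concretely, once $\pi_2(\alpha+\beta)=\pi'_2(\gamma+\delta)=0$, one has $(-1)^{\sum_x(\alpha(x)+\beta(x))}=(-1)^{\sum_y\pi_2(\alpha+\beta)(y)}=1$, and each remaining factor $\E_{h}[\chi_{\alpha\cdot g}(h)]$ lies in $\{0,1\}$ by orthogonality of characters; so $\xi(\alpha,\beta,\gamma,\delta)\geq 0$. This positivity is exactly what justifies isolating the single term $(0,0,0,0)\geq\delta^8$ and discarding the rest of $\Near_0$. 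Without it, your inequality ``$\delta^8\le|$sum of remaining terms$|$'' cannot be pushed through, because those $\Near_0\setminus\{0\}$ terms are neither killed by Theorem~\ref{thm:DG} nor captured by the label-cover decoding. (A smaller point: in Case~A, note that $g_1$ appears in both $\eta_1$ and $\eta_2$, so you should first take absolute values inside and factor over $h_1,h_2$ before averaging over $g_1$; your one-line description glosses over this, though it is easily fixed.)
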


\begin{proof}
Fix any independent set $\mc{I}\subseteq V(H)$ of size $\delta N$. Let $A:V(H)\rightarrow \{0,1\}$ be the indicator function of $\mc{I}$. For $u\in U$, let $A_u:\Pe^{3r}_d/J_u\rightarrow \{0,1\}$ denote the restriction of $A$ to the block of vertices corresponding to $u$ and let $A_u':\Pe^{3r}_d\rightarrow \{0,1\}$ be the lift of $A_u$. Note that we have $\E_{(g + J_u)\in \Pe^{3r}_d/J_u}[A_u(g)] = \E_{g\in \Pe^{3r}_d}[A_u'(g)]$ for any $u\in U$. In particular,
\begin{equation}
\label{eq:2c8u-1}
\E_{u\in U} \E_{g\in \Pe^{3r}_d}\left[A_u'(g)\right] = \E_{u\in U} \E_{(g+J_u)\in \Pe^{3r}_d/J_u}\left[A_u(g)\right]\geq \delta.
\end{equation}

Since $\mc{I}$ is an independent set, in particular it must be the case that the probability that a random edge (chosen according to the probability distribution defined on $E(H)$ by the PCP verifier) completely lies inside $\mc{I}$ is $0$. We note that another expression for this probability is given by the quantity $\E_{v\in V, u,w\in U}[Q(v,u,w)]$ where $v\in V$ and $u,w\in U$ are as chosen by the PCP verifier described above and $Q(v,u,w)$ is defined as follows:
\begin{equation*}
\label{eq:2c8u-2}
Q(v,u,w) := \avg{\substack{\substack{\eta_1,\eta_2\\ \eta_3,\eta_4}}}{\avg{\substack{e_1,e_2\\ e_3,e_4}} {\prod_{i\in [2]}A_u'(e_i)A_u'(e_i+\eta_i)A_w'(e_{i+2})A_w'(e_{i+2}+\eta_{i+2})}}.
\end{equation*}

We analyze the right hand side of the above using its Fourier expansion (see \lref[Lemma]{lem:fourier}). 
As defined in \lref[Section]{sec:ldlc}, let $\Lambda^{3r}_d$ be a set
of minimum weight coset representatives of the cosets of $(\Pe^{3r}_d)^\perp$ in $\mathfrak F_{3r}$. Standard computations yield the following:
%
\begin{align}
Q(v,u,w) 
&= \sum_{\substack{\alpha_1,\alpha_2\\\beta_1,\beta_2\in \Lambda^{3r}_d}} \underbrace{ \left(\prod_{i\in [2]}  \widehat{A_u'}(\alpha_i)^2 \widehat{A_w'}(\beta_i)^2\right) \avg{\substack{\eta_1,\eta_2\\ \eta_3,\eta_4}}{\prod_{i\in [2]}\chi_{\alpha_i}(\eta_i)\chi_{\beta_i}(\eta_{i+2})}}_{\xi_{v,u,w}(\alpha_1,\alpha_2,\beta_1,\beta_2)}.\label{eq:2c8u-3}
\end{align}

When $v,u,w$ are clear from context, we use $\xi(\alpha_1,\alpha_2,\beta_1,\beta_2)$ instead of $\xi_{v,u,w}(\alpha_1,\alpha_2,\beta_1,\beta_2)$.

We analyze the above expression by breaking it up as follows. Let
$$\Far := \{(\alpha_1,\alpha_2,\beta_1,\beta_2)\in (\Lambda^{3r}_d)^4
: \max\{\Delta(\alpha_i,\Pe^{3r}_d),\Delta(\beta_i,\Pe^{3r}_d)\} \geq
2^{d/2}\}, \ \Near := (\Lambda^{3r}_d)^4\setminus \Far.$$ 
We now make the following claim for every $v,u,w$, the proof of which is deferred to the
end of the section.

\begin{claim}
\label{clm:2c8u-far}
$\sum_{(\alpha_1,\alpha_2,\beta_1,\beta_2)\in
  \Far}|\xi(\alpha_1,\alpha_2,\beta_1,\beta_2)|\leq 2^{-4\cdot 2^{d/4}}$.
\end{claim}

Substituting in \eqref{eq:2c8u-3}, we have for any $v\in V$ and $u,w\in U$,
\begin{align}
Q(v,u,w) &\geq \sum_{(\alpha_1,\alpha_2,\beta_1,\beta_2)\in \Near} \xi(\alpha_1,\alpha_2,\beta_1,\beta_2) - \sum_{(\alpha_1,\alpha_2,\beta_1,\beta_2)\in \Far} \left|\xi(\alpha_1,\alpha_2,\beta_1,\beta_2)\right|\notag\\
&\geq \sum_{(\alpha_1,\alpha_2,\beta_1,\beta_2)\in \Near} \xi(\alpha_1,\alpha_2,\beta_1,\beta_2) - 2^{-4\cdot 2^{-d/4}}.\label{eq:2c8u-4}
\end{align}

Now fix any $(\alpha_1,\alpha_2,\beta_1,\beta_2)\in \Near$. We analyze the expectation term in $\xi(\alpha_1,\alpha_2,\beta_1,\beta_2)$ further as follows.
\begin{align}
&\avg{\substack{\eta_1,\eta_2\\ \eta_3,\eta_4}}{\prod_{i\in [2]}\chi_{\alpha_i}(\eta_i)\chi_{\beta_i}(\eta_{i+2})}\notag\\
=& \avg{\substack{g_1,g_2,f\\ h_1,\ldots,h_4}}{
\chi_{\alpha_1}(1+f\circ \pi + g_1h_1)
\chi_{\alpha_2}(1+f\circ \pi + (1+g_1)h_2)
\chi_{\beta_1}(f\circ \pi' + g_2h_3)
\chi_{\beta_2}(f\circ \pi' + (1+g_2)h_4)
}\notag\\
=& \avg{\substack{g_i,h_j}}{\prod_{i\in [2]}\chi_{\alpha_i}(1+(1+i+g_1)h_i)\chi_{\beta_i}((1+i+g_2)h_{i+2})\cdot\avg{f}{\chi_{\pi_2(\alpha_1+\alpha_2) + \pi'_2(\beta_1+\beta_2)}(f)}\ }.\label{eq:2c8u-5}
\end{align}
where $\pi_2$ and $\pi'_2$ are as defined in \lref[Lemma]{lem:char-projection}. The innermost expectation is $0$ unless $\chi_{\pi_2(\alpha_1+\alpha_2) + \pi'_2(\beta_1+\beta_2)}$ is the trivial character on $\Pe^r_d$ or equivalently, $\gamma := \pi_2(\alpha_1+\alpha_2) + \pi'_2(\beta_1+\beta_2)\in (\Pe^{r}_{d})^\perp$.

We claim that this implies that $\gamma = 0$. To see this, we observe from the definition of $\pi_2$ and $\pi'_2$ that $|\supp(\gamma)|\leq \sum_{i\in [2]}|\supp(\alpha_i)| + |\supp(\beta_i)| \leq 4\cdot 2^{d/2}$, since $(\alpha_1,\alpha_2,\beta_1,\beta_2)\in\Near$ and $|\supp(\alpha)| = \Delta(\alpha,(\Pe^{3r}_{d})^\perp)$ for $\alpha\in \Lambda^{3r}_d$. However, if $\gamma\neq 0$ and $\gamma\in (\Pe^{r}_d)^\perp$, by \lref[Lemma]{lem:SZ}, we must have $|\supp(\gamma)|\geq 2^{d} > 4\cdot 2^{d/2}$ since $d\geq 8$. This implies that $\gamma = 0$. Substituting in \eqref{eq:2c8u-5}, we get
\begin{align}
\avg{\substack{\eta_1,\eta_2\\ \eta_3,\eta_4}}{\prod_{i\in [2]}\chi_{\alpha_i}(\eta_i)\chi_{\beta_i}(\eta_{i+2})}
=& \left\{
\begin{array}{l}
\text{$0$, if $\pi_2(\alpha_1+\alpha_2) + \pi'_2(\beta_1+\beta_2)\neq 0$},\\
\text{$\avg{\substack{g_j,h_i}}{\prod_{i\in [2]}\chi_{\alpha_i}(1+(1+i+g_1)h_i)\chi_{\beta_i}((1+i+g_2)h_{i+2})}$, otherwise.}
\end{array}
\right.\label{eq:2c8u-6}
\end{align}

Substituting back in \eqref{eq:2c8u-4}, we have
\begin{align}
Q(v,u,w) &= \sum_{\substack{(\alpha_1,\alpha_2,\beta_1,\beta_2)\in \Near:\\ \pi_2(\alpha_1+\alpha_2) + \pi'_2(\beta_1+\beta_2) = 0}} \xi(\alpha_1,\alpha_2,\beta_1,\beta_2) - 2^{-4\cdot 2^{-d/4}}.\label{eq:2c8u-7}
\end{align}

We partition the terms in the above sum further into $\Near_0 := \{(\alpha_1,\alpha_2,\beta_1,\beta_2)\in \Near: \pi_2(\alpha_1+\alpha_2) = \pi'_2(\beta_1+\beta_2) = 0\}$ and $\Near_1 := \{(\alpha_1,\alpha_2,\beta_1,\beta_2)\in \Near : \pi_2(\alpha_1+\alpha_2) = \pi'_2(\beta_1+\beta_2) \neq 0\}$.


\begin{claim}
\label{clm:2c8u-near-1}
$\avg{v,u,w}{\sum_{(\alpha_1,\alpha_2,\beta_1,\beta_2)\in \Near_1}|\xi_{v,u,w}(\alpha_1,\alpha_2,\beta_1,\beta_2)|}\leq 2^{d/2} \cdot 2^{-\epsilon_0 r}$.
\end{claim}

\begin{claim}
\label{clm:2c8u-near-0}
$\avg{v,u,w}{\sum_{(\alpha_1,\alpha_2,\beta_1,\beta_2)\in \Near_0}\xi_{v,u,w}(\alpha_1,\alpha_2,\beta_1,\beta_2)}\geq \delta^8$.
\end{claim}

Assuming these claims for now, we can finish the proof of \lref[Lemma]{lem:2c8u-sound} as follows. By \eqref{eq:2c8u-7}, 
\begin{align*}
0 &= \avg{v,u,w}{Q(v,u,w)} \\
&\geq \avg{v,u,w}{\sum_{(\alpha_1,\alpha_2,\beta_1,\beta_2)\in \Near_0} \xi_{v,u,w}(\alpha_1,\alpha_2,\beta_1,\beta_2)} - \avg{v,u,w}{\sum_{(\alpha_1,\alpha_2,\beta_1,\beta_2)\in \Near_1} |\xi_{v,u,w}(\alpha_1,\alpha_2,\beta_1,\beta_2)|} - 2^{-4\cdot 2^{-d/4}}\\
&\geq \delta^8 - 2^{d/2}\cdot 2^{-\epsilon_0 r}  - 2^{-4\cdot 2^{-d/4}}.\qedhere
\end{align*}
\end{proof}

We now fill in the proofs of
Claims~\ref{clm:2c8u-far}--\ref{clm:2c8u-near-0}.

\begin{proof}[Proof of {\lref[Claim]{clm:2c8u-far}}]
Fix any $(\alpha_1,\alpha_2,\beta_1,\beta_2)\in \Far$. Conditioned on any choice of $f$, the expectation term in $|\xi(\alpha_1,\alpha_2,\beta_1,\beta_2)|$ may be bounded as follows:
\begin{align}
&\left|\avg{\substack{\eta_1,\eta_2\\ \eta_3,\eta_4}}{\prod_{i\in [2]}\chi_{\alpha_i}(\eta_i)\chi_{\beta_i}(\eta_{i+2})}\right|\notag\\ 
&= \left|\avg{\substack{g_1,g_2\\ h_1,\ldots,h_4}}{
\chi_{\alpha_1}(1+f\circ \pi + g_1h_1)
\chi_{\alpha_2}(1+f\circ \pi + (1+g_1)h_2)
\chi_{\beta_1}(f\circ \pi' + g_2h_3)
\chi_{\beta_2}(f\circ \pi' + (1+g_2)h_4)
}\right|\notag\\
&\leq \avg{g_1,g_2}{\prod_{i\in [2]}
\left|\avg{h_i}{\chi_{\alpha_i}(1+f\circ \pi + (1+i+g_1)h_i)}\right|\cdot
\left|\avg{h_{i+2}}{\chi_{\beta_i}(f\circ \pi' + (1+i+g_2)h_{i+2})}\right|}\notag\\
&= \avg{g_1,g_2}{\prod_{i\in [2]}
\left|\avg{h_i}{\chi_{\alpha_i}((1+i+g_1)h_i)}\right|\cdot
\left|\avg{h_{i+2}}{\chi_{\beta_i}((1+i+g_2)h_{i+2})}\right|}\notag\\
&\leq \avg{g_1,g_2}{\min\left\{
\left|\avg{h_i}{\chi_{\alpha_i}((1+i+g_1)h_i)}\right|, 
\left|\avg{h_{i+2}}{\chi_{\beta_i}((1+i+g_2)h_{i+2})}\right|: i\in [2]\right\}}\notag\\
&\leq \min\left\{ \avg{g_1}{\left|\avg{h_i}{\chi_{\alpha_i}((1+i+g_1)h_i)}\right|}, 
\avg{g_2}{\left|\avg{h_{i+2}}{\chi_{\beta_i}((1+i+g_2)h_{i+2})}\right|} : i\in [2] \right\}.
\label{eq:2c8u-8}
\end{align}

Note that for any $i\in [2]$, $(1+i+g_1)$ and $(1+i+g_2)$ are uniformly random elements of $\Pe^{3r}_{d/4}$ that are independent of $h_1,\ldots,h_4$. Moreover, since $(\alpha_1,\alpha_2,\beta_1,\beta_2)\in \Far$, we know that there is a $\gamma\in \{\alpha_1,\alpha_2,\beta_1,\beta_2\}$ such that $\Delta(\gamma, (\Pe^{3r}_{d})^\perp)\geq 2^{d/2}$. Therefore, by \lref[Theorem]{thm:DG}, we have
\[
\avg{g\in \Pe^{3r}_{d/4}}{\left|\avg{h\in
      \Pe^{3r}_{3d/4}}{\chi_{\gamma}(gh)}\right|}\leq 2^{-4\cdot 2^{d/4}}.
\]
Substituting the above in \eqref{eq:2c8u-8}, we obtain
\[
\left|\avg{\substack{\eta_1,\eta_2\\ \eta_3,\eta_4}}{\prod_{i\in
      [2]}\chi_{\alpha_i}(\eta_i)\chi_{\beta_i}(\eta_{i+2})}\right|
\leq 2^{-4\cdot 2^{d/2}}.
\]
Thus, we obtain
\begin{align*}
\sum_{(\alpha_1,\alpha_2,\beta_1,\beta_2)\in \Far}|\xi(\alpha_1,\alpha_2,\beta_1,\beta_2)| &\leq 
 2^{-4\cdot 2^{d/2}}\cdot \sum_{\alpha_1,\alpha_2,\beta_1,\beta_2\in \Lambda^{3r}_d} \left(\prod_{i\in [2]}\widehat{A_u'}(\alpha_i)^2 \widehat{A_w'}(\beta_i)^2\right)
\leq 2^{-4\cdot 2^{d/2}},
\end{align*}
where the last inequality follows from Parseval's identity and the fact that $|A(x)|\leq 1$ for all $x\in V(H)$.
\end{proof}

\begin{proof}[Proof of {\lref[Claim]{clm:2c8u-near-1}}]
We use a Fourier decoding argument. Formally, we sample random labelings $L_U:U\rightarrow \F_2^{3r}$ and $L_V:V\rightarrow \F_3^r$ such that
\begin{align}
\prob{(u,v)\in E, L_U,L_V}{\pi_{uv}(L_U(u)) = L_V(v)}\geq \frac{1}{2^{d/2}} \avg{v,u,w}{\sum_{(\alpha_1,\alpha_2,\beta_1,\beta_2)\in \Near_1}|\xi_{v,u,w}(\alpha_1,\alpha_2,\beta_1,\beta_2)|}.\label{eq:2c8u-9}
\end{align}
Since $OPT(I(\phi))\leq 2^{-\epsilon_0 r}$, the left hand side of the above inequality is at most $2^{-\epsilon_0 r}$. This implies the claim.

Define $L_U:U\rightarrow \F_2^{3r}$ as follows: given $u\in U$, we sample a random pair $\alpha_1,\alpha_2\in\Lambda^{3r}_d$ such that $|\alpha_1|,|\alpha_2| < 2^{d/2}$ with probability proportional to $\widehat{A'_u}(\alpha_1)^2\widehat{A'_u}(\alpha_2)^2$ and set $L_U(u)$ to be $a_u$ for a uniformly random $a_u$ chosen from $\supp(\alpha_1)\cup \supp(\alpha_2)$. Since $|\alpha_1|,|\alpha_2|<2^{d/2} < 2^{d-4}$, by \lref[Lemma]{lem:goodsupport}, any $\alpha_1,\alpha_2$ sampled as above is supported only on satisfying assignments of all the clauses in $u$. 

We also define $L_V:V\rightarrow \F_2^{r}$ similarly: given $v\in V$, we sample a random neighbor $w\in U$ of $v$ and choose at random a pair $\beta_1,\beta_2\in\Lambda^{3r}_d$ such that $|\beta_1|,|\beta_2|<  2^{d/2}$ with probability proportional to $\widehat{A'_w}(\beta_1)^2\widehat{A'_w}(\beta_2)^2$ and set $L_V(v)$ to be $\pi_{wv}(a_w)$ for a uniformly random $a_w$ chosen from $\supp(\beta_1)\cup \supp(\beta_2)$.

Let $(u,v)\in E$ be a uniformly random edge of $G$ and consider the probability that $\pi_{uv}(L_U(u)) = L_V(v)$. This probability can clearly be lower bounded as follows.
\begin{align*}
\prob{(u,v)\in E, L_U,L_V}{\pi(L_U(u)) = L_V(v)} &\geq \avg{v,u,w}{\sum_{\substack{(\alpha_1,\alpha_2,\beta_1,\beta_2)\in \Near:\\ \pi(\supp(\alpha_1)\cup \supp(\alpha_2))\cap \\ \pi'(\supp(\beta_1)\cup \supp(\beta_2))\neq \emptyset}} \prod_{i\in [2]}\widehat{A_u'}(\alpha_i)^2\widehat{A_w'}(\beta_i)^2}\cdot \frac{1}{2^{d/2}},
\end{align*}
where $\pi$ denotes $\pi_{uv}$ and $\pi'$ denotes $\pi_{wv}$. Observe
that if $(\alpha_1,\alpha_2,\beta_1,\beta_2)\in \Near_1$, then
$\pi_2(\alpha_1 + \alpha_2) = \pi'_2(\beta_1+\beta_2) \neq 0$ and in
particular, $\pi(\supp(\alpha_1)\cup \supp(\alpha_2))\cap
\pi'(\supp(\beta_1)\cup \supp(\beta_2))\neq \emptyset$. Therefore, we
get the following which implies \eqref{eq:2c8u-9} and hence proves the claim.
\begin{align*}
\prob{(u,v)\in E, L_U,L_V}{\pi(L_U(u)) = L_V(v)}
&\geq \frac{1}{2^{d/2}}\avg{v,u,w}{\sum_{\substack{(\alpha_1,\alpha_2,\beta_1,\beta_2)\in \Near_1}} \prod_{i\in [2]}\widehat{A_u'}(\alpha_i)^2\widehat{A_w'}(\beta_i)^2}.\qedhere
\end{align*}
\end{proof}

\begin{proof}[Proof of {\lref[Claim]{clm:2c8u-near-0}}]
We argue below that for any $v\in V$ and its neighbours $u,w\in U$ and any $(\alpha_1,\alpha_2,\beta_1,\beta_2)\in \Near_0$, 
\begin{align}
\xi(\alpha_1,\alpha_2,\beta_1,\beta_2) \geq 0.\label{eq:2c8u-10}
\end{align}

Given \eqref{eq:2c8u-10}, we have
\begin{align*}
\avg{v,u,w}{\sum_{(\alpha_1,\alpha_2,\beta_1,\beta_2)\in \Near_0}\xi_{v,u,w}(\alpha_1,\alpha_2,\beta_1,\beta_2)} &\geq \avg{v,u,w}{\xi_{v,u,w}(0,0,0,0)} = \avg{v,u,w}{\widehat{A_u'}(0)^4 \widehat{A_w'}(0)^4}.
\end{align*}
Conditioned on $v\in V$, $u$ and $w$ are independent and randomly chosen neighbours of $v$. Thus, the above may be further lower bounded as follows. 
\begin{align*}
\avg{v,u,w}{\widehat{A_u'}(0)^4 \widehat{A_w'}(0)^4} &= \avg{v}{\left(\avg{u: (u,v)\in E}{\widehat{A_u'}(0)^4}\right)^2}
\geq \left(\avg{(u,v)\in E}{\widehat{A_u'}(0)}\right)^8 = \left(\avg{u\in U, g\in \Pe_d^{3r}}{{A_u'}(g)}\right)^8\geq
\delta^8,
\end{align*}
where the first inequality follows from repeated applications of the Cauchy-Schwarz inequality and the last from \eqref{eq:2c8u-1}.

For any $v,u,w$ and
$(\alpha_1,\alpha_2,\beta_1,\beta_2)\in \Near_0$, it remains to prove
\eqref{eq:2c8u-10} (i.e., non-negativity of
$\xi_{v,u,w}(\alpha_1,\alpha_2,\beta_1,\beta_2)$).
From
\eqref{eq:2c8u-3},  it suffices to argue the non-negativity of  
\begin{align}
\avg{\substack{\eta_1,\eta_2\\ \eta_3,\eta_4}}{\prod_{i\in [2]}\chi_{\alpha_i}(\eta_i)\chi_{\beta_i}(\eta_{i+2})} &= \avg{g_1,g_2}{\prod_{i\in [2]}\avg{h_i}{\chi_{\alpha_i}(1+(1+i+g_1)h_i)}\avg{h_{i+2}}{\chi_{\beta_i}((1+i+g_2)h_{i+2})}}\notag\\
&= \avg{g_1,g_2}{(-1)^{\sum_{x}\alpha_1(x) + \alpha_2(x)}\cdot \prod_{i\in [2]}\avg{h_i}{\chi_{\alpha_i(1+i+g_1)}(h_i)}\avg{h_{i+2}}{\chi_{\beta_i(1+i+g_2)}(h_{i+2})}},\label{eq:2c8u-11}
\end{align}
where we have used \eqref{eq:2c8u-6} for the first equality and the fact that $\chi_{\alpha}(gh) = \chi_{\alpha g}(h)$ for the second. We claim that all the terms inside the final expectation are non-negative. 

Firstly, since $(\alpha_1,\alpha_2,\beta_1,\beta_2)\in \Near_0$, we have $\pi_2(\alpha_1+\alpha_2) = 0$ and hence $(-1)^{\sum_x \alpha_1(x) + \alpha_2(x)}=(-1)^{\sum_y \pi_2(\alpha_1+\alpha_2)(y)} = 1$.  Secondly, the orthonormality of characters implies that for any $\alpha\in\mathfrak{F}_{3r}$, we have $\avg{h\in \Pe^r_{3d/4}}{\chi_{\alpha}(h)}\in \{0,1\}$ and hence non-negative. 

This shows that the right-hand side of \eqref{eq:2c8u-11} is non-negative. and hence proves \eqref{eq:2c8u-10}.
\end{proof}

\begin{proof}[Proof of {\lref[Theorem]{thm:2c8u}}]
  Given the completeness (\lref[Lemma]{lem:2c8u-comp}) and soundness
    (\lref[Lemma]{lem:2c8u-sound}), we only need to fix parameters.
Let $d = C\log r$ for a large enough constant $C\geq 8$ determined shortly. By \lref[Lemma]{lem:2c8u-sound},  if $H$ has an independent set of size $\delta N$, then $\delta^8 \leq 2^{d/2}\cdot 2^{-\epsilon_0 r} + 2^{-4\cdot 2^{-d/4}} < 2^{-\epsilon_0 r/2}$ for large enough $C>0$ and $r\in\naturals$. Hence, $H$ has no independent sets of $\delta'N$, where $\delta' = 2^{-\epsilon_0 r/16}$.

The hypergraph $H$ can be produced in time polynomial in $N = n^{O(r)}2^{r^{O(d)}} = n^{O(r)}2^{r^{O(\log r)}}$. Setting $r = 2^{\Theta(\sqrt{\log \log n})}$, we get $N = n^{2^{O(\sqrt{\log \log n})}}$, and $\delta' = 2^{-\Omega(r)} = 2^{-2^{\Theta(\sqrt{\log \log n})}}= 2^{-2^{\Theta(\sqrt{\log \log N})}}$, proving \lref[Theorem]{thm:2c8u}.
\end{proof}
%
%

\section{Hardness of coloring 4-colorable 4-uniform hypergraphs}\label{sec:4u4c}

This construction is motivated by \lref[Remark]{rem:doubling} above. We construct a new verifier each of whose queries correspond to two queries of the verifier described above. Let $I(\phi)$, $G = (U,V,E)$, and $J_u$ ($u\in U$) be as defined in \lref[Section]{sec:2c8u}.

Now the vertices of the hypergraph $H$ produced by the reduction denoted by $V(H)$ are obtained by replacing each $u\in U$ by a block $\mc{B}_u$ of $N_u^2$ vertices, which we identify with elements of $\Pe^{3r}_d/J_u\times \Pe^{3r}_d/J_u$. Let $N$ denote $|V(H)| = \sum_{u\in U} N_u^2$. 

We think of a $4$-coloring of $V(H)$ as a map from $V(H)$ to the $4$-element set $\F_2\times \F_2$. Given a coloring $A:V(H)\rightarrow \field_2\times \field_2$, we denote by $A_u:\Pe^{3r}_d/J_u \times \Pe^{3r}_d/J_u \rightarrow \F_2\times \F_2$ the restriction of $A$ to the block $\mc{B}_u$. Let $A_u':\Pe^{3r}_d\times\Pe^{3r}_d\rightarrow \field_2\times\field_2$ denote the lift of $A_u$ as defined by $A_u'(g_1,g_2) := A_u(g_1+J_u,g_2+J_u)$.

The verifier is defined as follows. The verifier is identical
to the verifier in \lref[Section]{sec:2c8u} but for the doubling of queries.\\

\noindent
{\bfseries $4$-Color $4$-Uniform Test}$(d)$
\begin{enumerate}
\item Choose a uniformly random $v\in V$ and then choose $u,w\in U$ uniformly random neighbors of $v$. Let $\pi$ denote $\pi_{uv}:\F_2^{3r}\rightarrow \F_2^r$ and similarly, let $\pi'$ be $\pi_{wv}$.
\item Choose $f \in \Pe^{r}_d$, $e_1,e_2,e_3,e_4 \in \Pe^{3r}_d$, and
  $g_1,g_2\in\Pe^{3r}_{d/4}$ and $h_1,h_2,h_3,h_4 \in \Pe^{3r}_{3d/4}$
  independently and uniformly at random. Define
  functions $\eta_1,\eta_2,\eta_3,\eta_4 \in \Pe^{3r}_d$ as follows.
\begin{align*}
\eta_1 &:=  1+ f\circ \pi + g_1h_1,&\eta_3&:= f\circ \pi'  + g_2h_3,\\
\eta_2&:= 1+f\circ \pi + (1+g_1)h_2,&\eta_4&:=f\circ \pi' + (1+g_2)h_4.
\end{align*}
\item Accept if and only if $A_u'(e_1,e_2),A_u'(e_1 +\eta_1,e_2+\eta_2),A_w'(e_3,e_4),A_w'(e_3+\eta_3,e_4+\eta_4)$ are not all equal.
\end{enumerate}

The analysis of the above test closely follows that of the $2$-color $8$-uniform test. 

\begin{lemma}[Completeness] 
\label{lem:4c4u-comp}
If $\phi$ is satisfiable, then there exists a $4$-coloring $A:V(H)\rightarrow \field_2\times \field_2$ such that the verifier accepts with probability $1$. In other words, the hypergraph $H$ is $4$-colorable.
\end{lemma}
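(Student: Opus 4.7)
The proof plan is to exploit the strengthened completeness property of the $8$-query test highlighted in \lref[Remark]{rem:doubling}, by constructing the $4$-coloring as the ``direct product'' of two copies of the $2$-coloring built in \lref[Lemma]{lem:2c8u-comp}.

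Concretely, let $B : V(H_{8}) \to \F_2$ denote the $2$-coloring produced in the proof of \lref[Lemma]{lem:2c8u-comp}, so that for every $u\in U$ the lift $B_u'$ equals $\LC_d(a_u)$ for the satisfying label $a_u = L_U(u)$ of the clauses in $u$. I will define $A:V(H)\to \F_2\times \F_2$ block-wise by setting, for each $u\in U$ and each vertex $(g_1 + J_u,\ g_2 + J_u)\in \Pe^{3r}_d/J_u\times \Pe^{3r}_d/J_u$,
\[
A_u\bigl(g_1 + J_u,\ g_2 + J_u\bigr) \ := \ \bigl(B_u(g_1 + J_u),\ B_u(g_2 + J_u)\bigr).
\]
This is well-defined because $B_u$ is folded over $J_u$ in each coordinate, and the lift factors as $A_u'(g_1,g_2) = (B_u'(g_1),\ B_u'(g_2))$.

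Next, I fix arbitrary random choices $(v, u, w, f, e_1,\dots,e_4, g_1, g_2, h_1,\dots,h_4)$ of the verifier and read off the four pairs it inspects:
\[
\bigl(B_u'(e_1), B_u'(e_2)\bigr),\ \bigl(B_u'(e_1+\eta_1), B_u'(e_2+\eta_2)\bigr),\ \bigl(B_w'(e_3), B_w'(e_4)\bigr),\ \bigl(B_w'(e_3+\eta_3), B_w'(e_4+\eta_4)\bigr).
\]
By \lref[Remark]{rem:doubling} applied to $B$ under the same random choices, there exists some $i\in [4]$ such that $B$ is not constant on the $i$-th group $\{e_i,\ e_i+\eta_i\}$ (located in block $\mc B_u$ if $i\in\{1,2\}$ and in $\mc B_w$ if $i\in\{3,4\}$). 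If $i\in\{1,2\}$, the first two pairs above then disagree in their $i$-th coordinate, hence are distinct; if $i\in\{3,4\}$, the last two pairs disagree in their $(i-2)$-th coordinate. In either case the four pairs are not all equal, so the verifier accepts. Since this holds for every random choice, $A$ witnesses a valid $4$-coloring of $H$.

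There is no serious obstacle: the lemma is a packaging of the $8$-query completeness into a $4$-query form over the alphabet $\F_2\times \F_2$. The only thing that needs to be verified carefully is that the strengthened statement in \lref[Remark]{rem:doubling} genuinely holds for the canonical coloring $B$, but this is precisely the case analysis on $f(b_v)\in\{0,1\}$ already carried out in the proof of \lref[Lemma]{lem:2c8u-comp}, so I will simply cite that remark.
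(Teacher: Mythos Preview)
Your proposal is correct and is exactly the approach the paper takes: the paper's proof is the single line ``Follows directly from \lref[Remark]{rem:doubling},'' and your argument is a faithful unpacking of that remark via the product coloring $A_u'(g_1,g_2)=(\LC_d(a_u)(g_1),\LC_d(a_u)(g_2))$.
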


\begin{proof}
Follows directly from \lref[Remark]{rem:doubling}.
\end{proof}

The soundness lemma requires us to perform Fourier analysis on functions $A:\Pe^{3r}_d\times \Pe^{3r}_d\rightarrow \{0,1\}$, for which we need the following easily verifiable facts.
\begin{fact}
\label{fact:prod-fourier}
Let $A:\Pe^{3r}_d\times \Pe^{3r}_d\rightarrow \C$ be any function. A non-zero function $\chi:\Pe^{3r}_d\times \Pe^{3r}_d\rightarrow\C$ is a character if $\chi(g_1+h_1,g_2+h_2) = \chi(g_1,g_2)\chi(h_1,h_2)$.
\begin{itemize}
\item $\chi:\Pe^{3r}_d\times \Pe^{3r}_d\rightarrow\C$ is a character
  if and only if there exist
  $(\alpha_1,\alpha_2)\in\mathfrak{F}_{3r}\times \mathfrak{F}_{3r}$
  such that $\chi(g_1,g_2) = \chi_{\alpha_1}(g_1)\chi_{\alpha_2}(g_2)$
  for any $g_1,g_2\in\Pe^{3r}_d\times \Pe^{3r}_d$ where
  $\chi_{\alpha_1}$ and $\chi_{\alpha_2}$ are characters of $\Pe^{3r}_d$.
\item $(\alpha_1,\alpha_2)$ and $(\beta_1,\beta_2)$ yield the same character if and only if $(\alpha_1-\alpha_2),(\beta_1-\beta_2)\in(\Pe^{3r}_d)^\perp$.
\item Folding: Fix $A:\Pe^{3r}_d\times \Pe^{3r}_d\rightarrow \C$ be
  any function folded over the subgroup $J\times J$ where $J :=
  \{\sum_{i=1}^k r_iq_i : r_i\in \Pe^{3r}_{d-3}\}$ and $q_1,\ldots,q_k
\in \Pe^{3r}_3$. Then, for any $(\alpha_1,\alpha_2)\in \mathfrak{F}_{3r}\times \mathfrak{F}_{3r}$ such that $|\alpha_j| := \Delta(\alpha_j,(\Pe^{3r}_d)^\perp) < 2^{d-3}$ for $j\in \{1,2\}$ and $\widehat{A}(\alpha_1,\alpha_2) \neq 0$, it must be the case that $\supp(\alpha_1)\cup\supp(\alpha_2)$ only contains $x$ such that $q_i(x) = 0$ for each $i\in [k]$.
\end{itemize}
\end{fact}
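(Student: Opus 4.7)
The plan is to prove the three bullets in sequence. For the first bullet, I would restrict a character $\chi$ of $\Pe^{3r}_d \times \Pe^{3r}_d$ to each factor: setting $h_2 = 0$ in the multiplicativity relation shows that $g_1 \mapsto \chi(g_1, g_2)$ is (proportional to) a character of $\Pe^{3r}_d$, and similarly in the other coordinate. Setting $g_1 = h_2 = 0$ in the multiplicativity relation yields the decomposition $\chi(g_1, g_2) = \chi(g_1, 0)\chi(0, g_2)$, and applying \lref[Lemma]{lem:fourier} to each factor identifies these with $\chi_{\alpha_1}$ and $\chi_{\alpha_2}$ respectively.

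The second bullet is then immediate: the equality of products $\chi_{\alpha_1}(g_1)\chi_{\alpha_2}(g_2) = \chi_{\beta_1}(g_1)\chi_{\beta_2}(g_2)$ for all $(g_1, g_2)$ specialises at $g_2 = 0$ to $\chi_{\alpha_1} = \chi_{\beta_1}$, and \lref[Lemma]{lem:fourier} then gives $\alpha_1 - \beta_1 \in (\Pe^{3r}_d)^\perp$; the second coordinate is symmetric. (The displayed condition as written appears to contain a typo; the intended condition is $\alpha_i - \beta_i \in (\Pe^{3r}_d)^\perp$ for $i \in \{1,2\}$.)

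The third bullet is the substantive claim, and I would prove it by replaying the folding argument of \lref[Lemma]{lem:goodsupport} in a single factor at a time. Suppose $\widehat{A}(\alpha_1, \alpha_2) \neq 0$, yet there is an $x \in \supp(\alpha_1) \cup \supp(\alpha_2)$ with $q_i(x) \neq 0$ for some $i$; say without loss of generality that $x \in \supp(\alpha_1)$. Since $|\supp(\alpha_1)| < 2^{d-3}$, \lref[Lemma]{lem:interpol} yields an $r \in \Pe^{3r}_{d-3}$ vanishing on $\supp(\alpha_1) \setminus \{x\}$ and nonzero at $x$, so that $\langle \alpha_1, rq_i \rangle = \alpha_1(x)r(x)q_i(x) \neq 0$ and $rq_i \in J$. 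The folding of $A$ in the first coordinate gives $A(g_1 + rq_i, g_2) = A(g_1, g_2)$ for all $g_1, g_2$, and a change of variables $g_1 \mapsto g_1 + rq_i$ inside $\widehat{A}(\alpha_1, \alpha_2) = \E_{g_1, g_2}[A(g_1, g_2)\overline{\chi_{\alpha_1}(g_1)\chi_{\alpha_2}(g_2)}]$ produces the identity $\widehat{A}(\alpha_1, \alpha_2) = \overline{\chi_{\alpha_1}(rq_i)}\, \widehat{A}(\alpha_1, \alpha_2)$, forcing the coefficient to vanish since $\chi_{\alpha_1}(rq_i) \neq 1$. I do not anticipate any real obstacle: since the folding acts on each factor independently, the single-coordinate averaging of \lref[Lemma]{lem:goodsupport} carries over with essentially no change, with the character and integration in the second coordinate riding along as passive parameters.
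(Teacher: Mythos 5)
Your proof is correct. The paper states this as an ``easily verifiable fact'' with no proof supplied, and your argument is precisely the natural verification: the first two bullets are the standard classification of characters of a product group (and you correctly flag the typo in the second bullet, which should read $\alpha_i - \beta_i \in (\Pe^{3r}_d)^\perp$), while the third bullet is exactly the folding argument of \lref[Lemma]{lem:goodsupport} replayed in one coordinate with the other coordinate carried along as a spectator.
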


\begin{lemma}[Soundness]
\label{lem:4c4u-sound}
Let $d\geq 8$ be a multiple of $4$ , $\delta>0$ and $\epsilon_0$ be
the constant from \lref[Theorem]{thm:label-cover}.  If $\phi$ is unsatisfiable and $H$ contains an independent set of size $\delta N$, then $\delta^4\leq 2^{d/2}\cdot 2^{-\epsilon_0 r} + 2^{-4\cdot 2^{-d/4}}$.
\end{lemma}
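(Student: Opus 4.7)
The proof mirrors \lref[Lemma]{lem:2c8u-sound} with three modifications forced by the product structure on each block. Let $A : V(H) \to \{0,1\}$ denote the indicator of an alleged independent set of size $\delta N$ and let $A_u'$ denote its lift to $\Pe^{3r}_d \times \Pe^{3r}_d$. The independent-set condition yields $\E_{v,u,w}[Q(v,u,w)] = 0$, where
$$Q(v,u,w) := \E\!\left[A_u'(e_1,e_2)\, A_u'(e_1+\eta_1,e_2+\eta_2)\, A_w'(e_3,e_4)\, A_w'(e_3+\eta_3,e_4+\eta_4)\right].$$

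The plan is to expand $A_u',A_w'$ in the Fourier basis on $\Pe^{3r}_d \times \Pe^{3r}_d$ using \lref[Fact]{fact:prod-fourier}, so that every character factors as $\chi_{\alpha_1}(g_1)\chi_{\alpha_2}(g_2)$, and then average over the four independent $e_i$'s. This collapses the eight Fourier indices into four and gives
$$Q(v,u,w) \;=\; \sum_{\alpha_1,\alpha_2,\beta_1,\beta_2 \in \Lambda^{3r}_d} \widehat{A_u'}(\alpha_1,\alpha_2)^2\, \widehat{A_w'}(\beta_1,\beta_2)^2 \cdot \E\!\left[\prod_{i\in[2]} \chi_{\alpha_i}(\eta_i)\, \chi_{\beta_i}(\eta_{i+2})\right].$$
Crucially, the inner noise expectation is literally the one appearing in the proof of \lref[Lemma]{lem:2c8u-sound}, since the $\eta_i$'s (and the sharing of $f,g_1,g_2$ among them) are defined identically. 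Thus the $\Far/\Near$ split transports verbatim: \lref[Claim]{clm:2c8u-far} bounds the $\Far$ contribution by $2^{-4 \cdot 2^{d/4}}$, and integrating out $f$ restricts $\Near$ contributions to those satisfying $\pi_2(\alpha_1+\alpha_2)+\pi'_2(\beta_1+\beta_2) \in (\Pe^{r}_d)^\perp$, which by \lref[Lemma]{lem:SZ} and the total support bound $4\cdot 2^{d/2}$ must vanish identically; this produces the decomposition $\Near = \Near_0 \sqcup \Near_1$ as in Section~4.

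For $\Near_1$ I would repeat the Fourier-decoding argument of \lref[Claim]{clm:2c8u-near-1}, now sampling pairs $(\alpha_1,\alpha_2)$ with probability proportional to $\widehat{A_u'}(\alpha_1,\alpha_2)^2$ (restricted to light pairs) and taking $L_U(u)$ to be a uniformly random element of $\supp(\alpha_1)\cup\supp(\alpha_2)$. The supporting ingredient is the product-space folding in the last bullet of \lref[Fact]{fact:prod-fourier}, which guarantees that whenever $\widehat{A_u'}(\alpha_1,\alpha_2)\neq 0$ and $|\alpha_i| < 2^{d-3}$ for $i\in\{1,2\}$, the decoded label is a satisfying assignment of the clauses in $u$. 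The $\Near_1$ overlap condition (points on which both $\pi_2(\alpha_1+\alpha_2)$ and $\pi'_2(\beta_1+\beta_2)$ are nonzero must come from a common $v$-label) then yields a labeling of $I(\phi)$ of value at least $2^{-O(d)} \cdot \E_{v,u,w}[\sum_{\Near_1} |\xi_{v,u,w}|]$, bounded by $2^{-\epsilon_0 r}$ via \lref[Theorem]{thm:label-cover}, giving the $2^{d/2}\cdot 2^{-\epsilon_0 r}$ term.

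The one genuinely new calculation is the $\Near_0$ lower bound. Non-negativity of each $\xi_{v,u,w}$ term in $\Near_0$ follows from the argument of \lref[Claim]{clm:2c8u-near-0} applied to each coordinate of the product: the sign $(-1)^{\sum_x (\alpha_1+\alpha_2)(x)}$ vanishes because $\pi_2(\alpha_1+\alpha_2)=0$, and similarly for $\beta$, while the $h_i$-averages are indicator-valued. Keeping only the zero-frequency term and applying Cauchy-Schwarz twice (exploiting that $u,w$ are conditionally i.i.d.\ given $v$) yields
$$\E_{v,u,w}\!\left[\sum_{\Near_0} \xi_{v,u,w}\right] \;\geq\; \E_{v,u,w}\!\left[\widehat{A_u'}(0,0)^2\, \widehat{A_w'}(0,0)^2\right] \;\geq\; \bigl(\E_{(u,v)\in E}[\widehat{A_u'}(0,0)]\bigr)^4 \;=\; \delta^4.$$
This is precisely where the exponent drops from $8$ in \lref[Lemma]{lem:2c8u-sound} to $4$ here: with four rather than eight probes, the indicator's mass enters to the fourth power. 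Combining the three contributions gives $0 \geq \delta^4 - 2^{d/2}\cdot 2^{-\epsilon_0 r} - 2^{-4\cdot 2^{d/4}}$, which is the claim. I expect the main obstacle to be purely bookkeeping: verifying that the product-space folding in \lref[Fact]{fact:prod-fourier} applies under the hypothesis $|\alpha_i|<2^{d/2} \leq 2^{d-3}$ (which holds for $d \geq 8$), and that the sign/positivity manipulations from \lref[Claim]{clm:2c8u-near-0} carry over coordinate-wise on the product space.
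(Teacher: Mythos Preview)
Your proposal is correct and follows essentially the same approach as the paper's proof: Fourier-expand over the product space via \lref[Fact]{fact:prod-fourier}, observe that the noise expectation $\E[\prod_i \chi_{\alpha_i}(\eta_i)\chi_{\beta_i}(\eta_{i+2})]$ is literally identical to the one in \lref[Lemma]{lem:2c8u-sound}, reuse the $\Far/\Near_0/\Near_1$ decomposition, adjust the decoding in the $\Near_1$ step to sample $(\alpha_1,\alpha_2)$ with weight $\widehat{A_u'}(\alpha_1,\alpha_2)^2$, and drop the exponent to $4$ in the $\Near_0$ lower bound. The paper makes exactly these points, including the parenthetical remark about the modified sampling distribution, so there is no substantive difference.
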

The proof of \lref[Lemma]{lem:4c4u-sound} is similar to the proof of
\lref[Lemma]{lem:2c8u-sound}. 
The parameters are set
exactly as in \lref[Theorem]{thm:2c8u} to yield
\lref[Theorem]{thm:4u4c}.

\begin{proof}[Proof of {\lref[Lemma]{lem:4c4u-sound}}]
As the proof is similar to that of of \lref[Lemma]{lem:2c8u-sound}, we
only give a proof sketch, highlighting the salient differences.

As before, fix any independent set $\mc{I}\subseteq V(H)$ of size $\delta N$. Let $A:V(H)\rightarrow \{0,1\}$ be the indicator function of $\mc{I}$. We have $\E_{u\in U} \E_{g_1,g_2\in \Pe^{3r}_d}\left[A_u'(g_1,g_2)\right] \geq \delta$.

Again, we analyze $\E_{v\in V, u,w\in U}[Q(v,u,w)]$, which gives the probability that a random edge (chosen according to the probability distribution defined on $E(H)$ by the PCP verifier) completely lies inside the independent set $\mc{I}$, and is hence $0$. Here, $Q(v,u,w)$ is defined as follows:
\begin{equation*}
\label{eq:4c4u-2}
Q(v,u,w) := \avg{\substack{\substack{\eta_1,\eta_2\\ \eta_3,\eta_4}}}{\avg{\substack{e_1,e_2\\ e_3,e_4}} {A_u'(e_1,e_2)A_u'(e_1 +\eta_1,e_2+\eta_2)A_w'(e_3,e_4)A_w'(e_3+\eta_3,e_4+\eta_4)}}.
\end{equation*}

The Fourier expansion of this expression (see
\lref[Fact]{fact:prod-fourier}) yields the following. From
\lref[Fact]{fact:prod-fourier}, we have that 
$\mc{C}'_d := \Lambda^{3r}_d\times\Lambda^{3r}_d$ gives us all the distinct characters of $\Pe^{3r}_d\times \Pe^{3r}_d$. Standard computations give us
\begin{align*}
Q(v,u,w) &= \sum_{\substack{\alpha_1,\alpha_2\\\beta_1,\beta_2\in \Lambda^{3r}_d}} \underbrace{\widehat{A_u'}(\alpha_1,\alpha_2)^2 \widehat{A_w'}(\beta_1,\beta_2)^2 \avg{\substack{\eta_1,\eta_2\\ \eta_3,\eta_4}}{\prod_{i\in [2]}\chi_{\alpha_i}(\eta_i)\chi_{\beta_i}(\eta_{i+2})}}_{\xi_{v,u,w}'(\alpha_1,\alpha_2,\beta_1,\beta_2)}.\label{eq:4c4u-3}
\end{align*}
As in \lref[Lemma]{lem:2c8u-sound}, let $\Far :=
\{(\alpha_1,\alpha_2,\beta_1,\beta_2)\in (\Lambda^{3r}_d)^4 :
\max\{\Delta(\alpha_i,\Pe^{3r}_d),\Delta(\beta_i,\Pe^{3r}_d)\} \geq
2^{d/2}\}$, $\Near := (\Lambda^{3r}_d)^4\setminus \Far$, $\Near_0 :=
\{(\alpha_1,\alpha_2,\beta_1,\beta_2)\in \Near :
\pi_2(\alpha_1+\alpha_2) = \pi'_2(\beta_1+\beta_2) = 0\}$, and
$\Near_1 := \{(\alpha_1,\alpha_2,\beta_1,\beta_2)\in \Near :
\pi_2(\alpha_1+\alpha_2) = \pi'_2(\beta_1+\beta_2) \neq 0\}$. 

Note that the expectation term in $\xi'_{v,u,w}(\alpha_1,\alpha_2,\beta_1,\beta_2)$ is \emph{exactly} as that in $\xi_{v,u,w}(\alpha_1,\alpha_2,\beta_1,\beta_2)$ in \lref[Lemma]{lem:2c8u-sound}. This means that the remaining computations can be carried out almost exactly as in \lref[Lemma]{lem:2c8u-sound}. 

The following can be proved in the same way as Claims~\ref{clm:2c8u-far}, \ref{clm:2c8u-near-1} and \ref{clm:2c8u-near-0}. 

\begin{claim}
\label{clm:4c4u-far}
For any fixed $v,u,w$, we have $\sum_{(\alpha_1,\alpha_2,\beta_1,\beta_2)\in \Far}|\xi'_{v,u,w}(\alpha_1,\alpha_2,\beta_1,\beta_2)|\leq 2^{-4\cdot 2^{-d/4}}$.
\end{claim}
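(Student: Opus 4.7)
The plan is to observe that the expectation factor in $\xi'_{v,u,w}$ is structurally identical to the expectation factor in $\xi_{v,u,w}$ from the proof of \lref[Claim]{clm:2c8u-far}, and then combine that with a Parseval bound on the bivariate Fourier coefficients of $A_u'$ and $A_w'$. Concretely, by the definition of $\xi'_{v,u,w}$ in the displayed expansion above, we may write
\[
|\xi'_{v,u,w}(\alpha_1,\alpha_2,\beta_1,\beta_2)| = \widehat{A_u'}(\alpha_1,\alpha_2)^2\, \widehat{A_w'}(\beta_1,\beta_2)^2 \cdot \left|\avg{\substack{\eta_1,\eta_2\\ \eta_3,\eta_4}}{\prod_{i\in[2]} \chi_{\alpha_i}(\eta_i)\chi_{\beta_i}(\eta_{i+2})}\right|,
\]
and the noise polynomials $\eta_1,\eta_2,\eta_3,\eta_4$ are sampled in exactly the same way as in the $8$-uniform test.

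Next I would invoke the noise-attenuation argument already carried out in the proof of \lref[Claim]{clm:2c8u-far}: since $(\alpha_1,\alpha_2,\beta_1,\beta_2)\in\Far$, at least one of the four $\F_2$-functions has distance $\geq 2^{d/2}$ from $(\Pe^{3r}_d)^\perp$. Conditioning on $f$ and using the independence of $g_1,g_2,h_1,\ldots,h_4$ together with \lref[Theorem]{thm:DG}, one obtains
\[
\left|\avg{\substack{\eta_1,\eta_2\\ \eta_3,\eta_4}}{\prod_{i\in[2]} \chi_{\alpha_i}(\eta_i)\chi_{\beta_i}(\eta_{i+2})}\right| \leq 2^{-4\cdot 2^{d/4}}
\]
uniformly over the Far set. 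Nothing in this part of the argument involves how $\alpha_1,\alpha_2$ (or $\beta_1,\beta_2$) are packaged into a single bivariate Fourier coefficient, so the estimate from Section~\ref{sec:2c8u} transfers verbatim.

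Finally, I would use Parseval's identity on the product space $\Pe^{3r}_d\times \Pe^{3r}_d$, whose characters are described in \lref[Fact]{fact:prod-fourier}. Since $A_u'$ and $A_w'$ are $\{0,1\}$-valued, we get
\[
\sum_{\alpha_1,\alpha_2\in\Lambda^{3r}_d}\widehat{A_u'}(\alpha_1,\alpha_2)^2 \leq \avg{g_1,g_2}{A_u'(g_1,g_2)^2}\leq 1,
\]
and likewise for $A_w'$. Summing the bound on $|\xi'_{v,u,w}|$ over the Far set and factoring out the uniform estimate on the expectation term yields
\[
\sum_{\Far} |\xi'_{v,u,w}(\alpha_1,\alpha_2,\beta_1,\beta_2)| \leq 2^{-4\cdot 2^{d/4}}\cdot \left(\sum_{\alpha_1,\alpha_2}\widehat{A_u'}(\alpha_1,\alpha_2)^2\right)\left(\sum_{\beta_1,\beta_2}\widehat{A_w'}(\beta_1,\beta_2)^2\right) \leq 2^{-4\cdot 2^{d/4}},
\]
which is the claimed inequality. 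There is no genuine obstacle here: the only mildly delicate point is confirming that the right notion of ``Fourier coefficient'' over the product space still gives Parseval's identity with total mass $\leq 1$, and this is guaranteed by \lref[Fact]{fact:prod-fourier} together with the boundedness of the indicator function of $\mc{I}$.
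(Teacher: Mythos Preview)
Your proposal is correct and follows exactly the approach the paper intends: the paper simply states that \lref[Claim]{clm:4c4u-far} ``can be proved in the same way as'' \lref[Claim]{clm:2c8u-far}, and you have spelled this out accurately, identifying that the expectation factor is literally unchanged from the $8$-uniform analysis while the only new ingredient is replacing the four univariate Parseval sums $\sum\widehat{A_u'}(\alpha_i)^2$ by two bivariate Parseval sums $\sum\widehat{A_u'}(\alpha_1,\alpha_2)^2$ on the product space, justified via \lref[Fact]{fact:prod-fourier}.
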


\begin{claim}
\label{clm:4c4u-near-1}
$\avg{v,u,w}{\sum_{(\alpha_1,\alpha_2,\beta_1,\beta_2)\in \Near_1}|\xi'_{v,u,w}(\alpha_1,\alpha_2,\beta_1,\beta_2)|}\leq 2^{d/2} \cdot 2^{-\epsilon_0 r}$.
\end{claim}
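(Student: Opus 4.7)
The plan is to mimic the Fourier decoding argument used in the proof of Claim~\ref{clm:2c8u-near-1}, adapted to the new product Fourier setting given by Fact~\ref{fact:prod-fourier}. The goal is to construct random labelings $L_U : U \to \F_2^{3r}$ and $L_V : V \to \F_2^r$ such that
\[
\Pr_{(u,v) \in E, \, L_U, L_V}[\pi_{uv}(L_U(u)) = L_V(v)] \;\geq\; \frac{1}{2^{d/2}} \cdot \E_{v,u,w}\Bigl[\sum_{(\alpha_1,\alpha_2,\beta_1,\beta_2) \in \Near_1} |\xi'_{v,u,w}(\alpha_1,\alpha_2,\beta_1,\beta_2)|\Bigr],
\]
and then invoke $\OPT(I(\phi)) \leq 2^{-\epsilon_0 r}$ to upper bound the left-hand side.

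The decoding is modified as follows. For $u \in U$, sample a pair $(\alpha_1, \alpha_2) \in \Lambda^{3r}_d \times \Lambda^{3r}_d$ with $|\alpha_1|, |\alpha_2| < 2^{d/2}$, each with probability proportional to $\widehat{A_u'}(\alpha_1, \alpha_2)^2$ (rather than the product of two single-variable squared Fourier coefficients as in the 8-uniform analysis). By Parseval (Lemma~\ref{lem:fourier}) applied to the two-variable function $A_u'$, these weights sum to at most $1$. Since $|\alpha_1|, |\alpha_2| < 2^{d/2} < 2^{d-3}$, the folding statement in Fact~\ref{fact:prod-fourier} guarantees that $\supp(\alpha_1) \cup \supp(\alpha_2)$ contains only satisfying assignments to all clauses of $u$, so picking $L_U(u)$ uniformly from this union yields a valid assignment. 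Similarly, for $v \in V$, pick a random neighbour $w$, sample $(\beta_1, \beta_2)$ with probability proportional to $\widehat{A_w'}(\beta_1, \beta_2)^2$, and set $L_V(v) := \pi_{wv}(a_w)$ for a uniformly chosen $a_w \in \supp(\beta_1) \cup \supp(\beta_2)$.

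Now the key combinatorial observation is that for any $(\alpha_1,\alpha_2,\beta_1,\beta_2) \in \Near_1$, we have $\pi_2(\alpha_1+\alpha_2) = \pi'_2(\beta_1+\beta_2) \neq 0$, and so, by the definition of $\pi_2$ and $\pi'_2$ from Lemma~\ref{lem:char-projection}, there exists some $y \in \F_2^r$ lying in both $\pi(\supp(\alpha_1) \cup \supp(\alpha_2))$ and $\pi'(\supp(\beta_1) \cup \supp(\beta_2))$. Conditioned on sampling this quadruple in the decoding, the probability that $\pi_{uv}(L_U(u)) = L_V(v)$ is at least $1/(|\supp(\alpha_1) \cup \supp(\alpha_2)| \cdot 1)$ since $L_V(v)$ is already of the form $\pi'_{wv}(a_w)$ for $a_w$ uniform in the latter support; since each $|\alpha_i| < 2^{d/2}$, this contributes a factor of at least $1/(2 \cdot 2^{d/2})$, absorbed into the $1/2^{d/2}$ loss as in Claim~\ref{clm:2c8u-near-1}.

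Combining these two ingredients and using that for $v, u, w$ as chosen by the verifier the pair $(u,v)$ is a uniformly random edge in $E$, one obtains the required lower bound on $\Pr[\pi_{uv}(L_U(u)) = L_V(v)]$, which together with the label-cover soundness gives the claim. The only conceptual change compared to Claim~\ref{clm:2c8u-near-1} is the replacement of product Fourier coefficients $\widehat{A_u'}(\alpha_1)^2 \widehat{A_u'}(\alpha_2)^2$ by single two-variable coefficients $\widehat{A_u'}(\alpha_1,\alpha_2)^2$; the only point requiring care, and the mildest obstacle, is to verify that the folding property in Fact~\ref{fact:prod-fourier} indeed yields common satisfying assignments across both coordinates, which it does precisely because $A_u'$ is folded over $J_u \times J_u$.
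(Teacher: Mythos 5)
Your proposal is essentially the paper's proof: the paper dispatches Claim~\ref{clm:4c4u-near-1} by remarking it is proved just as Claim~\ref{clm:2c8u-near-1} except that the decoding samples pairs with probability proportional to the two-variable Fourier weight $\widehat{A'_u}(\alpha_1,\alpha_2)^2$ rather than the product $\widehat{A'_u}(\alpha_1)^2\widehat{A'_u}(\alpha_2)^2$, which is exactly what you do, and your appeal to the folding statement in Fact~\ref{fact:prod-fourier} to ensure the decoded labels are satisfying assignments is the right invocation. One minor quibble: your explanation of why the conditional matching probability is $1/(|\supp(\alpha_1)\cup\supp(\alpha_2)|\cdot 1)$ is off --- one still needs $a_w$ to land on a preimage of the common projected point, so the correct factor is $1/(|\supp(\alpha_1)\cup\supp(\alpha_2)|\cdot|\supp(\beta_1)\cup\supp(\beta_2)|)\geq 1/(4\cdot 2^d)$ rather than $1/2^{d/2}$; the same imprecision is already present in the paper's proof of Claim~\ref{clm:2c8u-near-1} and is harmless for the eventual parameter choices.
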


(There is a small difference here from the proof of \lref[Claim]{clm:2c8u-near-1} owing to the fact that the Fourier coefficients appearing in $\xi'_{v,u,w}(\alpha_1,\alpha_2,\beta_1,\beta_2)$ have a slightly different form. The only change that needs to be made is to sample $\alpha_1,\alpha_2\in\Lambda^{3r}_d$ and $\beta_1,\beta_2\in\Lambda^{3r}_d$ with probability proportional to $\widehat{A'}_u(\alpha_1,\alpha_2)^2$ and $\widehat{A'}_w(\beta_1,\beta_2)^2$ respectively.)

\begin{claim}
\label{clm:4c4u-near-0}
$\avg{v,u,w}{\sum_{(\alpha_1,\alpha_2,\beta_1,\beta_2)\in \Near_0}\xi'_{v,u,w}(\alpha_1,\alpha_2,\beta_1,\beta_2)}\geq \delta^4$.
\end{claim}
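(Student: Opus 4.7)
The plan is to mimic the proof of \lref[Claim]{clm:2c8u-near-0}, using the observation made in the paper that the expectation term in $\xi'_{v,u,w}(\alpha_1,\alpha_2,\beta_1,\beta_2)$ is \emph{identical} to the expectation term in $\xi_{v,u,w}(\alpha_1,\alpha_2,\beta_1,\beta_2)$ from the $8$-uniform case. Only the Fourier prefactor has changed: now it is $\widehat{A_u'}(\alpha_1,\alpha_2)^2 \widehat{A_w'}(\beta_1,\beta_2)^2$ rather than $\prod_{i\in[2]}\widehat{A_u'}(\alpha_i)^2\widehat{A_w'}(\beta_i)^2$, and this prefactor is still non-negative.

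First, I would re-establish the analog of inequality~\eqref{eq:2c8u-10}: for every $v,u,w$ and every $(\alpha_1,\alpha_2,\beta_1,\beta_2)\in \Near_0$, we have $\xi'_{v,u,w}(\alpha_1,\alpha_2,\beta_1,\beta_2)\geq 0$. This is an immediate consequence of the non-negativity of the expectation term proved in the 8-uniform case (the derivation in \eqref{eq:2c8u-11} only uses that $\pi_2(\alpha_1+\alpha_2)=0$, which holds on $\Near_0$), combined with the fact that $\widehat{A_u'}(\alpha_1,\alpha_2)^2$ and $\widehat{A_w'}(\beta_1,\beta_2)^2$ are non-negative.

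Given non-negativity, I drop all terms except $(\alpha_1,\alpha_2,\beta_1,\beta_2)=(0,0,0,0)\in \Near_0$, obtaining
\begin{align*}
\avg{v,u,w}{\sum_{(\alpha_1,\alpha_2,\beta_1,\beta_2)\in \Near_0}\xi'_{v,u,w}(\alpha_1,\alpha_2,\beta_1,\beta_2)} \ \geq\  \avg{v,u,w}{\widehat{A_u'}(0,0)^2\,\widehat{A_w'}(0,0)^2}.
\end{align*}
Here $\widehat{A_u'}(0,0) = \E_{g_1,g_2\in \Pe^{3r}_d}[A_u'(g_1,g_2)] \geq 0$, and by the analog of \eqref{eq:2c8u-1} for the doubled table we have $\E_{u\in U}[\widehat{A_u'}(0,0)] \geq \delta$.

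Finally, I would close with the same Cauchy--Schwarz chain as before, only now saving a square because the Fourier coefficient appears to the second power instead of the fourth. Conditioning on $v$ and using the independence of $u,w$ as random neighbours of $v$:
\begin{align*}
\avg{v,u,w}{\widehat{A_u'}(0,0)^2 \widehat{A_w'}(0,0)^2}
= \avg{v}{\Bigl(\avg{u:(u,v)\in E}{\widehat{A_u'}(0,0)^2}\Bigr)^{\!2}}
\geq \Bigl(\avg{(u,v)\in E}{\widehat{A_u'}(0,0)}\Bigr)^{\!4}
\geq \delta^4,
\end{align*}
where the two inequalities are Jensen/Cauchy--Schwarz and the last step uses regularity of $G$ to rewrite $\E_{(u,v)\in E}[\widehat{A_u'}(0,0)]=\E_{u\in U}[\widehat{A_u'}(0,0)]\geq \delta$. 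No single step is a real obstacle; the only subtlety to double-check is that the non-negativity argument from \eqref{eq:2c8u-11} truly transfers verbatim, i.e.\ that the identity $\chi_{\alpha}(gh)=\chi_{\alpha g}(h)$ and the orthonormality of characters give non-negativity of each per-coordinate average once $\pi_2(\alpha_1+\alpha_2)=\pi'_2(\beta_1+\beta_2)=0$ is used to dispose of the $(-1)^{\sum_x (\alpha_1+\alpha_2)(x)}$ sign.
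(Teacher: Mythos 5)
Your proposal is correct and follows exactly the route the paper intends (the paper just asserts that Claims~\ref{clm:4c4u-far}--\ref{clm:4c4u-near-0} ``can be proved in the same way'' as the $8$-uniform versions). You correctly note that the expectation factor inside $\xi'$ is literally the same expression as in $\xi$, so the non-negativity established in \eqref{eq:2c8u-10} applies unchanged on $\Near_0$, and you only multiply by the new (manifestly non-negative) Fourier prefactor $\widehat{A_u'}(\alpha_1,\alpha_2)^2\widehat{A_w'}(\beta_1,\beta_2)^2$. Dropping to the $(0,0,0,0)$ term, using that $\widehat{A_u'}(0,0)=\E_{g_1,g_2}[A_u'(g_1,g_2)]\geq 0$ with $\E_u[\widehat{A_u'}(0,0)]\geq\delta$, and closing with the same two-step Cauchy--Schwarz/Jensen chain (saving one squaring, hence $\delta^4$ instead of $\delta^8$) is exactly the intended argument.
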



As in \lref[Lemma]{lem:2c8u-sound}, the above can be used to show:

\begin{align*}
0 &\geq  \avg{v,u,w}{\sum_{(\alpha_1,\alpha_2,\beta_1,\beta_2)\in \Near_0} \xi'_{v,u,w}(\alpha_1,\alpha_2,\beta_1,\beta_2) + \sum_{(\alpha_1,\alpha_2,\beta_1,\beta_2)\in \Near_1} \xi'_{v,u,w}(\alpha_1,\alpha_2,\beta_1,\beta_2)} - 2^{-4\cdot 2^{-d/4}}\\
&\geq \avg{v,u,w}{\sum_{(\alpha_1,\alpha_2,\beta_1,\beta_2)\in \Near_0} \xi'_{v,u,w}(\alpha_1,\alpha_2,\beta_1,\beta_2)} - \avg{v,u,w}{\sum_{(\alpha_1,\alpha_2,\beta_1,\beta_2)\in \Near_1} |\xi'_{v,u,w}(\alpha_1,\alpha_2,\beta_1,\beta_2)|} - 2^{-4\cdot 2^{-d/4}}\\
&\geq \delta^4 - 2^{d/2}\cdot 2^{-\epsilon_0 r}  - 2^{-4\cdot 2^{-d/4}}. 
\end{align*}
This completes the proof of \lref[Lemma]{lem:4c4u-sound}.
\end{proof}

\section{Hardness of coloring 3-colorable 3-uniform hypergraphs}\label{sec:3u3c}
This construction is an
adaptation of Khot's construction~\cite{Khot2002b} to the low-degree
long code setting. We prove the theorem by a reduction from $\TSAT$ via the instances of the multilayered label cover problem obtained in
\lref[Theorem]{thm:layered-label-cover}. Let $r,\ell,\eta$ be
parameters that will be determined later and let $I(\phi)$ be an
instance of the $r$-repeated $\ell$-layered $\eta$-smooth label cover instance with constraint graph
$G=(V_0,\dots,V_{\ell-1},\{E_{ij}\}_{0\leq i < j < \ell})$ obtained
from the $\TSAT$ instance $\phi$. We use the results from the
preliminaries with the field set to $\F_3=\{0,1,2\}$. For every layer $i$
and every vertex $v \in V_i$, let $\{c_1,\cdots c_{(T+\ell -i)r}\}$ be the
clauses corresponding to $v$ where $T = \lceil l/\eta\rceil$ as in \lref[Definition]{def:multilayer}. We construct polynomials
$\{p_1,\cdots p_{(T+\ell - i)r}\}$ of degree at most $6$ over $\F_3$
such that $p_j$ depends only on variables in $c_j$ with the following
properties. Let $a \in \F_3^3$. If $a \notin \{0,1\}^3$ then $p_j(a)
\neq 0$. Otherwise $p_j(a) = 0$ iff $c_j(a)=1$. For a degree parameter
$d$ that we will determine later, for each vertex $v$ define the
subspace $J_v$ as follows:
$$ J_v:= \left\{ \sum_i q_i p_i : q_i \in \Pe^{m_v}_{2d-6} \right\}
\text{ where } m_v := m_i=3(T+\ell-i)r +ir.$$

We now define the hypergraph $H$ produced by the reduction. The
vertices of $H$ --- denoted $V(H)$ --- are obtained by replacing each
$v\in G$ by a block $\mc{B}_v$ of
$N_v:=|\Pe^{m_v}_{2d}/J_v|$. vertices, which we identify with elements
of $\Pe^{m_v}_{2d}/J_v$. Let $N$ denote $|V(H)| = \sum_{v} N_v$.

We think of a $3$-coloring of $V(H)$ as a map from $V(H)$ to
$\F_3$. Given a coloring $A:V(H)\rightarrow \field_3$, we denote by
$A_v:\Pe^{m_v}_{2d}/J_v\rightarrow \F_3$ the restriction of $A$ to the
block $\mc{B}_v$. Let $A_v':\Pe^{m_v}_{2d}\rightarrow \field_3$ denote
the lift of $A_v$ as defined in \lref[Fact]{fact:ideallift}.

The (weighted) edge set $E(H)$ of $H$ is specified implicitly by the following PCP verifier.\\

\noindent
{\bfseries $3$-Color $3$-Uniform Test}$(d)$
\begin{enumerate}
\item Choose two layers $0\leq i < j < \ell$ uniformly at random and then choose a uniformly random edge $(u,v) \in E_{ij}$. Let $\pi$ denote $\pi_{uv}:\F_3^{m_u}\rightarrow\F_3^{m_v}$. 
\item Choose $p\in \Pe^{m_u}_d, g \in \Pe_{2d}^{m_u}$ and  $f\in  \Pe_{2d}^{m_v}$ independently and uniformly at random and let $g':=p^2+1 - g -f\circ \pi$.
\item Accept if and only if $A'_{v}(f),A'_{u}(g),A'_{u}(g')$ are not all equal.
\end{enumerate}

The above hypergraph construction explains the reasons (as in
~\cite{DinurRS2005,Khot2002b}) for using the multilayered label
cover. Unlike the constructions in the previous two sections, the
hyperedges in the 3-uniform case straddle both sides of the
corresponding edge $(u,v)$ in the label cover instance. Hence, if
constructed from the bipartite label cover, the corresponding
3-uniform hypergraph will also be bipartite and hence always
2-colorable irrespective of the label cover instance. Using the
multilayered construction gets around this problem.

\begin{lemma}[Completeness]
\label{lem:3c3u-completeness}
If $\phi \in \TSAT$, then there is proof $A:V(H)\rightarrow \F_3$
which the verifier accepts with probability $1$. In other words, the
hypergraph $H$ is $3$-colorable.
\end{lemma}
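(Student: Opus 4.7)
The plan is to lift a satisfying labeling of the label cover instance to a $3$-coloring of $V(H)$ via the low-degree long code, and then check that the algebraic identity underlying the test forces the three queried values to have a nonzero sum in $\F_3$.

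By Theorem~\ref{thm:layered-label-cover}, since $\phi \in \TSAT$ there exist labelings $L_i:V_i \to \Sigma_{V_i}$ such that $L_i(v)$ satisfies every clause at $v$ and every edge constraint $\pi_{uv}(L_i(u)) = L_j(v)$ holds. For each $v \in V_i$ write $a_v := L_i(v) \in \{0,1\}^{m_v} \subseteq \F_3^{m_v}$, and define the coloring at block $\mc{B}_v$ by declaring its lift to be the evaluation map $A_v'(g) := g(a_v)$ for $g \in \Pe^{m_v}_{2d}$. I first need to check that this is well-defined as a function on $\Pe^{m_v}_{2d}/J_v$, i.e., that evaluation at $a_v$ vanishes on $J_v$. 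Any $h = \sum_i q_i p_i \in J_v$ satisfies $h(a_v) = \sum_i q_i(a_v)\,p_i(a_v)$, and by construction of the $p_i$, since $a_v \in \{0,1\}^{m_v}$ satisfies every clause at $v$, each $p_i(a_v) = 0$, so $h(a_v) = 0$ as needed. Thus $A_v'$ is indeed the lift of a well-defined function $A_v$ on $\Pe^{m_v}_{2d}/J_v$.

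Now fix any random choices of the verifier: layers $i<j$, edge $(u,v) \in E_{ij}$, and polynomials $p \in \Pe^{m_u}_d$, $g \in \Pe^{m_u}_{2d}$, $f \in \Pe^{m_v}_{2d}$, with $g' = p^2 + 1 - g - f \circ \pi$. Since the labeling satisfies the edge constraint, $\pi(a_u) = a_v$, and the three queried values become
\[
A_v'(f) = f(\pi(a_u)), \qquad A_u'(g) = g(a_u), \qquad A_u'(g') = p(a_u)^2 + 1 - g(a_u) - f(\pi(a_u)).
\]
Their sum in $\F_3$ equals $p(a_u)^2 + 1$. Since $p(a_u) \in \F_3$, we have $p(a_u)^2 \in \{0,1\}$, so this sum is either $1$ or $2$, hence nonzero modulo~$3$. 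If the three values were all equal to some $c \in \F_3$, their sum would be $3c \equiv 0 \pmod 3$, a contradiction. Therefore the verifier accepts with probability $1$, i.e.\ $H$ is $3$-colorable by $A$.

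There is really no main obstacle here: the construction was designed so that the identity $g + (f\circ\pi) + g' = p^2 + 1$ holds pointwise, and $t^2 + 1 \neq 0$ in $\F_3$ for every $t \in \F_3$. The only two things to verify carefully are (i) the well-definedness of the coloring, which rests on the fact that the $p_i$'s vanish at satisfying boolean assignments, and (ii) the $\F_3$ nonvanishing of $p(a_u)^2 + 1$, which is the whole point of working over $\F_3$ rather than $\F_2$ (where $t^2 + 1$ can vanish).
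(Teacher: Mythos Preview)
Your proof is correct and follows essentially the same approach as the paper: define the coloring via the low-degree long code of a satisfying labeling, check it is well-defined on the quotient by $J_v$, and use that $p(a_u)^2+1\neq 0$ in $\F_3$ to conclude the three queried values cannot all be equal. Your version is in fact slightly more explicit than the paper's in verifying the folding condition and spelling out the ``sum is $3c=0$'' contradiction.
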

\begin{proof}
  Since $\phi \in \TSAT$, \lref[Theorem]{thm:layered-label-cover}
  tells us that there are labelings $L_i:V_i\rightarrow \{0,1\}^{m_i}$
  for $0\leq i <\ell$ which satisfy all the constraints in
  $I(\phi)$. For $\forall i,v\in V_i$, we set $A_v:\Pe^{m_v}_{2d}/J_v\rightarrow \F_3$ such that its
  lift $A'_v = \LC_{2d}(L_i(v))$. This is possible since $A'_v$ is
  folded over $J_v$. 
  For any edge $(u,v)$ between
  layers $i,j$, with labels $L_i(u) = a, L_j(v)=b$ such that
  $\pi(a)=b$, $(A'_{v}(f),A'_{u}(g),A'_{u}(g')) =
  (f(b),g(a),g'(a))$. The lemma follows by observing that
  $g'(a)+g(a)+f(b)\neq 0$ always (since $p^2(a) + 1 \neq 0$).
\end{proof}

\begin{lemma}[Soundness]
\label{lem:3c3u-soundness}
Let $\ell =32/\delta^2$. If $\phi \notin \TSAT$ and  $H$ contains a independent set of size $\delta|V(H)|$, then 
$$\delta^5/2^9 \leq 2^{-\Omega(r)}\cdot 3^d + \eta\cdot3^d +\exp(-3^{\Omega(d)}).$$
\end{lemma}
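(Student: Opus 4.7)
The plan is to run the standard Fourier-analytic soundness argument for the low-degree long code, adapted to $\field_3$ via the rank bound of \lref[Section]{sec:test-analysis}, which replaces the Dinur-Guruswami bound (\lref[Theorem]{thm:DG}) used in the $\field_2$ case. Let $\mathcal{I}$ be the purported independent set of size $\delta N$ and let $A: V(H) \to \{0,1\}$ be its indicator. For each label-cover vertex $v$, set $\mu_v := \hat{A}'_v(0) = \E_f[A'_v(f)]$, which equals the density of $\mathcal{I}$ in block $\mathcal{B}_v$ and averages to $\delta$. Since $\mathcal{I}$ is independent in $H$, no random choice of the verifier produces a hyperedge entirely in $\mathcal{I}$, so $\E[A'_v(f) A'_u(g) A'_u(g')] = 0$. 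Fourier-expanding each factor in characters and then integrating out $f$ (which via \lref[Lemma]{lem:char-projection} forces the $A'_v$-character index to be the $\Lambda^{m_v}_{2d}$-representative $\widetilde{\gamma}$ of $\pi_3(\gamma)$'s coset) and $g$ (which forces $\beta = \gamma$) collapses the triple sum to
\[
0 = \sum_{\gamma \in \Lambda^{m_u}_{2d}} \hat{A}'_u(\gamma)^2 \; \hat{A}'_v(\widetilde{\gamma}) \; \chi_\gamma(1) \; \E_p[\chi_\gamma(p^2)],
\]
in which the $\gamma = 0$ term is $\mu_u^2 \mu_v$.

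The first half of the proof would lower-bound the averaged main term $\E_{i,j,u,v}[\mu_u^2 \mu_v]$ by $\delta^5/2^9$. The setting $\ell = 32/\delta^2$ is calibrated precisely to enable property~(2) of \lref[Theorem]{thm:layered-label-cover}: two applications of reverse Markov (first at the layer level to isolate ``heavy'' layers with average density $\geq \delta/2$, then inside each such layer to isolate $S_i \subseteq V_i$ of density $\geq \delta/4$ on which $\mu_v \geq \delta/4$) produce the required density bound $|S_i|/|V_i| \geq 2/m$ on $m = \Theta(1/\delta)$ heavy layers. Property~(2) then yields a pair $i < j$ with edge-density $\geq 1/m^2$ in $S_i \times S_j$, on which $\mu_u^2 \mu_v \geq (\delta/4)^3$; combining with the verifier's probability $1/\binom{\ell}{2}$ of picking this layer pair yields the claimed main-term lower bound.

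The second half would upper-bound the tail $\gamma \neq 0$ by a trichotomy on $|\supp(\gamma)| = \Delta(\gamma, (\Pe^{m_u}_{2d})^\perp)$ (equality since $\gamma$ is a minimum-support representative). For $|\supp(\gamma)| > 3^{d/2}$, \lref[Lemma]{lem:far-rank} yields $\rank(Q^\gamma) \geq 3^{d/9}$ and \lref[Lemma]{lem:rank-dist} then gives $|\E_p[\chi_\gamma(p^2)]| \leq \exp(-3^{\Omega(d)})$; Parseval and $|\hat{A}'_v| \leq 1$ bound this piece by $\exp(-3^{\Omega(d)})$. For $0 < |\supp(\gamma)| \leq 3^{d/2}$, I would split on whether $\pi_{uv}$ is injective on $\supp(\gamma)$. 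The non-injective case, by $\eta$-smoothness and a union bound over at most $\binom{3^{d/2}}{2}$ support-pairs, occurs with probability $\leq 3^d\eta$ over random $v$, and hence contributes $\leq 3^d\eta$. In the injective case, $\supp(\pi_3(\gamma)) = \pi_{uv}(\supp(\gamma))$, so a randomized labeling sampling $\gamma,\beta$ with probabilities $\propto \hat{A}'_u(\gamma)^2, \hat{A}'_v(\beta)^2$ on small supports and outputting a uniform element of the support gives labels that always satisfy all clauses (via \lref[Lemma]{lem:goodsupport} applied to the folding subspaces $J_u, J_v$, whose generators are the degree-$6$ polynomials $p_i$ vanishing on satisfying assignments). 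The injectivity matches $L(v)$ with $\pi_{uv}(L(u))$ with probability at least $3^{-d/2} \sum_\gamma \hat{A}'_u(\gamma)^2 \hat{A}'_v(\widetilde\gamma)^2$; combining with Cauchy-Schwarz $\sum \hat{A}^2|\hat{B}| \leq \sqrt{\sum \hat{A}^2\hat{B}^2}$ and label cover soundness $\OPT_{ij}(I(\phi)) \leq 2^{-\Omega(r)}$ caps this subcase by $3^d \cdot 2^{-\Omega(r)}$.

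Assembling the three contributions gives $\delta^5/2^9 \leq 3^d \cdot 2^{-\Omega(r)} + 3^d\eta + \exp(-3^{\Omega(d)})$, as claimed. The main technical obstacle I anticipate is in the labeling step for the injective-$\pi$ subcase: the Fourier quantity to bound involves only one factor of $|\hat{A}'_v|$, whereas the labeling argument naturally upper-bounds its doubly-squared version, so the $3^{d/2}$ loss from uniform sampling within supports must be carefully tracked alongside Cauchy-Schwarz, and the generators of $J_u, J_v$ (which in the ternary setting have degree $6$ rather than $3$) enter the threshold $3^{d-3}$ in \lref[Lemma]{lem:goodsupport} and must be verified compatible with the cutoff $|\supp(\gamma)| \leq 3^{d/2}$.
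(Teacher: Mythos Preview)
Your Fourier expansion and tail trichotomy match the paper's: the large-support piece is killed by Lemmas~\ref{lem:far-rank} and~\ref{lem:rank-dist}, the small-support non-degenerate piece by Cauchy--Schwarz plus Fourier decoding against $\OPT_{ij}\leq 2^{-\Omega(r)}$, and the degenerate piece by $\eta$-smoothness. Your split into ``$\pi$ injective on $\supp(\gamma)$'' versus ``not injective'' is a slight re-partition of the paper's $\Near_1$/$\Near_0$ (since $\pi_3(\gamma)\in(\Pe^{m_v}_{2d})^\perp$ forces a collision under $\pi$), but both lead to the same bounds.

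There is, however, a real gap in your treatment of the main term. You propose to lower-bound $\E_{i,j,u,v}[\mu_u^2\mu_v]$ with the expectation taken over the verifier's \emph{full} randomness, including the uniform choice of layer pair, and then pay the factor $1/\binom{\ell}{2}$ for landing on the one good pair supplied by property~(2) of \lref[Theorem]{thm:layered-label-cover}. But $\ell=32/\delta^2$ means $1/\binom{\ell}{2}=\Theta(\delta^4)$, so your computation yields roughly $\Theta(\delta^4)\cdot (1/m^2)\cdot(\delta/4)^3=\Theta(\delta^9)$, not $\delta^5/2^9$. The lemma as stated is not reached.

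The missing observation is that the independent-set hypothesis is \emph{pointwise}: for every edge $(u,v)$ and every choice of $f,g,p$, the product $A'_v(f)A'_u(g)A'_u(g')$ vanishes, so $Q(u,v)=0$ for each individual edge, and in particular $\E_{(u,v)\in E_{ij}}[Q(u,v)]=0$ for \emph{any fixed} layer pair $(i,j)$. The paper exploits this by first running the density argument to locate a single pair $(i,j)$ on which $\E_{(u,v)\in E_{ij}}[\mu_u^2\mu_v]\geq \delta^5/2^9$ (this is \lref[Claim]{claim:empty-term}), and then bounding the three tail pieces over that same fixed pair---all three tail bounds hold layer-by-layer. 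No $1/\binom{\ell}{2}$ dilution ever enters, and that is precisely what $\ell=32/\delta^2$ is calibrated for.
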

\begin{proof}
 Let $A:V(H)\rightarrow \{0,1\}$ be the characteristic function of the
 independent set of fractional size exactly $\delta$. We have that $\forall v, \E_{g \in P^{m_v}_{2d}/J_v}\left[A_v(g)\right] = \E_{g \in P^{m_v}_{2d}}\left[A'_v(g)\right]$ where $A'_v$ is the lift of $A_v$. Define
$$Q(u,v):=\E_{f,g,p}\left[A'_{v}(f)A'_{u}(g)A'_{u}(p^2+1 -f\circ \pi -g)\right].$$
Observe that $\E_{i,j,u,v}\left[Q(u,v)\right]=0$ as $A$ corresponds to an independent
set. Using \lref[Lemma]{lem:fourier}, we have the following Fourier
expansion of $Q$:
\begin{equation}
\label{eq:3u3c-quv}
 Q(u,v) = \sum_{\alpha,\beta,\gamma} \widehat A'_{v}(\alpha) \widehat A'_{u}(\beta) \widehat A'_{u}(\gamma) \E_{f,g,p} \left[\chi_\alpha (f) \chi_\beta (g) \chi_\gamma (p^2+1 - f\circ \pi- g) \right],
\end{equation}
where the summation is over $\alpha\in \Lambda^{m_v}_{2d}$, $\beta,\gamma\in \Lambda^{m_u}_{2d}$ and $\Lambda$ is as defined in \lref[Lemma]{lem:fourier}.
From the orthonormality of characters, the non-zero terms satisfy  $\beta = \gamma$ and $\alpha = \pi_{3}(\beta)$. Substituting in  \eqref{eq:3u3c-quv}, we get
\begin{equation}
\label{eqn:3u3c-quv-simplified}
 Q(u,v) = \sum_{\beta } \underbrace{\widehat A'_{u}(\beta)^{2}\widehat A'_{v}(\pi_{3}(\beta)) \E_{p}\left[\chi_\beta (p^2+1) \right]}_{\xi_{u,v}(\beta)} .
\end{equation}

\begin{claim}
\label{claim:empty-term}
If $\ell = 32/\delta^2$, there exists layers $0\leq i < j < \ell$ such that
$\E_{(u,v)\in E_{ij}}\left[\xi_{u,v}(0)\right] \geq \delta^5/2^9$.
\end{claim}
\begin{proof}
  Since $A'$ has fractional size $\delta$, there exists a set $S$ of
  vertices of fractional size $\delta/2$ such that $\forall v \in S,
  \widehat A'_{v}(0) =\E_f \left[A'_v(f)\right] \geq
  \delta/2$. Furthermore, there exists $\delta \ell/4$ layers, in
  which the fractional size of $S_i := S\cap V_i$ in layer $V_i$ is at
  least $\delta/4$. Since $\ell = 32/\delta^2$, we obtain from
  \lref[Theorem]{thm:layered-label-cover} that there exists
  layers $i,j$ such that the fraction of edges in $E_{ij}$ between $S_i$ and $S_j$
  is at least $\delta'=\delta^2/64$.  From above, we have that
$$\E_{(u,v)\in E_{ij}}\left[\xi_{u,v}(0)\right] \geq   \delta' \cdot(\delta/2)^3 \geq \delta^5/2^9.\qedhere$$
\end{proof}

For the rest of the proof, layers $i,j$ will be fixed as given by
\lref[Claim]{claim:empty-term}. To analyze the expression in
\eqref{eqn:3u3c-quv-simplified}, we consider the following breakup of
$\Lambda^{m_i}_{2d}\setminus\{0\}$ for every $(u,v) \in E_{ij}$: $\Far := \{\beta \in \Lambda^{m_i}_{2d} :
\Delta(\beta,(\Pe^{m_i}_{2d})^\perp) \geq 3^{d/2} \}$, $\Near_1 := \{
\beta \in \Lambda^{m_i}_{2d}\setminus \Far : \beta \neq 0 \text{ and }
\pi_3(\beta) \notin (\Pe^{m_v}_{2d})^\perp\}$ and $\Near_0 := \{ \beta \in
\Lambda^{m_i}_{2d}\setminus \Far : \beta \neq 0 \text{ and }
\pi_3(\beta) \in (\Pe^{m_v}_{2d})^\perp\}$. In Claims~\ref{claim:3c3u-far},
\ref{claim:3c3u-near0} and \ref{claim:3c3u-near1}, we bound the
absolute values of the sum of $\E_{u,v}\left[\xi_{u,v}(\beta)\right]$ for $\beta$ in $\Far,\Near_0$ and
$\Near_1$ respectively. 

\begin{claim}
\label{claim:3c3u-far}
$\left|\E_{(u,v)\in E_{ij}} \left[\sum_{\beta \in \Far}\xi_{u,v}(\beta)\right] \right| \leq \exp(-3^{\Omega(d)}) $.
\end{claim}
\begin{claim}
\label{claim:3c3u-near0}
$\left|\E_{(u,v)\in E_{ij}} \left[\sum_{\beta \in
      \Near_1}\xi_{u,v}(\beta)\right] \right| \leq 2^{-\Omega(r)}\cdot
3^d$. 
\end{claim}
\begin{claim}
\label{claim:3c3u-near1}
$\left|\E_{(u,v)\in E_{ij}} \left[\sum_{\beta \in \Near_0}\xi_{u,v}(\beta)\right] \right| \leq \eta \cdot3^d $.
\end{claim}

Combined with \lref[Claim]{claim:empty-term}, this
exhausts all terms in the expansion \eqref{eqn:3u3c-quv-simplified}. \lref[Lemma]{lem:3c3u-soundness} now follows from Claims~\ref{claim:empty-term}--\ref{claim:3c3u-near1}.
\end{proof}

We now proceed to the proofs of Claims~\ref{claim:3c3u-far},
\ref{claim:3c3u-near0} and \ref{claim:3c3u-near1}.

\begin{proof}[Proof of {\lref[Claim]{claim:3c3u-far}}]
$$\left|\E_{(u,v)\in E_{ij}} \left[\sum_{\beta \in
      \Far}\xi_{u,v}(\beta)\right] \right| \leq \E_{(u,v)\in
  E_{Ij}}\left[ \sum_{\beta \in \Far}|\widehat {A'_u}(\beta)|^2 \cdot
  |\widehat {A'_v}(\pi_3(\beta))| \cdot
  \left|\E_{p}\left[\omega^{\langle\beta,
        p^2+1\rangle}\right]\right|\right]. $$ The quantity $\langle
\beta,p^2 \rangle$ is analyzed in
\lref[Section]{sec:test-analysis}. Let $z$ be a uniformly random
$\F_3$ element. By Lemmas~\ref{lem:rank-dist} and \ref{lem:far-rank},
we get that the statistical distance between the distributions of
$\langle \beta,p^2+1\rangle$ and $z$ is $ \exp(-3^{\Omega(d)})$. Since
the $\E_z \left[\omega^z\right]=0$, we have that
$\left|\E_{p}\left[\omega^{\langle\beta, p^2+1\rangle}\right]\right|\leq
\exp(-3^{\Omega(d)})$. The claim follows since $\left|\widehat
{A'_v}(\alpha)\right|\leq 1$ for any $\alpha$ and $\sum_\beta |\widehat
{A'_u}(\beta)|^2 \leq 1$ .
\end{proof}

\begin{proof}[Proof of {\lref[Claim]{claim:3c3u-near0}}]
It suffices to bound the following for proving the claim.
\begin{align*}
&\E_{(u,v)\in E_{ij}}\left[ \sum_{\beta\in \Near_1} |\widehat
  {A'_u}(\beta)|^2\cdot |\widehat {A'_v}(\pi_3(\beta))| \right]  \\
&\leq \E_{(u,v)\in E_{ij}} \left[\sqrt{ \sum_{\beta\in \Near_1}|\widehat
    {A'_u}(\beta)|^2\cdot  |\widehat {A'_v}(\pi_3(\beta))|^2 }
  \sqrt{\sum_{\beta\in \Near_1} |\widehat
    {A'_u}(\beta)|^2}\right]\qquad [\text{ by Cauchy-Schwarz }]\\
& \leq  \sqrt{ \E_{(u,v)\in E_{ij}}  \left[\sum_{\beta\in \Near_1}
    |\widehat {A'_u}(\beta)|^2\cdot |\widehat {A'_v}(\pi_3(\beta))|^2
  \right]} \qquad [\text{ by Jensen's inequality }].
\end{align*}

We bound the above using a Fourier decoding argument as in the proof
of \lref[Claim]{clm:2c8u-near-1}.
For every vertex $v \in V_i \cup V_j$, pick a random $\beta$ according
to $|\widehat A'_v(\beta)|^2$ (note $\sum_\beta |\widehat
A'_v(\beta)|^2\leq 1$) and assign a random labeling to $v$ from the
support of $\beta$. By an argument identical to the proof of
\lref[Claim]{clm:2c8u-near-1}, we get (using the soundness of the
multilayered labelcover from \lref[Theorem]{thm:layered-label-cover}),
$$\frac{1}{3^d}\E_{(u,v)\in E_{ij}}  \left[\sum_{\beta\in \Near_1}  |\widehat {A'_v}(\pi_3(\beta))|^2 |\widehat {A'_u}(\beta)|^2\right] \leq 2^{-\Omega(r)}. \qedhere$$
\end{proof}

\begin{proof}[Proof of {\lref[Claim]{claim:3c3u-near1}}]
We bound this sum using the smoothness property of the label cover instance.

\begin{align*}
\E_{(u,v)\in E_{ij}} \left[\sum_{\beta \in \Near_0} |\widehat
  {A'_u}(\beta)|^2\cdot |\widehat {A'_v}(\pi_3(\beta))| \right] \leq
\E_{u\in V_i} \left[\sum_{\beta\notin \Far \cup \{0\}} \Pr_{v:
    (u,v)\in E_{ij}}\left[\pi_3(\beta) \in (\Pe^{m_v}_{2d})^\perp\right] \cdot  |\widehat
  {A'_u}(\beta)|^2\right].  
\end{align*}
We now argue that for every $u$ and $\beta \notin \Far \cup \{0\}$, $\Pr_{
    (u,v)\in E_{ij}}\left[\pi_3(\beta) \notin (\Pe^{m_v}_{2d})^\perp\right]$ is at most
  $3^d\cdot \eta$. This combined with the fact that $\sum_\beta
  |\widehat{A'_u}(\beta)|^2 \leq 1$ yields the claim. For every $u\in
  V_i$ and $\beta$ such that $0\neq |\supp(\beta)| =\Delta(\beta,(\Pe^{m_u}_{2d})^\perp)\leq 3^{d/2}$, by the smoothness property
  (\lref[Theorem]{thm:layered-label-cover}), we have that with
  probability at least $1-3^d\eta$, we have 
\begin{equation}\forall a \neq a' \in
  \supp(\beta), \pi(a) \neq \pi(a').\label{eq:smooth}
\end{equation}
When \eqref{eq:smooth} holds, we have $\pi_3(\beta) \neq 0$. Now since
$|\supp(\pi_3(\beta))|\leq |\supp(\beta)|\leq 3^{d/2}$ and non-zero
polynomials in $(\Pe^{m_v}_{2d})^\perp$ has support at least $3^d$, we
can further conclude that $\pi_3(\beta) \notin (\Pe^{m_v}_{2d})^\perp$
whenever \eqref{eq:smooth} holds.
\end{proof}


\begin{proof}[Proof of {\lref[Theorem]{thm:3u3c}}]
Given the completeness (\lref[Lemma]{lem:3c3u-completeness}) and
soundness (\lref[Lemma]{lem:3c3u-soundness}), we only need to fix parameters.
Let $n$ be the size of the $\TSAT$ instance and $N$ the size of the
hypergraph produced by the reduction.

Let $d = C_1\log\log (1/\delta'), \eta = (\delta')^5/C_2$ and $r =
C_3\log (1/\delta')$ for large enough constants $C_1, C_2, C_3$ and
parameter $\delta' \in (0,1)$ to be determined
shortly. By \lref[Lemma]{lem:3c3u-soundness},  if $H$ has an independent
set of size $\delta N$, then $\delta^5/2^9 \leq 3^d\cdot
2^{-\Omega(r)} + 3^d\cdot \eta + \exp(-3^{\Omega(d)}) < (\delta')^5/2^9$ for
large enough $C_1,C_2,C_3$. Hence, $H$ has no independent
sets of $\delta'N$. 

The hypergraph $H$  produced by the reduction is of size $N = \ell
n^{(1+1/\eta)\ell r}3^{((1+1/\eta)\ell r)^{O(d)}}.$
Setting $\ell  = C_4/(\delta')^2$ and $\log (1/\delta') = \Theta(\log\log
n/\log \log \log n)$, we get that $N = n^{2^{O(\log \log n / \log \log \log n)}}$.
Since $\log\log n = \Theta(\log \log N)$, we also get that $1/\delta' =
2^{\Theta(\log \log N / \log \log \log N)}$. This completes the proof
of \lref[Theorem]{thm:3u3c}.
\end{proof}


{\small
\bibliographystyle{plainurl}
\bibliography{coloring-bib}
}

\appendix

\section{Proof of {\lref[Claim]{conj:hyperplane}}}\label{app:hyperplane}

We need the following theorem due to Haramaty, Shpilka and Sudan~\cite{HaramatySS2013}.

\begin{theorem}[{\cite[Theorem~4.16, 1.7]{HaramatySS2013}}\label{thm:HSS}
  specialized to $\F_3$ and using absolute distances instead of
  fractional distances] There exists a constant $\lambda_3$ such that
  the following holds.
For  $\beta: \F_3^n \to \F_3$, let $A_1,\dots, A_K$ be hyperplanes
such that $\beta|_{A_i}$ is $\Delta_1$-close to some degree $r$ polynomial
on $A_i$. If $K > 3^{\lceil \frac{r+1}2\rceil+\lambda_3}$ and $\Delta_1 < 3^{n-r/2-2}/2$,
then $\Delta(\beta,\Pe^n_r) \leq 6\Delta_1 + 8\cdot 3^n/K$.
\end{theorem}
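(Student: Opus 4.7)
The plan is to prove Theorem~\ref{thm:HSS} by a local-to-global argument of the kind standard in low-degree testing: use the many hyperplanes on which $\beta$ is close to a degree-$r$ polynomial to assemble a single global degree-$r$ polynomial $P$, and then bound $\Delta(\beta, P)$. The first step is to fix, for each $A_i$, a witness polynomial $P_i$ of degree $\leq r$ on $A_i$ with $\Delta(\beta|_{A_i}, P_i) \leq \Delta_1$. The hypothesis $\Delta_1 < 3^{n-r/2-2}/2$ makes $P_i$ unique: by \lref[Lemma]{lem:SZ} the minimum distance of $\Pe^{n-1}_r$ on $A_i \cong \F_3^{n-1}$ is at least $3^{n-1-r/2}$, and $\Delta_1$ is strictly less than half this quantity.

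Next, I would assemble candidate global polynomials by Lagrange interpolation along lines. Over $\F_3$, any function on a line has individual degree at most $2$, so for any direction $\ell \in \Pe^n_1$ whose three parallel hyperplanes $\{\ell = 0\}, \{\ell = 1\}, \{\ell = 2\}$ all appear among the $A_i$, the three local polynomials $P_0, P_1, P_2$ interpolate to a unique global polynomial $P^{(\ell)}$ of total degree $\leq r$ (the individual degree introduced in the direction $\ell$ is at most $2$, not exceeding $r$). The assumption $K > 3^{\lceil (r+1)/2 \rceil + \lambda_3}$ is designed precisely to guarantee that many directions $\ell$ carry such parallel triples. To promote these candidates to a single $P$, I would show that any two of them coincide: a plausible route is to argue that $P^{(\ell')}$ is already globally close to $\beta$, so by uniqueness of local witnesses $P^{(\ell')}|_{A_i} = P_i$ for every good $A_i$, whence $P^{(\ell)} - P^{(\ell')}$ is a polynomial of degree $\leq r$ vanishing on more than $3^{n-r/2}$ points, which forces it to be zero by \lref[Lemma]{lem:SZ}.

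Finally, I would bound $\Delta(\beta, P)$ by a double-counting / averaging argument: a random point $x$ lies on a uniform fraction of the $A_i$'s, and on each good $A_i$ the relative disagreement of $\beta|_{A_i}$ and $P_i = P|_{A_i}$ is at most $\Delta_1/3^{n-1}$; averaging then yields
\[
\Delta(\beta, P) \;\leq\; 6\Delta_1 + \frac{8 \cdot 3^n}{K},
\]
with the constant $6$ absorbing a union bound over the three parallel hyperplanes and the axis directions used in the interpolation, and the additive $8 \cdot 3^n/K$ handling points that lie on too few good hyperplanes to apply the local-to-global transfer. The main obstacle --- and the true technical heart of the theorem as carried out in~\cite{HaramatySS2013} --- is the middle consistency step: establishing that interpolated candidates from different axis directions all agree typically requires a delicate induction on $n$ (or on $r$) together with pigeonhole reasoning that depends on the explicit constant $\lambda_3$ to guarantee enough intersection between the triples of good hyperplanes.
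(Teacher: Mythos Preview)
The paper does not prove this theorem at all: it is quoted verbatim from Haramaty, Shpilka and Sudan~\cite{HaramatySS2013} (specialized to $\F_3$) and used as a black box in \lref[Appendix]{app:hyperplane} to derive \lref[Claim]{conj:hyperplane}. So there is no ``paper's own proof'' to compare your proposal against; what you have written is an attempted reconstruction of the argument in~\cite{HaramatySS2013}, not of anything in the present paper.

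As a sketch of the HSS argument, your outline has the right overall shape (local witnesses, global assembly, averaging), but there is a genuine gap in your interpolation step. You claim that if the three parallel hyperplanes $\{\ell=c\}$ carry degree-$r$ polynomials $P_0,P_1,P_2$, then Lagrange interpolation along $\ell$ produces a global polynomial of total degree $\leq r$ because ``the individual degree introduced in the direction $\ell$ is at most $2$.'' That is false: the interpolant $\sum_{c\in\F_3} P_c(x)\,L_c(\ell)$ has total degree up to $r+2$, since the Lagrange basis polynomials $L_c$ have degree $2$ and these add to the degree of $P_c$. Controlling the degree of the global polynomial is exactly the delicate part of~\cite{HaramatySS2013}; it requires an inductive argument (their ``hyperplane testing'' machinery, which is where $\lambda_3$ enters) rather than a one-shot interpolation along a single direction. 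Your final paragraph gestures at this difficulty, but the specific interpolation claim you make in the second paragraph would need to be abandoned, not just refined.
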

Setting the degree $r = 2n-2d-1$ in the above theorem implies that if
there are $K > 3^{n-d+\lambda_3}$ hyperplanes $A_1, \dots, A_K$ such
that $\beta|_{A_i}$ is $\Delta_1$-close to a degree $(2n-2d-1)$
polynmial on $A_i$, then $\Delta(\beta, \Pe^n_{2n-2d-1}) \leq
6\Delta_1 + 8\cdot 3^n/K$.

Suppose \lref[Claim]{conj:hyperplane} were false.
Then, for every nonzero $l \in \Pe^n_1$, at least one of
$\beta|_{\ell=0}$ or $\beta|_{\ell=1}$ or $\beta|_{\ell=2}$ is
$\Delta/27$-close to a degree $(2n-2d-1)$ polynomial. We thus, get
$K=(3^n-1)/2$ hyperplanes such that the restriction of $\beta$ to these
hyperplanes is $\Delta/27$-close to a degree $(2n-2d-1)$
polynomial. Observe that $K \geq 3^{n-d+\lambda_3}$ if $d \geq d_0
\geq \lambda_3+2$ and $\Delta/27 < 3^{n-(2n-2d-1)/2-2}/2 =
3^{d-1.5}/2$ if $\Delta < 3^d$. Hence, by \lref[Theorem]{thm:HSS} we
have  $\Delta(\beta, \Pe^n_{2n-2d-1}) \leq 6
\Delta/27 + 2\cdot 8\cdot 3^n/(3^n-1) < 6\Delta/27 + 32 <
\Delta$ (since $\Delta \geq 3^4$). This contradicts the hypothesis
that $\beta$ is $\Delta$-far from $\Pe^{n}_{2n-2d-1}$.
\qed

\end{document}